\newtheorem{theorem}{Theorem}[section]
\newtheorem{lemma}[theorem]{Lemma}
\newtheorem{corollary}[theorem]{Corollary}
\newtheorem{proposition}[theorem]{Proposition}
\theoremstyle{definition}
\newtheorem{definition}[theorem]{Definition}
\newtheorem{example}[theorem]{Example}
\DeclareBoldMathCommand\vo{0}
\DeclareBoldMathCommand\va{a}
\DeclareBoldMathCommand\vb{b}
\DeclareBoldMathCommand\vc{c}
\DeclareBoldMathCommand\vd{d}
\DeclareBoldMathCommand\ve{e}
\DeclareBoldMathCommand\vn{n} 
\DeclareBoldMathCommand\vp{p}   
\DeclareBoldMathCommand\vq{q}   
\DeclareBoldMathCommand\vs{s}   
\DeclareBoldMathCommand\vt{t}
\DeclareBoldMathCommand\vu{u}
\DeclareBoldMathCommand\vx{x}
\DeclareBoldMathCommand\vy{y}
\DeclareBoldMathCommand\vz{z}
\DeclareBoldMathCommand{\vU}{U}
\DeclareBoldMathCommand{\vV}{V}
\DeclareBoldMathCommand{\vW}{W}
\DeclareBoldMathCommand{\vX}{X}
\DeclareBoldMathCommand{\vY}{Y}
\DeclareBoldMathCommand{\vZ}{Z}
\DeclareBoldMathCommand{\vk}{\kappa}
\DeclareBoldMathCommand{\vl}{\lambda}
\DeclareBoldMathCommand{\vm}{\mu}
\newcommand{\BR}{\mathrm{br}}
\newcommand{\e}{\varepsilon}
\newcommand{\g}{\gamma}
\newcommand{\kh}{\chi}
\newcommand{\oo}{\omega}
\newcommand{\B}{\mathcal{B}}
\newcommand{\A}{\mathcal{A}}
\newcommand{\F}{\mathcal{F}}
\newcommand{\LL}{\mathcal{L}}
\newcommand{\mA}{{\mathfrak A}}
\newcommand{\mM}{{\mathfrak M}}
\newcommand{\mN}{{\mathfrak N}}
\newcommand{\NN}{\mathbb{N}}
\newcommand{\ZZ}{\mathbb{Z}}
\newcommand{\RR}{\mathbb{R}}
\newcommand{\PP}{\mathbb{P}}
\DeclareMathOperator{\kC}{{\sf C}}
\DeclareMathOperator{\kI}{{\sf I}}
\DeclareMathOperator{\kQ}{{\sf Q}}
\DeclareMathOperator{\kD}{{\sf D}}
\DeclareMathOperator{\Maj}{{\sf Maj}}
\DeclareMathOperator{\kB}{{\sf B}}
\newcommand{\Q}{\mathcal{Q}}
\DeclareMathOperator{\FO}{FO}
\DeclareMathOperator{\MSO}{MSO}
\DeclareMathOperator{\FOC}{FOC}
\DeclareMathOperator{\RI}{\mathcal{R}-int}
\DeclareMathOperator{\RC}{\mathcal{R}-cl}
\newcommand{\reg}{\mathrm{reg}}
\DeclareMathOperator{\Str}{Str}
\DeclareMathOperator{\Dom}{Dom}
\DeclareMathOperator{\dom}{dom}
\newcommand{\TC}{\mathrm{TC^0}}
\newcommand{\AC}{\mathrm{AC^0}}
\newcommand{\LFP}{\mathrm{LFP}}
\newcommand{\NP}{\mathrm{NP}}
\newcommand{\PTIME}{\mathrm{PTIME}}
\newcommand{\DLOGTIME}{\mathrm{DLOGTIME}}
\newcommand{\PSPACE}{\mathrm{PSPACE}}
\newcommand{\LOGSPACE}{\mathrm{LOGSPACE}}
\newcommand{\NLOGSPACE}{\mathrm{NLOGSPACE}}
\newcommand{\osaj}{\subseteq}
\newcommand{\ekv}{\leftrightarrow}
\DeclareMathOperator{\arity}{ar}
\DeclareMathOperator{\card}{card}
\DeclareMathOperator{\rg}{rg}
\DeclareMathOperator{\lb}{lb}
\newcommand{\Sq}{\mathrm{Sq}}
\newcommand{\bplus}{\text{\ding{58}}}
\newcommand{\btimes}{\text{\ding{54}}}
\newcommand{\bit}{\mathrm{BIT}}
\newcommand{\katto}[1]{\left\lceil #1\right\rceil}
\newcommand{\lattia}[1]{\left\lfloor #1\right\rfloor}
\newcommand{\idiv}[2]{\lattia{\frac{#1}{#2}}}
\newcommand{\joukko}[1]{\left\{ #1\right\}}
\newcommand{\Pmod}[1]{{\!\!\pmod{#1}}}
\newcommand{\sij}{\mathrel{\mathop:}=}
\newcommand{\raj}{\restriction}
\newcommand{\av}[1]{{]}#1{[}}
\newcommand{\sav}[1]{{[}#1{[}}
\newcommand{\viiva}{\mathop{\Big/}}
\newcommand{\pri}[2]{\viiva\limits_{\!\!\!\!{#1}}^{\>\,{#2}}}
\newcommand{\dt}{\;\textrm{d}t}
\newcommand{\pad}{{\le_{\mathrm P}}}
\title{Regular Representations of Uniform $\TC$
   \thanks{The second author was supported by grants 127661, 264917, and 345634  
           of the Academy of  Finland. 
           The authors are grateful for the generous support 
           of the Mittag-Leffler Institute, 
           where part of the work reported here was carried out.}}
\author{Lauri Hella 
   \and Juha Kontinen 
      \and Kerkko Luosto
}
\date{}
\begin{document}

\maketitle

\begin{abstract}

The circuit complexity class $\DLOGTIME$-uniform $\AC$ is known to be
a modest subclass of $\DLOGTIME$-uniform $\TC$. The weakness of $\AC$
is caused by the fact that $\AC$ is not closed under restricting
$\AC$-computable queries into simple subsequences of the
input. Analogously, in descriptive complexity, the logics
corresponding to $\DLOGTIME$-uniform $\AC$ do not have the
relativization property and hence they are not regular. This weakness
of $\DLOGTIME$-uniform $\AC$ has been elaborated in the line of
research on the Crane Beach Conjecture. The conjecture (which was
refuted by Barrington, Immerman, Lautemann, Schweikardt 
and Th{\'e}rien in
\cite{BILST}) was that if a language $L$ has a neutral
letter, then $L$ can be defined in $\FO_\A$, first-order logic with
the collection of all numerical built-in relations~$\A$, if and only if 
$L$ can be already defined in $\FO_{\le}$.

In the first part of this article we consider logics in the range of
$\AC$ and~$\TC$. First we formulate a combinatorial criterion for a
cardinality quantifier~$\kC_S$ implying that all languages in
$\DLOGTIME$-uniform $\TC$ can be defined in $\FO_{\le}(\kC_S)$.  For
instance, this criterion is satisfied by $\kC_S$ if $S$ is the range
of some polynomial with positive integer coefficients of degree at
least two.  In the second part of the paper we first adapt the key
properties of abstract logics to accommodate built-in relations. Then
we define the regular interior $\RI(\LL)$ and regular closure
$\RC(\LL)$, of a logic~$\LL$, and show that the Crane Beach Conjecture
can be interpreted as a statement concerning $\RI(\FO_\B)$. By
extending the results of \cite{BILST}, we show that if $\B=\{ +\}$, or
$\B$ contains only unary relations besides $\le$, then
$\RI(\FO_\B)\equiv\FO_{\le}$. In contrast, our results imply that if
$\B$ contains $\le$ and the range of a polynomial of degree at least
two, then $\RC(\FO_\B)$ includes all languages in 
$\DLOGTIME$-uniform~$\TC$.

\end{abstract}

\section{Introduction}


Many circuit complexity classes have been logically characterized by
extensions of first-order logic in terms of varying sets of built-in
relations and generalized quantifiers. The seminal paper in this area
is \cite{BIS} in which the connection between $\DLOGTIME$-uniformity
and $\FO_{\{+,\times\}}$-definability was established, where
$\FO_{\{+,\times\}}$ denotes first-order logic with ternary built-in
relations $+$ and $\times$. It was also shown in \cite{BIS} that the
languages in $\DLOGTIME$-uniform $\AC$ are exactly those definable in
$\FO_{\{+,\times\}}$. It is known that the predicates $+$ and $\times$
can be defined in terms of the $\bit$ relation. In fact it was shown
in \cite{DDLW} that $\bit$ alone can define the corresponding
canonical ordering, hence $\FO_{\{+,\times\}}\equiv \FO_{\bit}$ (in
the case of a single built-in relation, we drop the set parenthesis in
the subscript).  A provably larger class of languages, $\TC$, is
acquired by allowing also majority gates in the circuits. On the
logical side, $\DLOGTIME$-uniform $\TC$ corresponds to the extension
$\FO_{\{+,\times\}}(\Maj)$ of first-order logic by the unary majority
quantifier $\Maj$ with built-in $+$ and $\times$. We refer later to
$\DLOGTIME$-uniform $\AC$ and DLOGTIME-uniform $\TC$ simply as $\AC$
and $\TC$. It was shown in \cite{BIS} that the "majority of pairs"
$\Maj^2$ can be expressed in terms of $\Maj$ with the help of the
numerical relations. On the other hand, with $\Maj^2$ and order, the
numerical relations become definable. In \cite{BIS} it was also asked
whether already $\Maj$ is enough to define the numerical
relations. This has been shown not to hold in \cite{Li}, \cite{R} and
\cite{LMcKSV}. On the other hand, in \cite{Lu1}, it was observed that
the extension of $\FO_{\le}$ by the general divisibility
quantifier~$\kD$ is enough to capture $\TC$. Note that, unlike
$\Maj^2$, $\kD$~is a unary quantifier. In \cite{HAB}, $\TC$ was shown
to include the problems of division and iterated multiplication of
binary numbers.  These results make essential use of the logical
characterization of $\TC$ discussed above.

It is interesting to note that, while even non-uniform $\AC$ fails to
define \emph{Parity} by the famous Theorem of Ajtai \cite{A} and
Furst, Saxe and Sipser \cite{FSS}, it is not known if $\NP$ strictly
includes $\TC$. The weakness of $\AC$ is due to the fact that $\AC$ is
not closed under restricting $\AC$-computable queries into  simple 
 subsequences of the input, e.g., the query $Q$ expressing that the length of a binary word $v$
is odd is $\AC$-computable, but, with input $v$, no $\AC$-query can simulate $Q$ over the  subsequence $v'$ of $v$
arising by deleting all $0$'s in $v$. On the logical side, the
logics corresponding to $\AC$ do not have the so-called
\emph{relativization property} and hence they are not \emph{regular}
(for the definition of these concepts, see \cite{E} and Section
\ref{Builtin}).  In fact, we will show that the only thing that $\AC$
lacks, compared to $\TC$, is the ability to relativize. The weakness
of $\AC$, and logics $\FO_\B$, where $\B$ is a collection of built-in
relations, is also reflected in the fact that $\FO_\B$ cannot count
cardinalities of sets.  This defect can be fixed by extending $\FO_\B$ (${\le} \in \B$ in the following discussion)
in terms of counting quantifiers $\exists ^{=y} x$, resulting with the
logic $\FOC_\B$ (see, e.g., \cite{S2}). By the assumption ${\le} \in \B$, we can
equivalently extend $\FO_\B$ by $\Maj$ or the H\"artig quantifier
$\kI$ expressing equicardinality \cite{Lu1}.  In the presence of $\kI$
(respectively, $\Maj$ or $\exists ^{=y} x$) built-in relations can be
replaced by certain (unary) generalized quantifiers an vice
versa. Moreover, these quantifiers are universe independent and hence
the logic $\FO_\B(\kI)$ is always regular (see Example \ref{builder}). On
the other hand, without the quantifier $\kI$, these quantifiers can be
more expressive than the corresponding built-in relations \cite{Lu1}.
It is worth noting that the counting extension $\FOC_\B\equiv
\FO_\B(\kI)$ of $\FO_\B$ is not in general the least regular logic
including $\FO_\B$, since, e.g., for $S=\{nk \mid k\in\NN\}$ we have
that
   \[
\FO_{\{\le, S\}} < \FO_{\le}(\kD_n)
< \FO_{\{ \le,S\} }(\kI)\equiv \FOC_{\{\le, S\} },
\]
%
%
where $\kD_n$ is the divisibility quantifier corresponding to the set
$S$. Here the logic $\FO_{\le}(\kD_n)$ is the least regular logic
including $\FO_{\{\le, S\}}$ (see Example \ref{reggap}), and
$\FO_{\le}(\kD_n)< \FO_{\{ \le,S\} }(\kI)$ follows by Theorem~\ref{Idef}.

The definability theory of generalized quantifiers has not yet been
thoroughly developed on ordered structures. The research has concentrated on unary quantifiers: In~\cite{N}, definability of divisibility quantifiers over
ordered structures has been studied. In \cite{Lu1}, a
systematic study of \emph{cardinality quantifiers} $\kC_S$ has been
conducted with an eye on the possibility to define the quantifier
$\kI$ on ordered structures. Cardinality quantifiers are the simplest
kind of unary quantifiers and their definability theory is well
understood over unordered structures. In fact, it is known that on
unordered structures the quantifier $\kI$ cannot be defined in terms
of any cardinality quantifiers~\cite{KV}. On ordered structures, the
situation is very much different. Cardinality quantifiers can be
classified into two cases: the quantifier $\kI$ can be defined in
$\FO_{\le}(\kC_S)$ if and only if $S$ is sufficiently non-periodic
\cite{Lu1} (see Theorem \ref{Idef}). For example, if $S=\{2^n \mid n\in
\NN \}$ or $S=\rg(P)$, where $P$ is a polynomial with nonnegative
integer coefficients of degree at least two, then $\kI$ can be
expressed in $\FO_{\le}(\kC_S)$. In the first part of this paper we
build on this classification of cardinality quantifiers.
In the main result of Section \ref{card}, we formulate a further
combinatorial criterion called \emph{pseudolooseness} for $S\subseteq
\NN$ (see Definition \ref{loose}) which implies that $S$ is
non-periodic in the sense of Theorem \ref{Idef}, and furthermore implies
that $ \FO_{\{+,\times\}}(\Maj) \le \FO_{\le}(\kC_S)$. For instance,
if $S=\rg(P)$ is the range of some polynomial $P$ with nonnegative
integer coefficients of degree at least two, then this criterion is
satisfied, hence
\begin{equation}
\FO_{\le}(\kC_{S}) \equiv \FO_{\{ +,\times\}}(\kI)
\equiv \FO_{\{+,\times\}}(\Maj),  
\end{equation}  
implying that $\FO_{\le}(\kC_S)$ captures  $\TC$. It is worth noting that 
if $S=\rg(P)$ and $P$ is of degree one, 
then all the languages definable in  $\FO_{\le}(\kC_S)$ are regular. 
Also, for  any real number  $r>1$ 
the set 
$S_r=\joukko{\lattia{x^r}\mid x\in \NN  }$ is pseudoloose, and therefore 
   \[
\FO_{\le}(\kC_{S_r}) \ge \FO_{\{+,\times\}}(\Maj).
   \] 
Interestingly, for $E=\{2^n \mid n\in \NN \}$ 
the non-periodicity of $E$ implies that
\[\FO_{\le}(\kC_{E})\equiv \FO_+(\kI,\kC_{E})\]
but since $E$ is not pseudoloose, it remains open whether
$\FO_{\le}(\kC_{E})\equiv \FO_{\{+,\times\}}(\Maj).$

In Section \ref{Builtin} we adapt the familiar properties
of \emph{abstract logics} to accommodate also built-in relations.  
We denote by $\LL_\B$ the logic $\LL$ with built-in relations $\B$.  
We will usually assume that either  $\le$ is definable in $\LL_\B$ 
or $\B=\emptyset$.
We say that $\LL_\B$ is {\em
  semiregular}, if it is closed under $\FO$-operations and has the
\emph{substitution property}.  Also, $\LL_\B$ is {\em regular} if it is
semiregular and closed under relativization (cf.~\cite{E}). 

Without built-in relations, semiregularity
of a logic can be characterized in terms of generalized quantifiers:
$\LL$ is semiregular if and only if there is a class $\Q$ of
quantifiers such that $\LL\equiv\FO(\Q)$.  This characterization
remains valid also for logics with built-in relations once the
 notion of  generalized quantifier is  adapted to the
framework of built-in relations ($\BR$-{\em  quantifiers}):  any 
semiregular logic  $\LL_\B$ with built-in relations $\B$ is  equivalent 
to a logic $\FO_{\B}(\Q)$, where $\Q$ is a class of  $\BR$-quantifiers.
As for regularity, $\FO_\le$ is a regular logic since the restriction
of a linear order to a subset of a model is again a linear order. Note
that, while $\FO_\B$ is usually not regular, $\FO_\le(\Q)$ is regular
for any class $\Q$ of universe independent $\BR$-quanti\-fiers (see
Proposition \ref{foq-reg}). For our purposes, an important consequence
of this observation is that $\FO_{\le}(\kC_S)$ is regular for any
cardinality quantifier $\kC_S$. Inspired by these observations, we
define the notions (adapted from \cite{Lu2}) of \emph{regular interior} 
$\RI(\LL_\B)$ and \emph{regular closure} $\RC(\LL_\B)$ of a
logic $\LL_\B$ with built-in relations. The regular interior of
$\LL_\B$ is the largest regular logic inside $\LL_\B$ and the regular
closure of $\LL_\B$ is the least regular logic including $\LL_\B$.

The Crane Beach Conjecture was a conjecture that arbitrary built-in
relations (besides the order) are of no help in defining languages
with a neutral letter in first-order logic. A symbol $e\in\Sigma$ is a
{\em neutral letter} for a language $L\subseteq\Sigma^*$ if for all
$u,v\in\Sigma^*$ it holds that $uv\in L\iff uev\in L$. In other words,
$e$ is a neutral letter for $L$ if inserting or deleting any number of
$e$'s in a word does not affect its membership in $L$. Since the
property of having a neutral letter is a language theoretic analogue
for the property of being universe independent, it is straightforward
to reformulate the Crane Beach Conjecture as a statement concerning
the regular interior $\RI(\FO_\B)$ of $\FO_\B$. We say that a set $\B$
of built-in relations has the Neutral Letter Collapse Property (NLCP) with
respect to a class $\cal C$ of languages if and only if the
implication
   \begin{equation*}\label{NL}  
L \hbox{ is definable in }\FO_\B\;
\Longrightarrow\; L \hbox{ is definable in }\FO_{\le}
   \end{equation*}
holds for every language $L\in{\cal C}$ with a neutral letter.
Loosely speaking we will then show that $\FO_\B$ has NLCP with respect
to a class $\cal C$ of languages if and only if $\RI(\FO_\B)$
collapses to $\FO_{\le}$ with respect to definability of languages
modulo $\cal C$. For instance, it was shown in \cite{BILST} that $\cal
U$ and $\{+\}$ have NLCP with respect to the class of all languages,
where $\cal U$ contains all unary numerical relations together with
the order $\le$. Hence, it directly follows that over word structures
the logics $\RI(\FO_\B)$ and $\FO_{\le}$ are equivalent, for $\B\in
\{\cal U,\{+\} \}$. We show that this equivalence actually extends to
all vocabularies, that is,
\begin{equation}\label{Rint}
\RI(\FO_\B)\equiv\FO_{\le},
\end{equation}
which implies (and explains) the observation that $\B$ has NLCP with
respect to the class of all languages.

From the computational perspective, the regular closure $\RC(\LL)$ of
a logic $\LL$ is very interesting since most complexity classes are
closed under relativization. Our results imply that if $\B$ contains
the range of a polynomial of degree at least two, 
then $\RC(\FO_\B)\ge \FO_{\{+,\times\}}(\Maj)$ in contrast to \eqref{Rint}. 
In particular, the regular
closure of $\AC$ is $\TC$.

\paragraph{This article is organized as follows:} In Section 2 we
review concepts and previous work on generalized quantifiers relevant
for the results of Section 3. In Section 3 we introduce and analyze
the notion of \emph{(pseudo)looseness} for subsets of natural numbers
and show how to define multiplication in terms of a pseudoloose
cardinality quantifier. In Section 4 we turn to built-in relations and
adapt the notions of generalized quantifiers and regularity of logics
from Abstract Model Theory to the setting with built-in relations. In
Section 5 we define the notions of regular interior and closure for a
logic with built-in relations and relate the former to the Crane Beach
Conjecture.  Section~6 also contains results demonstrating the gap
between the regular interior and closure of first-order logic with
built-in relations.

\section{Background and preliminaries}\label{Background}

In this section we recall concepts and previous results which are
addressed in Section~\ref{card}. In particular, a more detailed
exposition of generalized quantifiers (with built-in relations) is
provided in Section~\ref{Builtin}.

\subsection*{Some notation}

The set of natural numbers is denoted by $\NN=\{0,1,2,\ldots\}$. The
set of integers is denoted by $\ZZ$, and $\ZZ_+=\{1,2,3,\ldots\}$
denotes the set of positive integers. For a set $X$, the power set of
$X$ is denoted by $ \mathcal{P}(X)=\{ Y\ |\ Y\subseteq X \}$, and
for $E\subseteq X^2$, the domain $\dom(E)$ of the relation $E$ is the set
\[\dom(E)=\{a\in X\ |\ (a,b)\in E,\textrm{ for some $b\in X$} \},\]
 and the range $\rg(E)$ of  $E$ is  
\[\rg(E)=\{b\in X\ |\ (a,b)\in E,\textrm{ for some $a\in X$}\}.\]

 For logics $\LL$ and $\LL^*$, the logic  $\LL$ is  at most as strong
as $\LL^*$, in symbols
$\LL\le\LL^*$, if every class $K\subseteq\Str(\tau)$ which
is definable in $\LL$ is also definable in
$\LL^*$.  If   $\LL\le\LL^*$ and $\LL^*\le\LL$, we write $\LL\equiv\LL^*$,
and say that $\LL$ and $\LL^*$ are equivalent.
For logics on ordered structures, these notions are defined
analogously (see Section \ref{Builtin} for more details). Finally, for a logic $\mathcal{L}$ and a complexity class
$C$, we write $\LL\equiv C$ if for all $\Sigma$, and
$L\subseteq \Sigma^+$: $L\in C$ if and only if the class of word
models (see Example \ref{bqex}) corresponding to $L$ is definable in
$\LL$.
   
\subsection*{Generalized quantifiers}
In this subsection we review 
the generalized quantifiers discussed in
the introduction. However, we postpone
the formal definition of generalized quantifiers to
Section~\ref{Builtin}. The notion of a generalized quantifier goes
back to \cite{M} and \cite{Lin}. For a more complete account on
quantifiers, see \cite{KV}.
 
Examples of unary quantifiers of vocabulary $\{U\}$ are the 
{\it divisibility quantifier} $\kD_n$ expressing that the size of a unary
relation is divisible by~$n$ ($n\in\NN$) and, more generally, for a
fixed $S\subseteq\NN$ the {\it cardinality quantifier $\kC_S$}.  For
every $\mM\in\Str(\{U\})$,
   \[
\mM\models \kC_Sx\,(U(x)) \quad\text{if and only if}\quad |U^{\mM}|\in S.
   \]
Note that $\kD_n=\kC_S$ with $S=n\NN=\{nk\ |\ k\in\NN\}$.
In a quantifier logic, such as  $\FO(\kC_S)$, the quantifiers can be nested in the natural way, so that in the rule above, $U$ can be
replaced by a formula~$\psi$.
For example, if $\mA$ is any structure and $\psi(x)$ is a formula
in the appropriate language, then 
   \[
\mA\models \kD_2x\,(\psi(x)) \quad\text{if and only if}
\quad |\psi^{\mA}|\text{ is finite and of even size},
   \]
where $\psi^{\mA}=\joukko{a\in\Dom(\mA) \mid \mA\models\psi[a/x]}$,
of course.

The quantifier $\Maj$ also has  vocabulary~$\{U\}$: 
   \[
\mM\models \Maj\, x\,(U(x)) \quad\text{if and only if}\quad 
|U^{\mM}|>\card(\mM)/2.
   \]
For the purposes of this paper, \emph{universe independence} (see
Section \ref{Builtin}) is an important property of the quantifiers~$\kC_S$.
Universe independence, in the case of the quantifiers~$\kC_S$,
means simply that the truth of the formula $\kC_Sx\,(U(x))$
depends only on $|U^{\mM}|$, disregarding $|\Dom( \mM )\setminus
U^{\mM} |$.  Note that the quantifier $\Maj$ is an example of a
quantifier which is not universe independent.

The {\em equicardinality} or {\em the H\"artig quantifier}~$\kI$,  
and the {\it general divisibility quantifier}~$\kD$ are quantifiers of
vocabulary $\{U,V\}$.  For $\mM\in\Str(\{U,V\})$, we have
   \[ 
\mM\models \kI\, x,y\,(U(x),V(y)) 
\quad\text{if and only if}\quad 
|U^{\mM}|=|V^{\mM}|,
   \]
and
   \[
\mM\models \kD\, x,y\,(U(x),V(y))
\quad\text{if and only if}\quad 
|U^{\mM}|\,\Big\vert\,|V^{\mM}|.
   \]

H\"artig showed in \cite{Ha} that addition, as a ternary predicate
$+$, can be defined in terms of the quantifier $\kI$ and first-order
logic on ordered structures. Indeed, it is easy to verify that $+$ is
defined by the formula $\phi(x,y,z)$, where
  \begin{equation}\label{Hartig's trick}
\phi(x,y,z)\sij\kI u,v(u<x,y<v\le z).
   \end{equation}
It was observed in \cite{Lu1} that
\begin{equation}\label{MajI} 
\FO_{\le}(\Maj)\equiv\FO_{\le}(\kI)\equiv \FOC_{\le},  
\end{equation}
where $\FOC$ is the extension of $\FO$ in terms of counting
quantifiers $\exists ^{=y} x$ \cite{IL}.  The counting quantifiers  work syntactically so that, in the formula $\exists ^{=y} x\,\psi(x,\vz) $,  variable $x$ is bound and $y$ is  a new free
variable. Let  $\mM$ be an ordered structure with  
$\Dom(\mM)=\{0,\ldots,n-1\}$ and
$\le^\mM$ the natural order. Then the semantics of $\exists ^{=y} x$
is given by
   \[
\mM\models\exists ^{=y} x\,\psi(x,\vz)[c/y,\vb/\vz ]
\quad\text{if and only if}\quad
|\{ a\in \Dom(\mM)\mid  \mM\models \psi[a,\vb/\vz]\}|= c.
\]

It is interesting to note that, by \eqref{MajI}, $\FOC_{\le}$ can be
replaced by either of the quantifier logics $\FO_{\le}(\Maj)$ or
$\FO_{\le}(\kI)$, which have well-defined counterparts, $\FO(\Maj)$
and $\FO(\kI)$, on unordered structures. Note that $\FOC$ does not
make sense without order.  Further note that, of the quantifiers
$\Maj$ and $\kI$, the latter is also universe independent.

 It was shown in \cite{Li}, \cite{R} and \cite{LMcKSV}, that
multiplication cannot be defined in the logic $\FO_{\le}(\Maj)$. The
following logics are therefore strictly more expressive than the
logics in \eqref{MajI}:
\begin{equation}\label{logicsforTC}
 \FO_{\{\le,\times\}}(\kI)\equiv \FO_{\{+,\times\}}(\Maj)\equiv  \FO_\bit(\Maj)
\equiv \FO_{\le}(\Maj^2)\equiv  \FO_{\le}(\kD).
\end{equation}
The first equivalence follows by the definability of addition in $\FO_{\le}(\kI)$ (see \eqref{Hartig's trick}), and~\eqref{MajI}. 
The second equivalence follows from the fact that $\FO_{\{+,\times\}}\equiv \FO_\bit$ \cite{I}. The third equivalence, where $\Maj^2$ denotes 
the second vectorization of $\Maj$  (i.e., majority of pairs),  
was shown in \cite{BIS}, and the last equivalence is due to \cite{Lu1}. 
It is worth noting that, unlike $\Maj^2$, $\kD$~is a unary quantifier. 
Finally, we note that the logics in~\eqref{logicsforTC} are regular logics. 
This follows by Proposition \ref{foq-reg} applied to $\FO_{\le}(\kD)$.    

As a demonstration of the power of regularity, we show that the
undefinability of multiplication in any of the logics in \eqref{MajI}
is actually a direct corollary of an old result of Krynicki and
Lachlan.

\begin{theorem}[\cite{KL}]
The $\FO(\kI)$-theory of $\mN=\langle \NN,{+} \rangle$, i.e.,
the set of $\FO(\kI)$-sentences true in~$\mN$, is decidable.   
\end{theorem}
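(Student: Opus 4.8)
The plan is to reduce the $\FO(\kI)$-theory of $\mN=\langle\NN,+\rangle$ to the ordinary first-order theory of $\mN$, i.e.\ to Presburger arithmetic, which is decidable by the classical theorem of Presburger. Since $\FO(\kI)$ extends $\FO$ only by the equicardinality quantifier, it suffices to show that every occurrence of $\kI$ can be eliminated, uniformly in its parameters, in favour of an ordinary Presburger formula. I would proceed by induction on the nesting depth of $\kI$, always eliminating an innermost occurrence first, so that at each step the two subformulas to which $\kI$ is applied are plain Presburger formulas.

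The heart of the argument is a counting lemma. I would use the fact that Presburger arithmetic admits quantifier elimination in the language with $+$, $<$, the constants, and the family of congruence predicates $x\equiv r\pmod m$. Consequently, for a Presburger formula $\phi(x,\bar z)$ the parametrically defined set $A_{\bar a}=\{n\in\NN : \mN\models\phi(n,\bar a)\}$ is, for each choice of parameters $\bar a$, a finite union of arithmetic progressions (a semilinear set), and only finitely many moduli $m$ occur. The claim is that there are Presburger formulas $\Phi_\infty(\bar z)$ and $\Phi(c,\bar z)$ such that $\mN\models\Phi_\infty(\bar a)$ iff $A_{\bar a}$ is infinite, and $\mN\models\Phi(n,\bar a)$ iff $A_{\bar a}$ is finite of cardinality $n$. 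To prove this I would put $\phi$ into a quantifier-free normal form, split (via finitely many Presburger-definable cases on $\bar z$) the domain into finitely many intervals determined by the linear terms appearing in $\phi$, and on each interval intersect with a fixed residue class. The set $A_{\bar a}$ is infinite precisely when some unbounded interval survives, a Presburger-definable condition; otherwise the number of elements in a bounded interval $[\ell,u]$ lying in a residue class modulo $m$ is $\lfloor(u-\ell)/m\rfloor$ up to an additive correction, and since floor-division by a fixed constant is Presburger-definable, the total cardinality is a sum of finitely many Presburger-definable terms, hence itself Presburger-definable.

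Granting the lemma, an innermost subformula $\kI\,x,y\,(\phi(x,\bar z),\psi(y,\bar z))$ is equivalent over $\mN$ to
\[
\bigl(\Phi_\infty^\phi(\bar z)\wedge\Phi_\infty^\psi(\bar z)\bigr)\;\vee\;\exists c\,\bigl(\Phi^\phi(c,\bar z)\wedge\Phi^\psi(c,\bar z)\bigr),
\]
which is a Presburger formula with the same free variables $\bar z$. Replacing each innermost $\kI$ in this way and iterating removes all equicardinality quantifiers, producing for any $\FO(\kI)$-sentence an equivalent Presburger sentence; its truth in $\mN$ is then decidable, and a decision procedure for the whole $\FO(\kI)$-theory follows.

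The main obstacle is the counting lemma, and specifically the demand that cardinality be \emph{uniformly} Presburger-definable as the parameters vary. The difficulty is that the combinatorial shape of the semilinear decomposition of $A_{\bar a}$ — which linear terms give the smaller or larger endpoints, which intervals are empty, how residue classes interact with the endpoints — can change with $\bar a$. One must therefore organise the whole computation as a finite, Presburger-definable case distinction on $\bar z$ and verify that on each case the cardinality is given by a single Presburger term built from floor-divisions by the (finitely many, formula-bounded) moduli. Once this bookkeeping is set up correctly, the induction and the appeal to the decidability of Presburger arithmetic are routine.
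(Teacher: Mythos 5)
The paper does not prove this theorem at all: it is imported verbatim from Krynicki and Lachlan \cite{KL} and used as a black box in the subsequent corollary that multiplication is not $\FO_+(\kI)$-definable. So there is no proof in the paper to compare yours against; judged on its own, your argument is sound and is essentially the standard route to this result. The reduction of the $\FO(\kI)$-theory of $\mN=\langle\NN,+\rangle$ to Presburger arithmetic by innermost-first elimination of $\kI$ is correct, and the whole weight rests, as you say, on the counting lemma: for a Presburger formula $\phi(x,\bar z)$, both ``$A_{\bar a}$ is infinite'' and ``$|A_{\bar a}|=c$'' must be uniformly Presburger-definable in $(\bar a,c)$. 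This lemma is true and your sketch contains the right ingredients: quantifier elimination reduces $\phi$ to a Boolean combination of inequalities $ax\le t(\bar z)$ and congruences, which for each $\bar a$ carves out a finite union of intervals intersected with residue classes modulo a single modulus $M$ fixed by the formula; counting such a set amounts to floor-division by $M$, which is Presburger-definable; and the combinatorial variation of the decomposition with $\bar a$ is absorbed into a finite Presburger-definable case split (it is cleaner to refine to a disjoint union over residues mod $M$ and over consecutive interval endpoints than to use inclusion--exclusion, which would force subtraction over $\NN$, but both can be made to work). The infinite case is unproblematic since any two infinite subsets of $\NN$ are equinumerous, so $\kI$ holds there automatically. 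You correctly flag the uniformity of the cardinality formula as the one delicate point; with that bookkeeping done, the elimination and the appeal to Presburger's theorem are indeed routine, and the resulting decision procedure is effective since quantifier elimination is.
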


\begin{corollary}
Multiplication is not $\FO_+(\kI)$-definable.  
\end{corollary}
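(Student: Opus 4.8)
The plan is to argue by contradiction, combining the Krynicki--Lachlan theorem with the classical undecidability of true arithmetic. Suppose multiplication \emph{were} $\FO_+(\kI)$-definable; that is, suppose there were an $\FO(\kI)$-formula $\mu(x,y,z)$ of vocabulary $\{+\}$ such that for all $a,b,c\in\NN$ we have $\mN\models\mu[a,b,c]$ if and only if $a\cdot b=c$, where $\mN=\langle\NN,{+}\rangle$.

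First I would use $\mu$ to set up an effective translation $\varphi\mapsto\varphi^*$ sending every first-order sentence $\varphi$ of vocabulary $\{+,\times\}$ to an $\FO(\kI)$-sentence $\varphi^*$ of vocabulary $\{+\}$, with the property that $\langle\NN,{+},{\times}\rangle\models\varphi$ if and only if $\mN\models\varphi^*$. The only subtlety is that $\times$ occurs as a function symbol inside terms, whereas $\mu$ is a ternary relation; so I would first put $\varphi$ into a relational normal form by repeatedly flattening each nested product $s\cdot t$ into a fresh existentially quantified variable $w$ guarded by an atom $w=s\cdot t$, and only then replace each such atom by $\mu(s,t,w)$. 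This rewriting is purely syntactic and manifestly computable, and a routine induction on the structure of $\varphi$ shows that truth in the standard model is preserved.

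Granting this, the map $\varphi\mapsto\varphi^*$ reduces membership in the first-order theory of $\langle\NN,{+},{\times}\rangle$ to membership in the $\FO(\kI)$-theory of $\mN$. By the theorem of Krynicki and Lachlan the latter theory is decidable, and since the translation is effective, the former theory would be decidable as well. This contradicts the undecidability of the first-order theory of $\langle\NN,{+},{\times}\rangle$ (G\"odel--Church--Turing), and the contradiction establishes the corollary.

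I expect the only real point requiring care to be the term-flattening step and the verification that $\varphi\mapsto\varphi^*$ preserves truth in the standard model; everything else is a transfer of (un)decidability. It is worth noting that, since order is $\FO_+$-definable (by $x\le y$ iff $\exists z\,(x+z=y)$) and, conversely, addition is $\FO_{\le}(\kI)$-definable by H\"artig's trick \eqref{Hartig's trick}, the corollary yields at once the undefinability of multiplication in each of the logics of \eqref{MajI}, as claimed.
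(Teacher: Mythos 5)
There is a genuine gap, and it sits exactly at the point the paper is trying to showcase. In this paper, ``multiplication is not $\FO_+(\kI)$-definable'' is a statement about \emph{finite} structures: it asserts that no $\FO(\kI)$-formula $\mu(x,y,z)$ over the built-in relation $+$ defines the (restricted) multiplication relation in every finite $\BR$-model with built-in addition. You instead assume at the outset that $\mu$ defines multiplication in the single infinite structure $\mN=\langle\NN,{+}\rangle$. From that point on your argument --- the term-flattening into a relational normal form, the effective reduction of the $\FO$-theory of $\langle\NN,{+},{\times}\rangle$ to the $\FO(\kI)$-theory of $\mN$, and the appeal to Krynicki--Lachlan together with the undecidability of true arithmetic --- is correct and coincides with the second half of the paper's proof (you are in fact more explicit than the paper about the flattening of nested products). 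But what you prove is a different statement, namely undefinability of $\times$ over the infinite structure $\mN$, and that does not by itself refute the finite-model hypothesis of the corollary.

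The missing step is the passage from a hypothetical finite-structure definition to a definition over $\mN$, and this is precisely where the regularity of $\FO(\kI)$ is used --- the corollary is introduced in the paper ``as a demonstration of the power of regularity.'' Concretely: given $\mu$ defining multiplication in every finite structure $\langle n,{+}\rangle$ with built-in addition, one interprets $\langle n,{+}\rangle$ inside $\mN$ with $n$ as a parameter and uses closure of $\FO(\kI)$ under relativization to produce a formula $\tilde\mu(x,y,z,p)$ of vocabulary $\joukko{+}$ such that $\mN\models\tilde\mu[a/x,b/y,c/z,n/p]$ holds iff $\langle n,{+}\rangle\models\mu[a/x,b/y,c/z]$ and $a,b,c<n$; then $\nu(x,y,z)\colon\exists p\,\tilde\mu(x,y,z,p)$ defines multiplication on all of $\mN$, and only at that point does your reduction apply. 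Without this bridge the contradiction with decidability never engages the actual hypothesis of the corollary.
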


\begin{proof}
Suppose towards contradiction that there were a formula~$\mu(x,y,z)$
of $\FO(\kI)$ defining multiplication in finite structures with
built-in addition. Working now in 
the structure~$\mN=\langle \NN,{+}\rangle$, 
we observe that we can interpret the structure~$\langle
n,{+}\rangle$ given one parameter~$n\in\ZZ_+$.  By regularity of
$\FO(\kI)$, there is an $\FO(\kI)$-formula~${\tilde\mu}(x,y,z,p)$ of the
vocabulary~$\joukko{+}$ such that
   \[
\mN\models{\tilde\mu}[a/x,b/y,c/z,n/p]
\text{ if and only if }
\langle n,{+}\rangle\models\mu[a/x,b/y,c/z]
\text{ and }a,b,c<n,
   \]   
for $a,b,c\in\NN$.  The latter condition is equivalent to
$ab=c<n$ and $a,b<n$.  Hence, 
$\nu(x,y,z)\colon \exists p\,{\tilde\mu}(x,y,z,p)$
is a $\FO(\kI)$-formula that defines multiplication on $\mN$.

Given a sentence $\phi\in\FO[\joukko{{+},{\times}}]$,
there is an effective way to do the substitution which gives
$\phi(\nu/\times)\in\FO(\kI)[\joukko{+}]$.   Now we have
   \[
\langle \NN,{+},{\times}\rangle\models\phi
\quad\text{if and only if}\quad
\mN\models\phi(\nu/\times),
   \]
which reduces the $\FO$-theory of $\langle \NN,{+},{\times}\rangle$
to the decidable $\FO(\kI)$-theory of $\mN$.
However, this is in contradiction with one of the most fundamental
results of mathematical logic that the $\FO$-theory of 
$\langle \NN,{+},{\times}\rangle$ is undecidable.
\end{proof}

\subsection*{Cardinality quantifiers expressing equicardinality}

In this section we briefly recall the result \cite{Lu1}
characterizing cardinality quantifiers~$\kC_S$ which can define the
quantifier $\kI$ on ordered structures.
The periodicity of the set $S$ is measured in terms of the following
functions $f_S$ and $\oo_S$.

\begin{enumerate}[(a)]

\item
Let $f\colon \NN\to\NN$ and $\Delta\osaj\NN$ be an interval.
Then $f$ is {\it periodic on $\Delta$ with period} $\oo\in\NN$
if $f(x)=f(x+\oo)$ whenever $x,x+\oo\in\Delta$.  A set 
$S\osaj\NN$ is {\it periodic on $\Delta$ with period $\oo$}
if its characteristic function $\kh_S$ is, i.e., 
$x,x+\oo\in\Delta$ implies $x\in S\iff x+\oo\in S$.

\item
Let $S\osaj\NN$.  The functions $f_S,\oo_S\colon \NN\to\NN$
are defined in the following way.  Let $n\in\NN$.  Then
$f_S(n)$  is the least $\ell\in\NN$ such that for some
$\oo\in\NN$, $0<\oo\le\ell$, the set $S$ is periodic
on the interval $\{i\in\NN\ |\ \ell-\oo\le i\le n-(\ell-\oo)\}$
with period~$\oo$.  Furthermore, $\oo_S(n)$ is the least
$\oo\in\ZZ_+$ 
such that $S$ is periodic on the interval
$\{i\in\NN\ |\ f_S(n)-\oo\le i\le n-(f_S(n)-\oo)\}$
with period $\oo$.
\end{enumerate}

\begin{theorem}[\cite{Lu1}]\label{Idef}
Let $S\osaj\NN$.  Then the following are equivalent:

\begin{enumerate}

\item \label{NPC} There are $k,\ell\in\NN$
such that $k\cdot f_S(n)\cdot\oo_S(n)^\ell\ge n$,
for almost all $n\in\NN$.  
\item  $\FO_{\le}(\kI)\le\FO_{\le}(\kC_S)$.  
\end{enumerate}
\end{theorem}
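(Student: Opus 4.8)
The plan is to prove the two implications separately, reducing throughout to the single task of comparing cardinalities of first-order definable sets. Indeed, to establish $(2)$ it suffices to define the quantifier~$\kI$ inside $\FO_{\le}(\kC_S)$, and by its semantics this amounts to expressing the predicate $|A|=|B|$ for $\FO_{\le}(\kC_S)$-definable sets $A,B$ in an ordered model of size~$n$; recall from~\eqref{MajI} that $\FO_{\le}(\kI)\equiv\FOC_{\le}$ already carries the full power of counting. The one technique that makes $\kC_S$ usable on ordered structures is \emph{padding}: given a definable set~$A$ and a definable cardinality $d\le n-|A|$, the order lets me append to~$A$, canonically, the first $d$ elements of its complement to obtain a definable set of size $|A|+d$, and subtraction is analogous. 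Composing this with $\kC_S$, the logic $\FO_{\le}(\kC_S)$ can test, for any definable shift~$d$ with $|A|+d\in\{0,\dots,n\}$, whether $|A|+d\in S$; that is, it can read off the finite \emph{local pattern} $\bigl(\kh_S(|A|+j)\bigr)_j$ of the characteristic function~$\kh_S$ around the point~$|A|$.

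The heart of $(1)\Rightarrow(2)$ is a reconstruction argument showing that condition~$(1)$ is exactly what lets finitely many such patterns, taken after a bounded number of definable shifts, pin down $|A|$ uniquely in $\{0,\dots,n\}$. On the central interval $[f_S(n),\,n-f_S(n)]$ the set $S$ is periodic with period $\oo_S(n)$, so one local pattern determines only the residue $|A|\bmod\oo_S(n)$; in the two boundary zones of width $\sim f_S(n)$ the pattern is irregular and fixes $|A|$ outright. I would encode an arbitrary $c\in\{0,\dots,n\}$ in mixed radix: a top part ranging over $O(f_S(n))$ boundary positions together with $\ell$ successive residues modulo $\oo_S(n)$, each residue extracted by shifting $c$ by a definable multiple of $\oo_S(n)$ (realized through vectorization, so as to divide the running quotient by $\oo_S(n)$ at each stage) and then applying a $\kC_S$-test. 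The inequality $k\cdot f_S(n)\cdot\oo_S(n)^\ell\ge n$ is precisely the assertion that this scheme has enough range to address every cardinality up to~$n$, so after $\ell$ stages the procedure lands in a boundary zone and returns~$c$ exactly. Equicardinality $|A|=|B|$ is then expressed by demanding that the two reconstructions coincide, which yields $\kI\in\FO_{\le}(\kC_S)$.

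For the converse I would argue contrapositively. If $(1)$ fails, then for every fixed $k,\ell$ the inequality fails for infinitely many~$n$, forcing $f_S(n)\,\oo_S(n)^\ell=o(n)$ for each~$\ell$; thus $S$ is strongly periodic on arbitrarily large central intervals, with small period and thin boundary. In this regime an Ehrenfeucht--Fra\"iss\'e analysis of $\FO_{\le}(\kC_S)$ shows that, at any fixed quantifier rank~$q$, the only cardinality information a formula can extract about the middle of $\{0,\dots,n\}$ is periodic with a period bounded in terms of~$q$ and the local behaviour of~$S$. One then exhibits, for each~$q$, two ordered $\{U,V\}$-structures that agree on all $\FO_{\le}(\kC_S)$-sentences of rank~$q$ yet are separated by~$\kI$ — for instance by letting $|U|$ and $|V|$ differ by a multiple of that period while keeping every padded $\kC_S$-test constant. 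This contradicts $(2)$.

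The step I expect to be the main obstacle is the reconstruction in $(1)\Rightarrow(2)$: verifying that the shift-and-read procedure is genuinely first-order definable at every stage — in particular that the division by $\oo_S(n)$ needed to pass from one mixed-radix digit to the next can be carried out inside $\FO_{\le}(\kC_S)$ without already presupposing the counting power we are trying to build — and that the non-periodicity budget recorded by $f_S$ and~$\oo_S$ matches, stage for stage, what the procedure consumes, so that the range condition $k\cdot f_S(n)\cdot\oo_S(n)^\ell\ge n$ is tight rather than merely sufficient. Careful bookkeeping is required to keep every intermediate shifted cardinality inside the admissible window $\{0,\dots,n\}$ throughout.
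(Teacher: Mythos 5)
First, a point of reference: the paper does not prove Theorem~\ref{Idef} at all --- it is imported wholesale from \cite{Lu1} and used as a black box (the only accompanying text is the remark that ``for almost all $n$'' may be replaced by ``for all $n$''), so there is no in-paper argument to compare yours against; I can only assess your sketch on its own terms. Your combinatorial reading of condition~\ref{NPC} is the right one --- a cardinality below $n$ should be addressable by a boundary position among $O(f_S(n))$ choices plus $\ell$ digits base $\oo_S(n)$ --- and the padding trick (testing $|A|+j\in S$ by adjoining to $A$ an initial segment of its complement) is sound. But the forward direction has a genuine gap at exactly the step you flag as the obstacle: you extract each digit ``through vectorization, so as to divide the running quotient by $\oo_S(n)$.'' The logic in the statement is $\FO_{\le}(\kC_S)$ with the \emph{unary} quantifier only; vectorizations $\kC_S^k$, which count definable sets of $k$-tuples, are not part of this logic and are in general strictly stronger (the paper itself stresses the analogous gap between $\Maj$ and $\Maj^2$). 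Even granting some substitute, you never say how the quotient $\lattia{c/\oo_S(n)}$ --- division by a quantity that varies with $n$ and is not an element of the structure --- is realized as the cardinality of a definable set; that is not bookkeeping, it is the mathematical core of $\ref{NPC}\Rightarrow(2)$. A further unjustified claim is that the boundary zone ``fixes $|A|$ outright'': the definition of $f_S$ only forbids short-period periodicity of $S$ on long central intervals; it does not directly give distinct positions in the boundary zone distinct local fingerprints, so even the top digit of your mixed-radix scheme needs an argument.

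The converse direction is asserted rather than proved. Its entire content is the claim that an Ehrenfeucht--Fra\"iss\'e analysis shows a rank-$q$ formula of $\FO_{\le}(\kC_S)$ can only extract periodic cardinality information with period bounded in terms of $q$; for nested occurrences of $\kC_S$ with parameters, where the sets being counted themselves depend on the outcomes of inner $\kC_S$-tests, establishing such a normal form is precisely the hard part, and no game or invariant is specified. There is also a small quantifier slip: the negation of condition~\ref{NPC} says that for every $k,\ell$ the inequality fails for infinitely many $n$, which gives smallness of $f_S(n)\oo_S(n)^{\ell}$ only along subsequences, not $f_S(n)\oo_S(n)^{\ell}=o(n)$ outright; the counterexample structures must be built along such a subsequence (chosen by diagonalization over $\ell$). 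In sum, the proposal is a reasonable road map with the correct guiding intuition, but both implications are missing their central technical arguments, and the forward direction as written leans on a resource (vectorized counting) that the theorem does not grant.
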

In \cite{Lu1}, the inequality in condition \ref{NPC} of Theorem
\ref{Idef} was required to hold for every $n\in \NN$. It is easy to see
that, if a set $S$ satisfies condition \ref{NPC} with $k,\ell\in\NN$,
then, by replacing $k$ by a big enough $k'$, the inequality will be
satisfied by every $n\in \NN$.

Next we apply Theorem \ref{Idef} to certain interesting sets $S\osaj\NN$.

\begin{lemma}[\cite{Lu1}]\label{Idefex}
\begin{enumerate}[(a)]

\item Let $S=\rg(P)$ be the range of some polynomial $P$ with
  nonnegative integer coefficients of degree at least two. Then
  the quantifier $\kI$ is definable in $\FO_{\le}(\kC_S)$.

\item \label{Idef-E} Let $E=\{2^n \mid n\in\NN\}$. Then
  the quantifier $\kI$ is definable in $\FO_{\le}(\kC_E)$.

\item Let $F=\{n!\mid n\in\NN\}$. Then
  $\FO_{\le}(\kI)\not\le\FO_{\le}(\kC_F)$.

\end{enumerate}
\end{lemma}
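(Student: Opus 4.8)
The plan is to derive all three parts from Theorem~\ref{Idef} by estimating $f_S$ and $\oo_S$ and then checking, respectively refuting, condition~\ref{NPC}. It is convenient to reparametrize: writing $m=\ell-\oo\ge0$ for the common margin, the interval in the definition of $f_S$ is $[m,n-m]$ and the relevant budget is $m+\oo$, so that
\[
f_S(n)=\min\{\,m+\oo \mid m\ge0,\ \oo\ge1,\ S\text{ is }\oo\text{-periodic on }[m,n-m]\,\},
\]
and $\oo_S(n)$ is the least period attaining this budget (periodicity being vacuous on any interval shorter than the period). The combinatorial core for parts (a) and (c) is the following remark about a set $A\subseteq\NN$ whose consecutive gaps strictly increase: there is no $\oo\ge1$ and no two \emph{consecutive} elements $a<a'$ of $A$ with both $a+\oo,a'+\oo\in A$. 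Indeed $a+\oo=b$ and $a'+\oo=b'$ in $A$ would give $b'-b=a'-a$ with $b>a$ and $b'>b$, so $b'-b$ is a sum of gaps lying strictly to the right of the gap $a'-a$, hence strictly larger --- a contradiction. Both $\rg(P)$ with $\deg P\ge2$ and $F=\{n!\}$ have strictly increasing gaps.

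For part (a), suppose $S=\rg(P)$ is $\oo$-periodic on $[m,n-m]$. If the subinterval $[m,n-m-\oo]$ contained two consecutive values $P(r)<P(r+1)$, then $P(r)+\oo$ and $P(r+1)+\oo$ would both lie in $[m,n-m]\cap S$, contradicting the remark above. Since $P$ is strictly increasing, any interval meeting at least two values of $P$ meets two with consecutive indices; and consecutive values of $P$ inside $[0,n]$ are at distance at most $\Delta_{\max}(n)=O(n^{(d-1)/d})=o(n)$. Hence periodicity forces $n-2m-\oo<2\Delta_{\max}(n)$, which (using $m\le m+\oo$) rearranges to $m+\oo>n/2-\Delta_{\max}(n)$. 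For large $n$ this gives $f_S(n)\ge n/3$, so condition~\ref{NPC} holds with $k=3$ and exponent $\ell=0$, and $\kI$ is definable in $\FO_{\le}(\kC_S)$ by Theorem~\ref{Idef}.

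For part (b) the gaps of $E=\{2^n\}$ are not $o(n)$, so I use a ``lands strictly between'' argument instead. Suppose $E$ is $\oo$-periodic on $[m,n-m]$ with $m+\oo<n/9$. Then $[m,n-m]\supseteq[n/9,8n/9]$; let $2^a$ be the least power of two with $2^a\ge n/9$, so $2^a\le 2n/9$ and $\oo<n/9\le 2^a$. Consequently $2^a+\oo$ lies strictly between the consecutive powers $2^a$ and $2^{a+1}$, hence is not in $E$; yet $2^a\in E$ and $n/9<2^a+\oo<n/3$, so both $2^a$ and $2^a+\oo$ lie in $[m,n-m]$, contradicting periodicity. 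Thus $f_E(n)\ge n/9$, and condition~\ref{NPC} again holds with exponent $\ell=0$ (say $k=9$), giving definability of $\kI$ in $\FO_{\le}(\kC_E)$.

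Part (c) is the main obstacle and is where the extreme sparsity of $F=\{n!\}$ is exploited to \emph{refute} condition~\ref{NPC}. Take $n=(k+1)!$. The window $[k!+1,\,n-k!-1]$ lies inside the single gap $(k!,(k+1)!)$ and contains no factorial, so $F$ is $1$-periodic there and $f_F(n)\le k!+2$. To see this is optimal, note that any smaller budget needs margin $m\le k!$, so $k!$ stays in $[m,n-m]$ (as $n-m\ge n-k!=k\cdot k!>k!$): if moreover $m\le(k-1)!$ then $(k-1)!$ stays too, and one checks both shifts by $\oo$ remain in the window, so the strictly-increasing-gap remark forbids every period $\oo$; while if $(k-1)!<m\le k!$ then $k!$ is the only factorial in the window, and periodicity forces $k!+\oo\notin[m,n-m]$, i.e.\ $\oo>n-m-k!=\Theta(n)$. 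Hence the minimal budget is $f_F(n)=k!+2$, attained at period $1$, so $\oo_F(n)=1$. Therefore $f_F(n)\,\oo_F(n)^\ell=k!+2$ for every $\ell$, while $n=(k+1)!=(k+1)\,k!$, so $f_F(n)\,\oo_F(n)^\ell=o(n)$ along the infinite set $\{(k+1)!\}$ and no fixed $k,\ell$ satisfy $k\cdot f_F(n)\cdot\oo_F(n)^\ell\ge n$ for almost all $n$. Thus condition~\ref{NPC} fails for $F$, and by the equivalence in Theorem~\ref{Idef} we conclude $\FO_{\le}(\kI)\not\le\FO_{\le}(\kC_F)$. The delicate step throughout is the exact evaluation of $\oo_F(n)$ at the optimal budget; the strictly-increasing-gap remark is precisely what prevents a cheaper window with a large period from sneaking in.
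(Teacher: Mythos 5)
Your proof is correct and follows essentially the same route as the paper's: all three parts are reduced to the non-periodicity condition of Theorem~\ref{Idef}, and the strictly increasing gaps of $\rg(P)$, $E$ and $F$ are exploited in the same way, the only cosmetic difference being that you phrase the combinatorial core as ``no two consecutive elements can both shift by $\oo$ back into the set'' where the paper bounds $|\Delta\cap U|\le 2$ on intervals of length at least $2\oo$. The quantitative conclusions (a lower bound $f_S(n)\ge cn$ in (a) and (b), and the exact evaluation $f_F(n)=k!+2$, $\oo_F(n)=1$ at $n=(k+1)!$ in (c)) agree with the paper's up to constants.
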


\begin{proof}
In view of Theorem~\ref{Idef}, we need to show that the non-periodicity
condition~\ref{NPC} holds in the first two cases and fails in the
last case.  What is common to all of these cases is that the
parameter set~$U\osaj\NN$ of the cardinality quantifier $\kC_U$ considered
is a range of a strictly increasing sequence of natural numbers
with strictly increasing differences of consecutive elements,
i.e., there is an enumeration
$U=\{ u_k \mid k\in\NN \}$ such that $0<u_{k+1}-u_k<u_{k+2}-u_{k+1}$,
for each $k\in\NN$.

The following fact on such sets $U$ is useful:  Let $\Delta\osaj\NN$
be an interval and $\oo\in\ZZ_+$. Suppose $U$ is periodic on  $\Delta$
with period $\oo$.  Then if $|\Delta|\ge 2\oo$, then
$|\Delta\cap U|\le 2$.  Indeed, suppose to the contrary that
$|\Delta|\ge 2\oo$ and $|\Delta\cap U|\ge 3$.  Let $a$, $b$ and $c$
be the three least elements of $\Delta\cap U$ with $a<b<c$.
Obviously $a-\oo\not\in\Delta$, so $|\Delta|\ge 2\oo$
implies $a+\oo\in\Delta$.   Since $U$ is periodic on $\Delta$
with period $\oo$, we have that $a+\oo\in U$.  As $b$ is the
second least element of $\Delta\cap U$, it holds that $b\le a+\oo$.
If $b=a+\oo$ were true, we would have $b+\oo<c$, as $c-b>b-a=\oo$
(the differences are increasing),
so $b+\oo\in\Delta$ and, by periodicity, $b+\oo\in U$.
However, this is impossible, as $c$ is the third least
element of $\Delta\cap U$.  Hence, $a+\oo>b$.
But then again we notice that $b-\oo\not\in\Delta$ and
$b+\oo\in\Delta\cap U$, so $a+\oo,b+\oo\in\Delta\cap U$
which is in contradiction with the assumption that the differences
of consecutive elements of $U$ strictly increase.
So under our assumptions, $|\Delta\cap U|\le 2$.

(a)  
As $\deg(P)\ge 2$ and the coefficents are nonnegative, our
observation holds.  Clearly $\lim_{k\to\infty}P(k+1)/P(k)=1$.
This implies that for almost all $k\in\NN$, we have
that 
$$
  1<P(k+1)/P(k)<\root3\of{3},
$$
which in turn implies that for almost all $n\in\NN$,
the interval 
$$
  \Delta=\{ \lattia{n/4}, \lattia{n/4}+1,\ldots, \katto{3n/4}-1,\katto{3n/4} \}
$$
contains at least three elements of $U$.  We claim that $f_S(n)\ge n/4$.
This is because $S$ is periodic on the interval
$$
   \Delta_0=\{ f_S(n)-\oo_S(n),\ldots,n-(f_S(n)-\oo_S(n))\}
$$
with period $\oo_S(n)$.  If $f_S(n)<n/4$ were true, then
obviously $\Delta\osaj\Delta_0$, so $S$ would also be periodic on $\Delta$
with period $\oo_S(n)$.  In addition, we would have
$|\Delta|\ge n/2\ge 2f_S(n)\ge 2\oo_S(n)$, but this all is
in contradiction with our observation.  So $f_S(n)\ge n/4$,
which easily implies the non-periodicity condition~\ref{NPC}.

(b)  This case is similar to the previous case, only that the
details are easier.  For every $n\in\NN$ with $n\ge 9$,
the interval
$$
  \Delta=\{ \lattia{n/9}, \lattia{n/9}+1,\ldots, \katto{8n/9}-1,\katto{8n/9} \} 
$$
contains at least 3~elements of~$E$, as the ratio of consecutive
elements of~$E$ is $2=\root3\of 8$.  This implies $f_E(n)\ge n/9$,
so condition~\ref{NPC} is fulfilled.

(c)  Let $n\in\NN$, $n\ge 3$.  Consider the value of the functions
$f_F$ and $\oo_F$ at $m=(n+1)!-1$.  Put $\Delta=\{ n!+1,\ldots, m-n!-1 \}$,
whence $\Delta\cap F=\emptyset$.  This implies that
$F$ is trivially periodic on $\Delta$ with period~$1$, so
$f_F(m)\le n!+2$.  However, if $\ell,\oo\in\NN$ are such that
$0<\oo\le \ell\le n!+1$ and
$\Gamma=\{ \ell-\oo, \ell-\oo+1,\ldots,m-\ell-\oo\}$,
then $n!,n!+\oo\in\Gamma$, but $n!\in F$ 
and $n!+\oo\not\in F$.
Hence, $f_F(m)=n!+2$ and $\oo_F(m)=1$.
This means that $f_F(m)\oo_F(m)^s=n!+2$ irrespective of the exponent~$s$.
We observe that
$$
   \lim_{n\to\infty}\frac{n!+2}{(n+1)!-1}=0,
$$
which implies that the non-periodicity condition fails.
\end{proof}

\subsection*{Complexity classes}

Next we recall the complexity classes relevant for this article. In
this article $\AC$ and $\TC$ refer to the classes of languages
recognized by $\DLOGTIME$-uniform families $(C_n)_{n\in\NN}$ of
constant depth polynomial-size circuits. For $\AC$, the circuit $C_n$
may have NOT gates and unbounded fan-in AND and OR gates. For $\TC$, also
unbounded fan-in MAJORITY gates are allowed, which output 1 if and only if at
least half of the inputs are 1. The requirement of
$\DLOGTIME$-uniformity means that $(C_n)_{n\in\NN}$, as a family of
directed acyclic graphs, can be recognized by a random access machine
in time $O(\log(n))$ (see \cite{V} for details).

In this article we are concerned with the logical counterparts of
$\AC$ and $\TC$:
\begin{align*}
 \AC &\equiv \FO_{\{+,\times\}}\\
 \TC &\equiv \FO_{\{+,\times\}}(\Maj) \equiv \FO_{\{+,\times\}}(\kI)\equiv  \FO_{\{\le,\times\}}(\kI).
\end{align*}
These logical characterizations were proved in the seminal paper
\cite{BIS}, where the connection between $\DLOGTIME$-uniformity
and $\FO_{\{+,\times\}}$-definability was established.

\section{Cardinality quantifiers and $\TC$ }\label{card}

In this section we obtain new characterizations of  $\TC$ in
terms of certain cardinality quantifiers.

Let $\Sq=\{n^2\ |\ n \in \NN \}$ be the set of squares and let us
consider the quantifier $\kC_{\Sq}$.  By Lemma \ref{Idefex}, 
the H\"artig quantifier $\kI$ is definable in $\FO_{\le}(\kC_{\Sq})$.
On the other
hand, Lynch showed in \cite{Ly} that multiplication is already
definable in terms of addition and the relation $\Sq$. From this it
directly follows that

\begin{equation}\label{sqchar}
\FO_{\le}(\kC_{\Sq}) \equiv \TC,  
\end{equation}
since the quantifier $\kC_{\Sq}$ is easily seen to be expressible in
$\FO_{\{+,\times\}}(\kI)$. In this section we show that, in
\eqref{sqchar}, $\Sq$ can be replaced by numerous sets~$S$ of
natural numbers, among others by the range $\rg(P)$ of any
polynomial $P$ with nonnegative integer coefficients of degree at
least two.  

We use only elementary methods to achieve our goal.
However, we need several steps to complete the proof.
The key point is the elementary combinatorial fact that if
$(A_i)_{i\in I}$ is a finite disjoint family of finite sets of equal
size, then
   \[
\bigl|\bigcup_{i\in I}A_i\bigr|=|I|\;|A_{i_0}|
   \]
with $i_0\in I$ arbitrary.  Actually, this fact can be construed as
a combinatorial definition of multiplication on natural numbers.
We shall use cardinality quantifiers~$\kC_S$ for a partial
logical implementation of this definition.

A \emph{partial multiplication} is a ternary relation~$R$ on natural numbers
such that $(a,b,c)\in R$ implies $a\cdot b=c$.
We are going to show that under certain circumstances, it is
possible to extend a partial multiplication to the multiplication
restricted to the universe at hand.
We shall proceed in three steps, each corresponding
to a subsection:

\begin{enumerate}
\item
We examine in the subsection 
\emph{Extending partial multiplication}  how partial multiplication
can be extended in first order logic.  There are, admittedly,
easier ways to do this than the one which we employ, but
the method here is tailor-made for the application
related to cardinality quantifier logics.  To be more precise,
it would be possible to control the extension of partial multiplication
by some single invariant, but we use a control function,
which will be denoted by $\g(R,k)$, instead. We get sufficient bounds
for this control function so that a partial multiplication
can be extended to multiplication restricted to the whole domain
of the finite structure.

\item
In the subsection \emph{Pseudoloose sets},
we extract from the previous analysis
the notions of looseness and pseudolooseness of a set of natural numbers,
which give sufficient conditions for the cardinality quantifier
so that multiplication restricted to the structure is definable
in the related quantifier logic in ordered structures.
In consequence, we have that
these cardinality logics are at least as strong as the complexity
class $\TC$ in ordered structures.

\item
In the third subsection \emph{Analyzing pseudolooseness}, we work out
some concrete examples of cardinality quantifier logics where
defining multiplication is possible.
\end{enumerate}

We emphasize that pseudolooseness gives only a sufficient condition for
the cardinality quantifier logics in order to the restricted multiplication
be definable in ordered structures, in contrast to the results of
\cite{Lu1} where a corresponding sufficient and necessary condition
is given for the restricted addition to be definable.
We simply anticipate that a complete characterization might be
too involved to be truly feasible.

\subsection*{Extending partial multiplication}

For $k\in\NN$ and a partial multiplication $R$,  we define
$\g(R,k)$ to be the biggest $r\in\NN$ such that $r=0$ or
for every $a,b\in\NN$ with $a\le k$ and $b\le r$, we have
$(a,b,ab)\in R$. We fix ternary relation symbols $A$ and $M$. 
By a \emph{partial model of arithmetic}
we understand a finite $\{A,M\}$-structure~$\mN$ 
such that $\Dom(\mN)=\{0,1,\ldots,n-1\}$,
$A^\mN$ is addition restricted to~$n$, i.e., 
$A^\mN=\{(a,b,c)\in\Dom(\mN)^3\mid a+b=c\}$
and $M^\mN$ is a partial multiplication.

The mapping $\g$ has the following obvious monotonicity property:
If $R\osaj R'\osaj\NN^3$ and $k,k'\in\NN$, $k\ge k'$,
then $\g(R,k)\le\g(R',k')$.  In addition, we observe that
if $\mN$ is a partial model of arithmetic with $n=\card(\mN)$
and $0<k<n$, then $\g(M^\mN,k)\le\idiv{n-1}{k}$.

\begin{lemma} \label{mulextstep}
There exists a first-order formula $\mu(x,y,z)$ such that for
any partial model of arithmetic~$\mN$, we have that $\mu^\mN$
is also a partial multiplication and 
for all $a,b\in\Dom(\mN)\smallsetminus\{0\}$,

\begin{enumerate}[(a)]
\item
$\g(\mu^\mN,a)\ge \g(M^\mN,a)$,

\item
$\g(\mu^\mN,a)\ge b$ iff $\g(\mu^\mN,b)\ge a$ and

\item
if $a<b\le a^2+a$, then
   \[
\g(\mu^\mN,b)\ge
  \min\left\{\lattia{\frac{\g(M^\mN,a)}{\lattia{(b-1)/a}}},
              \idiv{n-1}{b}\right\}
   \]
where $n=\card(\mN)$.
  
\end{enumerate}

\end{lemma}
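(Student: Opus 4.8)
The plan is to let $\mu^\mN$ simply augment $M^\mN$ with those products obtainable from $M^\mN$ by a \emph{single} bounded arithmetic reduction, together with the transposes of all such facts so as to force symmetry. First I would write an auxiliary formula $\chi(x,y,z)$ over $\{A,M\}$ asserting the existence of a modulus $t$, a quotient $d$, a remainder $r$, and intermediate values $e,p,w,s$ witnessing
\[
x = t\cdot d + r,\quad 1\le r\le t,\qquad z = t\cdot(d\cdot y) + r\cdot y,
\]
where every product and sum is read off directly from $M$ and $A$; that is, $\chi$ existentially quantifies $t,d,r,e,p,w,s$ and links them by the atoms $M(t,d,e)$, $A(e,r,x)$, $M(d,y,p)$, $M(t,p,w)$, $M(r,y,s)$, $A(w,s,z)$, imposing $1\le r\le t$. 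Since $A^\mN$ is exact addition and $M^\mN$ is a partial multiplication, any witness forces $z=(td+r)y=xy$, so $\chi^\mN$ is itself a partial multiplication. I then set
\[
\mu(x,y,z) \;:=\; M(x,y,z)\vee M(y,x,z)\vee \chi(x,y,z)\vee\chi(y,x,z),
\]
a fixed first-order formula that is manifestly symmetric in $x,y$ and contains $M^\mN$; hence $\mu^\mN$ is again a partial multiplication.

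With this formula, (a) and (b) are immediate. Since $M^\mN\osaj\mu^\mN$, the monotonicity of $\g$ gives $\g(\mu^\mN,a)\ge\g(M^\mN,a)$. And because $(i,j,ij)\in\mu^\mN\iff(j,i,ij)\in\mu^\mN$ by symmetry of $\mu$, the statements $\g(\mu^\mN,a)\ge b$ and $\g(\mu^\mN,b)\ge a$ are literally the same condition read in the two orders.

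The substance is in (c). I would fix $a,b$ with $a<b\le a^2+a$, put $q=\idiv{b-1}{a}$ (so $1\le q\le a$, using $a<b\le a^2+a$) and $m=\min\{\idiv{\g(M^\mN,a)}{q},\idiv{n-1}{b}\}$, and show (assuming $m\ge1$, else there is nothing to prove) that $\chi(x,y,xy)$ holds for every $x\le b$ and $y\le m$ by taking $t=a$. Set $d=\idiv{x-1}{a}$ and $r=x-ad\in\{1,\dots,a\}$; the point of this choice of remainder is that it yields $d\le\idiv{b-1}{a}=q\le a$ for all $x\le b$, whereas the usual remainder in $\{0,\dots,a-1\}$ would allow $d=a+1$ at $x=b=a^2+a$. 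One then checks that each required atom genuinely lies in $M^\mN$: from $m\ge1$ and $m\le\g(M^\mN,a)/q$ we get $q\le\g(M^\mN,a)$, so $d\le q$ certifies $M(a,d,ad)$; the bound $y\le m\le\g(M^\mN,a)$ certifies $M(d,y,dy)$ and $M(r,y,ry)$; and $dy\le qm\le\g(M^\mN,a)$ certifies $M(a,dy,a\cdot dy)$. Finally $z=a(dy)+ry=xy\le bm\le b\,\idiv{n-1}{b}\le n-1$ lies in the domain, so the addition atom holds as well. This yields $\g(\mu^\mN,b)\ge m$, which is (c).

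I expect the main obstacle to be exactly the bookkeeping that makes $\chi$ both correct and strong enough in one bounded step: the formula must push the first factor from $a$ up to roughly $a^2$ while invoking only a constant number of $M$- and $A$-atoms, and every such atom must be guaranteed to sit inside $M^\mN$ under the hypotheses of (c). The identity $xy=a(dy)+ry$ read through commutativity (forming $a\cdot(dy)$ rather than $(ad)\cdot y$) is what keeps all three multiplications within the certified range $\g(M^\mN,a)$, and the slightly unusual division with $1\le r\le a$ is what keeps the quotient $d$ bounded by $q$. Aligning these two choices so that the displayed minimum in (c) comes out exactly is the delicate part; the remainder is routine verification.
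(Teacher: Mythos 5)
Your proposal is correct and follows essentially the same route as the paper: both rest on the division $x=td+r$ with the nonstandard remainder $1\le r\le t$ (so that the quotient stays bounded by $\lattia{(b-1)/a}$) and the identity $xy=t(dy)+ry$ realized by a constant number of $M$- and $A$-atoms, followed by symmetrization in $x,y$. The only difference is cosmetic: the paper packages the decomposition as a general two-term form $x=tu+t'u'$ (instantiating $u'=1$ for part (c) and $a_0=a_0\cdot 1+0\cdot 0$ for part (a)), whereas you hard-code the $r\cdot y$ term and secure part (a) by adding $M(x,y,z)\lor M(y,x,z)$ as explicit disjuncts.
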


\begin{proof}
We use the basic algebraic laws of multiplication for
extending~$M^\mN$:  The idea is 
that given $x,y\in\Dom(\mN)$,
if we are able to represent $x$ as $x=tu+t'u'$, then
we can calculate $xy=t(uy)+t'(u'y)=tv+t'v'$ with
$v=uy$ and $v'=u'y$ provided that the appropriate products
are definable by $M^\mN$.  More formally, consider the formulas
\begin{align*}
\alpha(x,t,u,t',u')&\sij 
    \exists s\exists s'\, (M(t,u,s)\land M(t',u',s') \land A(s,s',x)), \\
\mu_-(x,y,z) &\sij
    \exists t\exists t' \exists u\exists u' \exists v\exists v'\,
       (M(u,y,v)\land M(u',y,v') \\ 
    &\phantom{\sij\exists t\exists t' \exists u\exists u' 
                                      \exists v\exists v'(}
      \land \alpha(x,t,u,t',u') \land \alpha(z,t,v,t',v'))\\
\noalign{and} 
\mu(x,y,z)& \sij\mu_-(x,y,z)\lor \mu_-(y,x,z). 
\end{align*}
Let $\mN$ be a partial model of arithmetic.  
For $a,d,d',e,e'\in\Dom(\mN)$, it holds that 
   \[
\mN\models\alpha[a/x,d/t,d'/t',e/u,e'/u']\text{ implies }a=de+d'e'.
   \]
Furthermore, for $a,b,c\in\Dom(\mN)$ we have that
$\mN\models\mu_-[a/x,b/y,c/z]$ iff there are $d,d',e,e'\in\Dom(\mN)$
such that $a=de+d'e'$, $c=deb+d'e'b=ab$ and
   \[
\joukko{(d,e,de),(d',e',d'e'),(e,b,eb),(e',b,e'b),
(de,b,deb),(d'e',b,d'e'b)}\osaj M^\mN.
   \]
Hence, $\mu^\mN_-$ is a partial multiplication, and by commutativity
of multiplication, this holds for $\mu^\mN$, too.

The lower bounds are now easy to derive.

\begin{enumerate}[(a)]

\item
Clearly, $\mu^\mN_-\osaj\mu^\mN$ and we may use representations of
the form $a_0=a_0\cdot 1+0\cdot 0$ to show that 
$\g(M^\mN,a)\le\g(\mu^\mN_-,a)$.  Hence,
$\g(\mu^\mN,a)\ge\g(\mu^\mN_-,a)\ge \g(M^\mN,a).$

\item
By symmetry of $\mu$ and the definition of $\g$, this is obvious.

\item
Suppose $a,b\in\Dom(\mN)$ satisfy $0<a<b\le a^2+a$.
Put 
   \[
d=\min\joukko{\lattia{\frac{\g(M^\mN,a)}{\lattia{(b-1)/a}}},
              \idiv{n-1}{b}}.
   \]
Then $d\le\g(M^\mN,a)<n$, $bd<n$, and we are to show that
$\g(\mu^\mN,b)\ge d$.  We may assume that $d>0$, which implies
$\g(M^\mN,a)>0$.

Let $b_0,d_0\in\NN$ where $b_0\le b$ and $d_0\le d$.
Consider the representation $b_0=a\cdot q+r\cdot 1$
where $q=\idiv{b_0-1}{a}\le\idiv{a^2+a-1}{a}=a$
and $r\le a$.  Since $d_0\le d\le\g(M^\mN,a)$, we immediately
see that $(q,d_0,qd_0),(1,d_0,d_0),(r,1,r),(r,d_0,rd_0)\in M^\mN$.
The critical case is that of the triple $(a,qd_0,aqd_0)$,
which is in $M^\mN$, because 
   \[
qd_0\le\idiv{b-1}{a}\cdot d
\le\idiv{b-1}{a}\cdot\lattia{\frac{\g(M^\mN,a)}{\lattia{(b-1)/a}}}
\le\g(M^\mN,a).
   \]  
Clearly also $(a,q,aq)\in M^\mN$.
As in all desired 
cases we may multiply using $M^\mN$,
we get $(b_0,d_0,b_0d_0)\in\mu^\mN_-\osaj\mu^\mN$.
Consequently, $\g(\mu^\mN,b)\ge d$. \qedhere
\end{enumerate}

\end{proof}

\begin{proposition} \label{mulext}
For $k\in\NN$ with $k\ge 2$, there exists an $\{A,M\}$-formula $\pi(x,y,z)$
of~$\FO[\{A,M\}]$ such that the following condition holds:
Let $\mN$ be a  partial model of arithmetic with $\card(\mN)=n\ge k^2$.
Suppose that there is $a^*\in\Dom(\mN)$ such that
   \[
n^{1/k}\le a^*\le n^{1-1/k}/k
\quad\text{ and }\quad
ka^*\g(M^\mN,a^*)\ge n.
   \]
Then $\pi$ is the multiplication restricted to~$n=\Dom(\mN)$.
\end{proposition}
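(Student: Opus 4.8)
The plan is to take $\pi$ to be a bounded iterate of the extension $\mu$ from Lemma~\ref{mulextstep}. Set $\mu^{(0)}(x,y,z):=M(x,y,z)$ and let $\mu^{(i+1)}$ be the formula obtained from $\mu$ by substituting $\mu^{(i)}$ for every occurrence of the atom $M$; then each $\mu^{(i)}\in\FO[\{A,M\}]$, and iterating the first clause of the lemma shows that $M_i:=(\mu^{(i)})^\mN$ is again a partial multiplication (the addition reduct $A^\mN$ is never changed). I would put $\pi:=\mu^{(m)}$ for a value $m=O(\log k)$ depending only on~$k$, so that $\pi\in\FO[\{A,M\}]$. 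Since each $M_i$ consists only of genuine products, it suffices to force all the right ones into $M_m$: the goal is to reach $\g(M_m,x)\ge\idiv{n-1}{x}$ for every $x\le\sqrt{n-1}$, because then any $(a,b,ab)$ with $ab\le n-1$ and $a\le b$ satisfies $a\le\sqrt{n-1}$ and $b\le\idiv{n-1}{a}\le\g(M_m,a)$, so $(a,b,ab)\in M_m$; as $\mu^\mN$ is symmetric, $M_m$ is then exactly multiplication restricted to $\Dom(\mN)=\{0,\dots,n-1\}$.

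The engine is the monotone control function $\g$ together with parts (b) and (c) of the lemma, which offer two complementary moves. Choosing $b=2a$ in~(c) and noting that $\lattia{(2a-1)/a}=1$ for $a\ge 1$ yields $\g(M_{i+1},2a)\ge\min\{\g(M_i,a),\idiv{n-1}{2a}\}$: the column at the doubled index $2a$ is as tall as the column at $a$, so the rectangular area $a\cdot\g(M_i,a)$ doubles until it saturates at $n-1$. Choosing instead $b$ near $a^2+a$ (where $\lattia{(b-1)/a}=a$) squares the index at the cost of dividing the height by~$a$; letting the implicit quotient $\lattia{(b-1)/a}$ run over $1,\dots,a$ produces the whole staircase $\g(M_{i+1},b)\ge\lattia{\g(M_i,a)/\lattia{(b-1)/a}}$, i.e.\ completeness on the hyperbolic region $\{(x,y):xy\le a\,\g(M_i,a)\}$ up to index~$a^2$. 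Finally, part~(b) transposes a rectangle, allowing the staircase to be run from the larger of the two dimensions.

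Concretely I would first read off from the hypotheses that $r^*:=\g(M^\mN,a^*)$ satisfies $a^*r^*\ge n/k$ and, using $a^*\le n^{1-1/k}/k$, that $a^*,r^*\ge n^{1/k}$. A doubling phase of $\katto{\log_2 k}$ applications of the $b=2a$ move then drives the area from $a^*r^*\ge n/k$ up to $n-1$, while the bound $a^*\le n^{1-1/k}/k$ keeps the growing index $2^ia^*$ below~$n$; this yields a (generally unbalanced) rectangle of area $\ge n-1$ inside $M_i$. A filling phase then applies the squaring/staircase move, together with the transposition from~(b), to convert this rectangle into completeness on the full hyperbola $\{xy\le n-1\}$; because each staircase step raises the reachable index from $a$ to about $a^2$ and we start from index $\ge n^{1/k}$, after $O(\log k)$ steps the covered indices pass $\sqrt{n-1}$, which is all that is needed. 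Both phases cost only $O(\log k)$ substitutions, so $m=O(\log k)$ is indeed a constant.

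I expect the filling phase to be the real obstacle. Doubling the area is clean, but converting an \emph{unbalanced} rectangle into completeness on the hyperbolic region is delicate: the staircase launched from one dimension leaves a ``middle'' block of pairs $(x,y)$ with both coordinates of intermediate size uncovered, and closing it requires iterating the squaring move and interleaving the transpositions from~(b) while keeping all the floor estimates under control. This is precisely where the hypotheses $n\ge k^2$ and $n^{1/k}\le a^*\le n^{1-1/k}/k$ are consumed: they place both $a^*$ and $r^*$ in the window $[n^{1/k},n^{1-1/k}]$, which simultaneously bounds the number of doublings and squarings by $O(\log k)$ and guarantees that the staircases reach across the entire region $\{xy\le n-1\}$.
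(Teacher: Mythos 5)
Your construction of $\pi$ (a $k$-dependent constant number of substitutions of $\mu$ into itself), your identification of the two growth moves plus the symmetry transposition, and your final reduction to ``$\g(\pi^\mN,b)\ge\idiv{n-1}{b}$ for small $b$, then transpose'' are all exactly the paper's. The gaps are in the quantitative bookkeeping, which is where the substance of this proposition lies. The first concrete failure is in your doubling phase: you double the \emph{index} and assert that $a^*\le n^{1-1/k}/k$ keeps $2^ia^*$ below $n$. This is false. One has only $2^{\katto{\log_2 k}}a^*\le 2n/n^{1/k}$, and since the proposition assumes merely $n\ge k^2$, the quantity $n^{1/k}$ may lie below $2$; for instance $k=20$, $n=400$, $a^*=14$ satisfies all the hypotheses (with $\g(M^\mN,a^*)\ge 2$), yet $2^5a^*=448>n-1$, so the doubled index leaves the domain before the area saturates. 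The paper's move (its inequality (d), derived from parts (b) and (c) of Lemma~\ref{mulextstep}) doubles the \emph{height} at a fixed index, $\g(\pi^\mN_{i+1},a)\ge\min\joukko{2\g(\pi^\mN_i,a),\idiv{n-1}{a}}$, which cannot overshoot.

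The second and more serious gap is the filling phase, which you flag as ``the real obstacle'' but which is the crux rather than a detail, and your target for it --- exact saturation $\g(M_m,x)\ge\idiv{n-1}{x}$ for every $x\le\sqrt{n-1}$ --- is not reachable by your phase ordering. Two specific problems: (i) each staircase step divides by $\lattia{(b-1)/a}$ and takes floors, so the losses compound and exact saturation is not preserved across $O(\log k)$ squarings; the paper runs the staircase \emph{first} and asks it to deliver only $\g(\pi^\mN_{i^*/2},b)\ge\idiv{n}{kb}$, a factor-$k$ deficit licensed by the hypothesis $ka^*\g(M^\mN,a^*)\ge n$ and then erased by $\katto{\log_2 k}+1$ height-doublings performed \emph{last}. (ii) The staircase reaches only indices above its starting point $a_1=\min\joukko{a^*,\g(M^\mN,a^*)}$; for $b<a_1$, monotonicity yields only $\g(\cdot,b)\ge\g(\cdot,a_1)\le\idiv{n-1}{a_1}$, far short of $\idiv{n-1}{b}$, and doubling up from the initial value at index $b$ would cost $\Theta(\log n)$ rather than $O(\log k)$ iterations. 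The paper closes this case with a dedicated transposition step --- apply the $\idiv{n}{kb}$ estimate at the index $\idiv{n}{kb}$ itself and flip via the symmetry property --- and your sketch has no substitute for it. With the phases reordered and these two arguments supplied, your plan becomes the paper's proof.
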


\begin{proof}
We substitute $\mu$ of the preceding lemma repeatedly for $M$,
getting $\pi_0(x,y,z)=M(x,y,z)$  and $\pi_{i+1}=\mu(\pi_i/M)$, for $i\in\NN$.
We claim that $\pi_{i*}$ with $i^*=2\katto{\lb k}+2$ has the
required property, where $\lb k$ denotes the binary logarithm of $k$.  Suppose $\mN$ satisfies the assumptions of the lemma.
Consider the partial model of arithmetic $\mN_i$ with $\Dom(\mN_i)=n$
and $M^{\mN_i}=\pi^\mN_i$. Then for every $i\in\NN$ and 
$a,b\in\joukko{1,\ldots,n-1}$ the preceding lemma gives
\begin{enumerate}[(a)]

\item \label{gammainc}
$\g(\pi^\mN_{i+1},a)\ge \g(\pi^\mN_i,a)$,

\item \label{gammasymm}
$\g(\pi^\mN_{i+1},a)\ge b$ iff $\g(\pi^\mN_{i+1},b)\ge a$ and

\item \label{gammaext}
if $a<b\le a^2+a$, then
   \[
\g(\pi^\mN_{i+1},b)\ge
  \min\joukko{\lattia{\frac{\g(\pi^\mN_i,a)}{\lattia{(b-1)/a}}},
              \idiv{n-1}{b}}.
   \]

\listpart{Furthermore, these inequalities imply for 
$a\in\joukko{1,\ldots,n-1}$ and $i\in\ZZ_+$ that} 

\item \label{gammadbl}
$\g(\pi^\mN_{i+1},a)\ge\min\joukko{2\g(\pi^\mN_i,a),\idiv{n-1}{a}}$.
  
\end{enumerate}

In order to prove this last statement, put $b=\g(\pi^\mN_i,a)$
and $c=\min\joukko{2b,\idiv{n-1}{a}}$ so that
we are to prove $\g(\pi^\mN_{i+1},a)\ge c$. If $b=0$ 
or $b=\idiv{n-1}{a}$, then $b=c$ and the inequality
follows directly from inequality~\ref{gammainc}.  So suppose
that $0<b<\idiv{n-1}{a}$.  Then by symmetry (item~\ref{gammasymm}),
we have $\g(\pi^\mN_i,b)\ge a$, and as $0<b<c\le b^2+b$,
we have by estimate~\ref{gammaext} that
\begin{align*}
\g(\pi^\mN_{i+1},c) 
   & \ge\min\joukko{\lattia{\frac{\g(\pi^\mN_i,b)}{\lattia{(c-1)/b}}},
              \idiv{n-1}{c}} \\
   & \ge\min\joukko{\g(\pi^\mN_i,b),
                    \lattia{\frac{n-1}{\idiv{n-1}{a}}}}\ge a. \\
\end{align*}
Again by symmetry, $\g(\pi^\mN_{i+1},a)\ge c$.

Set $a_1=\min\joukko{a^*,\g(M^\mN,a^*)}$ and recursively
$a_{i+1}=\min\joukko{a^2_i,n-1}$, for $i\in\ZZ_+$.  Then we have that
\begin{shortenumerate}
\item $a^2_1<n$,
\item $a^k_1\ge n$,
\item $ka_1\g(\pi^\mN_i,a_1)\ge n$ \quad and
\item $\g(\pi^\mN_i,a_1)\ge a_1$,  
\end{shortenumerate}
for $i\in\ZZ_+$.
The first inequality follows from 
$a^2_1\le a^*\g(M^\mN,a^*)\le a^*\idiv{n-1}{a^*}<n$.
If $a_1=a^*$, the rest of the inequalities are clear, so assume
$a_1=\g(M^\mN,a^*)$.  Then by the assumptions of the lemma, we have
$ka^*a_1\ge n$ and $ka^*\le n^{1-\frac1k}$, so $a_1\ge n^{1/k}$,
implying the second inequality.  By \ref{gammainc} and \ref{gammasymm},
we have $\g(\pi^\mN_i,a_1)\ge\g(M^\mN,a_1)\ge a^*\ge a_1$, i.e.,
last inequality.  Consequently, $ka_1\g(\pi^\mN_i,a_1)\ge ka_1a^*\ge n$.

Consider $f\colon \joukko{a_1,\ldots,n-1}\to n$, 
   \[
f(x)=\min\joukko{\lattia{\frac{\gamma(\mu^\mN,a_1)}{\lattia{x/a_1}}},
                  \idiv{n-1}{x}}.
   \]
By induction on $i\in\ZZ_+$, we get 
for $b\in\joukko{a_1,a_1+1,\ldots,a_i}$ that $\g(\pi^\mN_i,b)\ge f(b)$.
Indeed, the case $i=1$ is trivial. Supposing the induction hypothesis
holds for~$i$, we need only to prove the inequality 
$\g(\pi^\mN_{i+1},b)\ge f(b)$ for $b\in\joukko{a_i+1,\ldots,a^2_i}$
as $\g(\pi^\mN_{i+1},b)\ge\g(\pi^\mN_i,b)$. Then $a_i<b\le a^2_i+a_i$
so by induction hypothesis and inequality~\ref{gammaext}, we have
\begin{align*}
\g(\pi^\mN_{i+1},b) 
     & \ge\min\joukko{\lattia{\frac{\g(\pi^\mN_i,a_i)}{\lattia{(b-1)/a_i}}},
              \idiv{n-1}{b}}   \\
     & \ge\min\joukko{\lattia{\frac{f(a_i)}{\lattia{b/a_i}}},
              \idiv{n-1}{b}}
       =\lattia{\min\joukko{\frac{f(a_i)}{\lattia{b/a_i}},
               \frac{n-1}{b}}}   \\
     & =\lattia{\min\joukko{\frac{\g(\mu^\mN,a_1)}
                               {\lattia{b/a_i}\lattia{a_i/a_1}},
                \frac{n-1}{b}}} \\
     & \ge f(b),
\end{align*}
as for arbitrary $r,s\ge 0$, it holds
that $\lattia{rs}\ge\lattia{r}\lattia{s}$.

Obviously $a_i=\min\joukko{a^{2^{i-1}}_1,n-1}$, for $i\in\ZZ_+$.
In particular, we get $a_{i^*/2}=a_{\katto{\lb k}+1}=n-1$.
Hence, for $b\in\joukko{a_1,a_1+1,\ldots,n-1}$, it holds that
   \[
\g(\pi^\mN_{i^*/2},b)\ge f(b)
\ge\min\joukko{\lattia{\frac{n/(ka_1)}{\lattia{b/a_1}}},\idiv{n-1}{b}}
\ge\idiv{n}{kb},
   \]
as $ka_1\g(\mu^\mN,a_1)\ge n$.  We would like to have the same estimate
for $b\in\joukko{1,\ldots,a_1-1}$, too. If $\idiv{n}{kb}<a_1$,
then by natural monotonicity of $\g$, we have
   \[
\g(\pi^\mN_{i^*/2},b) \ge \g(\pi^\mN_{i^*/2},a_1)
\ge a_1 > \idiv{n}{kb},
   \]
so suppose $\idiv{n}{kb}\ge a_1$.  Then we may apply the estimate
for $\idiv{n}{kb}$, getting
   \[
\g(\pi^\mN_{i^*/2},\idiv{n}{kb})
\ge \lattia{\frac{n}{k\idiv{n}{kb}}}
\ge b, 
   \]
and by symmetry (item \ref{gammasymm}),
$\g(\pi^\mN_{i^*/2},b)\ge\idiv{n}{kb}$.

To finish the proof, we show that for every $b\in\joukko{1,\ldots,n-1}$,
we have $\g(\pi^\mN,b)=\g(\pi_{i^*},b)=\idiv{n-1}{b}$,
which is equivalent to claiming that $\pi^\mN$ is the multiplication
restricted to $n$.

Applying inequality~\ref{gammadbl} repeatedly, we get
\begin{align*}
\g(\pi^\mN_{i^*/2+j},b)
    &\ge\min\joukko{2^j\g(\pi^\mN_{i^*/2},b),\idiv{n-1}{b}} \\  
    &\ge\min\joukko{2^j\idiv{n}{kb},\idiv{n-1}{b}},   
\end{align*}
for $b\in\joukko{1,\ldots,n-1}$ and $j\in\NN$.  Furthermore, for
$b\le n/k$, we get 
$2^{i^*/2}\lattia{n/(kb)}\ge 2k\cdot(n/(2kb))=n/b$ and therefore
$\g(\pi^\mN,b)=\idiv{n-1}{b}$.  If $b>n/k\ge k$, then
$\idiv{n-1}{b}\le n/k$, so $\g(\pi^\mN,a)=\idiv{n-1}{a}$
holds for $a=\idiv{n-1}{b}$.  However, clearly 
$\idiv{n-1}{a}\ge b$, so by the symmetry property $\g(\pi^\mN,a)\ge b$
implies $\g(\pi^\mN,b)\ge a=\idiv{n-1}{b}$.
Hence, $\g(\pi^\mN,b)=\idiv{n-1}{b}$.
\end{proof}

\subsection*{Pseudoloose sets}

Now that possibilities for extending a partial multiplication have been
explored, it is relatively easy to extract some sufficient properties
of cardinality quantifiers that can be used for defining multiplication.
To start with, we need some notation.

For a set~$S\osaj\NN$, we define mappings~$\delta_S\colon S\to\ZZ_+\cup\{\infty\}$
and
$\g_S\colon \ZZ_+\times\ZZ_+\to\NN$ as follows:  For each $n\in S$,
$\delta_S(n)=m-n$ where $m$ is the least number in $S$ which is bigger
than~$n$ if such an~$m$ exist, otherwise $\delta_S(n)=\infty$.  
For every $n,t\in\ZZ_+$, let $S_{n,t}$ be the set
of $m\in S$ with $\delta_S(m)\ge t$ and $m+t<n$.
Put $\g_S(n,t)=|S_{n,t}|$. For $\kh\colon\NN\to\joukko{0,1}$
and a binary word $w\in\joukko{0,1}^s$, $s\in\NN$, put
   \[
T^\kh_w=\joukko{m\in\NN 
                 \mid \forall i\in\joukko{0,\ldots,s-1}\,(\kh(m+i)=w(i))},
   \]
i.e., $T^\kh_w$ is the set of occurrences of the word~$w$ in the infinite
word~$\kh$.

\begin{definition}\label{loose}
Let $S\osaj\NN$, $n\in\NN$ and $\e>0$.  The set $S$ is 
\emph{$\e$-loose relative to $n$} if there exists $t\in\ZZ_+$
satisfying the inequalities $t\g_S(n,t)\ge\e n$
and $n^\e\le t\le n^{1-\e}$.
$S$ is \emph{loose} if there is $\e>0$ such that
for almost all $n\in\ZZ_+$ it holds that $S$ is $\e$-loose 
relative to~$n$.
$S$ is \emph{pseudoloose} if for some $\e>0$, we have that
for almost all $n\in\ZZ_+$ there exists a word $w\in\joukko{0,1}^s$
such that $s\le n^{1-\e}$ and $T^{\kh_S}_w$ is $\e$-loose
relative to~$n$, where $\kh_S$ is the characteristic function of~$S$.
\end{definition}

We note that if $S$ is loose, then it is obviously pseudoloose.
The complement $\NN\setminus S$ of a loose set~$S$ is also pseudoloose,
but in general, not loose, e.g., if $S$ does not contain
consecutive natural numbers.

We are now quite close to fulfilling our goal.
Two steps remain:  To implement the combinatorial idea
explained in the beginning of this section,
we need to be able to compare 
sizes of sets, i.e., we need
$\FO_\le(\kI)\le\FO_\le(\kC_S)$. Then we need to show that a sufficient part
of the definition
of multiplication can be formalized in $\FO_\le(\kC_S)$.
For the first subgoal, we need some combinatorial analysis of
the set~$S$.  Write 
$[a,b]_\NN=[a,b]\cap\NN=\joukko{n\in\NN \mid a\le n\le b}$ for $a,b\in\NN$
(and similarly for other types of intervals).

\begin{lemma} \label{estim-goof}
For infinite $S\osaj\NN$ and $n,t\in\ZZ_+$, we have
$t(\g_S(n,t)-1)\le 2f_S(n)$ or $t\le\oo_S(n)$ where
$f_S$ and $\oo_S$ are as in Theorem~\ref{Idef}.
\end{lemma}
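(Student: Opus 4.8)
The statement is equivalent to the implication: if $t>\oo_S(n)$, then $t(\g_S(n,t)-1)\le 2f_S(n)$. So the plan is to assume $t>\oo_S(n)$ throughout and abbreviate $f=f_S(n)$, $\oo=\oo_S(n)$, $g=\g_S(n,t)$. By the definitions of these parameters, $S$ is periodic with period $\oo$ on the central interval $\Delta_0=[f-\oo,\,n-f+\oo]_\NN$; note that the witnessing period in the definition of $f_S(n)$ is at most $f_S(n)$, so $\oo\le f$ and in particular $f-\oo\ge 0$.

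First I would record two structural facts about $S_{n,t}$. (i) It is $t$-separated: if $m<m'$ both lie in $S_{n,t}$, then $m'\in S$ exceeds $m$, hence $m'\ge m+\delta_S(m)\ge m+t$. (ii) Every element lies in $[0,\,n-t-1]_\NN$ (from $m+t<n$) and, crucially, no element lies in the forbidden middle block $[f-\oo,\,n-f]_\NN$. The second point is the heart of the argument: if $m\in S_{n,t}$ satisfied $f-\oo\le m\le n-f$, then both $m$ and $m+\oo$ would lie in $\Delta_0$, so periodicity would force $m+\oo\in S$; but since $\oo<t$ this would put the element $m+\oo$ of $S$ strictly inside the gap $m<m+\oo<m+t$, contradicting $\delta_S(m)\ge t$.

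With these facts the counting is routine, and I would split on the size of $f$. If $2f\ge n$, the crude global span bound suffices: the $g$ points of $S_{n,t}$ are $t$-separated inside $[0,\,n-t-1]_\NN$, so $t(g-1)\le n-t-1<n\le 2f$. If $2f<n$, the forbidden block is nonempty and separates the admissible range into the two \emph{disjoint} intervals $[0,\,f-\oo-1]_\NN$ and $[n-f+1,\,n-t-1]_\NN$, of spans $f-\oo-1$ and $f-t-2$ respectively. Counting $t$-separated points in each (at most $1+(\text{span})/t$) and adding gives $g\le 2+(2f-\oo-t-3)/t$, which simplifies to $t(g-1)\le 2f-\oo-3<2f$.

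The main obstacle is fact (ii): correctly extracting the forbidden middle block from the definitions of $f_S$ and $\oo_S$, and in particular checking the boundary condition that $m+\oo$ must still lie in $\Delta_0$, so that the block is exactly $[f-\oo,\,n-f]_\NN$ rather than all of $\Delta_0$. Once this is pinned down, the rest reduces to the elementary fact that $t$-separated integers in an interval of span $L$ number at most $1+L/t$. The case split on $2f$ versus $n$ is needed only because the two admissible intervals overlap when $f$ is large, and in that regime the global bound already closes the argument.
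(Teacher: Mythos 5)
Your proof is correct and follows essentially the same route as the paper's: assume $t>\oo_S(n)$, use the periodicity of $S$ on the central interval to show that $S_{n,t}$ avoids a middle block of length roughly $n-2f_S(n)$, and then bound the number of $t$-separated points that fit in the complement. The only differences are bookkeeping: the paper packs the disjoint intervals $[a,a+t[$ for $a\in S_{n,t}$ minus one exceptional element into $\{0,\ldots,n-1\}\setminus[f_S(n),n-f_S(n)]$ and reads off $t(\g_S(n,t)-1)\le 2f_S(n)$ in one step, whereas you split on $2f_S(n)$ versus $n$ and count points in the two side intervals separately; both yield the stated bound.
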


\begin{proof}
Suppose $t>\oo_S(n)$.  Let $S_{n,t}$ be as above and
$\Delta=[f_S(n),n-f_S(n)]_\NN$.
By definition of $\oo_S$, it holds that $S$ is periodic
on the interval~$[f_S(n)-\oo_S(n),n-f_S(n)+\oo_S(n)]$
with period $\oo_S(n)$, 
so for all $a\in S\cap\Delta$ we have $a+\oo_S(n)\in S$
and thus $\delta_S(a)\le\oo_S(n)<t$, which implies $a\not\in S_{n,t}$.
Hence, $\Delta\cap S_{n,t}=\emptyset$.  Moreover, 
${[}a,a+t{[}_\NN\cap\Delta=\emptyset$ for $a\in S_{n,t}$ except
for one possible exception which, if it exists, is the unique~$a\in S_{n,t}$
for which $f_S(n)\in{[}a,a+t{[}_\NN$.  Let $s$ be this unique
exception if it exists, otherwise pick any~$s$.
Because the union 
$A=\bigcup_{a\in S_{n,t}\smallsetminus\joukko{s}}{[}a,a+t{[}_\NN$
is disjoint and included in $\joukko{0,\ldots,n-1}\smallsetminus\Delta$,
we get 
$t(\g_S(n,t)-1)=|A|
\le |\joukko{0,\ldots,n-1}\smallsetminus\Delta|\le 2f_S(n)$.
\end{proof}

\begin{lemma} \label{estim-foo}
Let $S\osaj\NN$, $w\in\joukko{0,1}^s$ and $T=T^{\kh_S}_w$.
Then for every $n\in\NN$ we have $f_T(n)\le f_S(n)+s$.
Moreover, $\oo_T(n)\le\oo_S(n)$ provided that $3f_S(n)+3s\le n$.  
\end{lemma}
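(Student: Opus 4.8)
My whole plan hinges on one elementary transfer principle: membership $m\in T$ for $T=T^{\kh_S}_w$ depends only on the length-$s$ block $\kh_S(m),\kh_S(m+1),\dots,\kh_S(m+s-1)$. Hence if $S$ is periodic on an interval $[a,b]_\NN$ with period $\oo$, then for every $x$ with $[x,x+\oo+s-1]_\NN\osaj[a,b]_\NN$ we have $\kh_S(x+i)=\kh_S(x+\oo+i)$ for all $i<s$, and therefore $x\in T\iff x+\oo\in T$. I would record this as a preliminary claim: under that hypothesis $T$ is periodic with the \emph{same} period $\oo$ on $[a,b-s+1]_\NN$, hence also on the subinterval $[a+s-1,b-s+1]_\NN$, which is symmetric about the midpoint $(a+b)/2$ of the original interval. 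This single claim is the only bridge I need between the data of $S$ and the data of $T$.

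The bound $f_T(n)\le f_S(n)+s$ then falls out by feeding the defining periodicity of $S$ into this claim. Writing $\ell=f_S(n)$ and $p=\oo_S(n)$, the set $S$ is periodic with period $p$ on $[\ell-p,\,n-\ell+p]_\NN$; the transfer claim makes $T$ periodic with period $p$ on the symmetric interval $[\ell-p+s-1,\,n-(\ell-p+s-1)]_\NN$. This is a legal interval for the definition of $f_T$ whose ``border plus period'' equals $\ell+s-1$, so $f_T(n)\le f_S(n)+s-1\le f_S(n)+s$ immediately.

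The substantive part is $\oo_T(n)\le\oo_S(n)$, and the obstacle is that $f_T(n)$ may be \emph{strictly} below $f_S(n)+s-1$: the transfer claim only delivers period $p:=\oo_S(n)$ for $T$ on the relatively short interval $I_A=[\ell-p+s-1,\,n-\ell+p-s+1]_\NN$, whereas $\oo_T(n)$ is defined relative to the possibly much longer $J(p)=[f_T(n)-p,\,n-f_T(n)+p]_\NN$, so one cannot simply restrict. My plan is to split on the size of $\oo_T(n)$. If $\oo_T(n)\le p$ we are done, so suppose $q:=\oo_T(n)>p$. By definition $T$ is periodic with period $q$ on $I_B=[f_T(n)-q,\,n-f_T(n)+q]_\NN$, and because $q>p$ the interval $I_B$ contains both $I_A$ and $J(p)$. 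I would then \emph{transport} the period $p$ from $I_A$ across $J(p)$: given $x,x+p\in J(p)$, I pick an integer $k$ with $x+kq$ and $x+p+kq$ both in $I_A$, move from $x$ (resp.\ $x+p$) to $x+kq$ (resp.\ $x+p+kq$) using period $q$ inside $I_B$, and cross the gap of length $p$ using period $p$ inside $I_A$; this gives $\kh_T(x)=\kh_T(x+p)$. Hence $T$ is periodic with period $p$ on $J(p)$, so $\oo_T(n)\le p$, contradicting $q>p$.

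The only quantitative point to verify is that such a shift $k$ exists, which requires $I_A$ to be long enough that the real interval available for $kq$ has length at least $q$, i.e.\ $|I_A|\ge p+q+1$. This is exactly where the hypothesis $3f_S(n)+3s\le n$ is consumed: with $|I_A|=n-2\ell+2p-2s+3$ and $q\le f_T(n)\le\ell+s$ from the first part, one gets $|I_A|-(p+q+1)\ge n-3\ell-3s+p+2\ge p+2>0$. I expect this length estimate, together with the routine check that $I_A,J(p)\osaj I_B$ when $q>p$, to be the only delicate steps; the rest is direct substitution into the definitions of $f_S,\oo_S,f_T,\oo_T$.
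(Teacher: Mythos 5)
Your proof is correct and follows essentially the same route as the paper's: transfer the period of $S$ to $T$ on an interval shrunk by roughly $s$ at each end to get $f_T(n)\le f_S(n)+s$, then, assuming $\oo_T(n)>\oo_S(n)$, use the period $\oo_T(n)$ on the larger interval to shift any pair $x,x+\oo_S(n)$ into the interval where $T$ is already known to have period $\oo_S(n)$, contradicting the minimality of $\oo_T(n)$. The only differences are bookkeeping (finding a multiple of $q$ in a real interval versus a representative of $x$ modulo $\oo_T(n)$, and establishing the period on $J(p)$ rather than on all of $\Delta^*$), and your length estimates consume the hypothesis $3f_S(n)+3s\le n$ exactly as the paper does.
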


\begin{proof}
For the first inequality, we may assume that 
$f_S(n)+s\le\katto{\frac{n}{2}}$ , but then it is obvious that
the periodicity of~$S$ on $[f_S(n)-\oo_S(n),n-f_S(n)+\oo_S(n)]_\NN$ implies
the periodicity of~$T$ on 
$\Delta=[f_S(n)+s-\oo_S(n),n-f_S(n)-s+\oo_S(n)]_\NN$, 
both with period~$\oo_S(n)$, and the inequality follows.

For the second inequality, suppose towards contradiction that
$\oo_T(n)>\oo_S(n)$, but $3f_S(n)+3s\le n$.
Besides $\Delta$ as above, consider 
$\Delta^*=[f_T(n)-\oo_T(n),n-f_T(n)+\oo_T(n)]_\NN$.
We already know that $T$ is periodic on~$\Delta$ with period~$\oo_S(n)$,
so the reason for $\oo_T(n)>\oo_S(n)$ must be that 
$\Delta\subsetneq\Delta^*$ and on $\Delta^*$, $T$ has not got
period $\oo_S(n)$.  However, 
$|\Delta|=n-2f_S(n)-2s+2\oo_S(n)+1\ge f_S(n)+s+2\oo_S(n)+1
>f_T(n)+\oo_S(n)\ge\oo_T(n)+\oo_S(n)$,
so for $x,x+\oo_S(n)\in\Delta^*$, we may pick $y\in\Delta$
such that also $y+\oo_S(n)\in\Delta$ and $y\equiv x\Pmod{\oo_T(n)}$.
As $T$ has period $\oo_T(n)$ on $\Delta^*$ and period $\oo_S(n)$
on $\Delta$, we get $x\in T$ iff $y\in T$ iff $y+\oo_S(n)\in T$
iff $x+\oo_S(n)\in T$.  Hence, $T$ has period~$\oo_S(n)$ on~$\Delta^*$
in contradiction with the minimality of $\oo_T(n)$.
\end{proof}

\begin{proposition}
If $S$ is pseudoloose, then $\FO_\le(\kI)\le\FO_\le(\kC_S)$.  
\end{proposition}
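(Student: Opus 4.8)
The plan is to reduce the statement to the non-periodicity condition~\ref{NPC} of Theorem~\ref{Idef}, so that the conclusion $\FO_\le(\kI)\le\FO_\le(\kC_S)$ follows at once by invoking that theorem. Concretely, I would fix an $\e>0$ witnessing pseudolooseness and exhibit $k,\ell\in\NN$ with $k\cdot f_S(n)\cdot\oo_S(n)^\ell\ge n$ for almost all~$n$. So let $n$ be large and unwind Definition~\ref{loose}: there is a word $w\in\joukko{0,1}^s$ with $s\le n^{1-\e}$ such that $T=T^{\kh_S}_w$ is $\e$-loose relative to~$n$, i.e.\ there is $t$ with $n^\e\le t\le n^{1-\e}$ and $t\,\g_T(n,t)\ge\e n$. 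Both $w$ and $T$ depend on~$n$, but all further estimates are pointwise in~$n$, so this causes no trouble. Since $\g_T(n,t)\ge\e n/t\ge\e n^\e$, we have $\g_T(n,t)\ge 2$ for large~$n$, and I would then apply Lemma~\ref{estim-goof} to~$T$ (its proof uses only that $f_T(n)$ and $\oo_T(n)$ are well defined, which holds for any subset of~$\NN$, so the infiniteness hypothesis is harmless here).

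By Lemma~\ref{estim-goof}, either $t(\g_T(n,t)-1)\le 2f_T(n)$ or $t\le\oo_T(n)$, and I would treat these alternatives separately. In the first alternative, $\g_T(n,t)-1\ge\tfrac12\g_T(n,t)$ gives $2f_T(n)\ge\tfrac12 t\,\g_T(n,t)\ge\tfrac{\e}{2}n$, hence $f_T(n)\ge\tfrac{\e}{4}n$. Transferring to~$S$ through the first inequality of Lemma~\ref{estim-foo}, namely $f_S(n)\ge f_T(n)-s\ge\tfrac{\e}{4}n-n^{1-\e}$, I obtain $f_S(n)\ge\tfrac{\e}{8}n$ for all large~$n$, a linear lower bound on~$f_S$.

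In the second alternative I get $\oo_T(n)\ge t\ge n^\e$, and I would push this to a lower bound on~$\oo_S(n)$ via the second clause of Lemma~\ref{estim-foo}, which carries the proviso $3f_S(n)+3s\le n$. To license it, I would split once more on the size of~$f_S(n)$: if $f_S(n)\ge n/6$, then~$S$ already has the linear bound on~$f_S$ and nothing more is needed; otherwise $3f_S(n)<n/2$, and since $3s\le 3n^{1-\e}<n/2$ for large~$n$, the proviso holds, so Lemma~\ref{estim-foo} yields $\oo_S(n)\ge\oo_T(n)\ge n^\e$.

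Collecting the cases, for almost all~$n$ either $f_S(n)\ge c\,n$ with $c=\min\joukko{\e/8,\,1/6}$, or $\oo_S(n)\ge n^\e$; moreover $f_S(n)\ge 1$ and $\oo_S(n)\ge 1$ by the definitions of $f_S$ and~$\oo_S$. Choosing $\ell=\katto{1/\e}$ and $k=\max\joukko{1,\katto{1/c}}$ then gives $k\,f_S(n)\,\oo_S(n)^\ell\ge n$ in both situations: in the first via $k\,f_S(n)\ge kcn\ge n$, in the second via $\oo_S(n)^\ell\ge n^{\e\ell}\ge n$. This is exactly condition~\ref{NPC}, and Theorem~\ref{Idef} completes the argument. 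I expect the main obstacle to be the bookkeeping around the proviso of Lemma~\ref{estim-foo}: one must guarantee that whenever the $\oo_T$-branch is used, $f_S(n)$ is small enough for the transfer $\oo_T(n)\le\oo_S(n)$ to be justified, which is precisely why the extra split on $f_S(n)\ge n/6$ is inserted rather than appealing to the lemma unconditionally.
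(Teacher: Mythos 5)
Your proof is correct and follows essentially the same route as the paper's: verifying condition~\ref{NPC} of Theorem~\ref{Idef} by applying Lemma~\ref{estim-goof} to the shifted set $T=T^{\kh_S}_w$ and transferring the resulting lower bounds on $f_T(n)$ and $\oo_T(n)$ back to $f_S(n)$ and $\oo_S(n)$ via Lemma~\ref{estim-foo}. The only differences are cosmetic bookkeeping --- the paper handles the proviso $3f_S(n)+3s\le n$ by noting that its failure itself forces $f_S(n)>\frac13 n-s$, where you instead split in advance on whether $f_S(n)\ge n/6$ --- and the resulting constants differ slightly but harmlessly.
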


\begin{proof} 
We need to verify the non-periodicity condition of Theorem~\ref{Idef}.
By definition of pseudolooseness, there exists $\e>0$ such that
for almost all $n\in\ZZ_+$ there is $w\in\joukko{0,1}^s$
with $s\le n^{1-\e}$ such that $T^{\kh_S}_w$ is $\e$-loose
relative to~$n$.  Fix $n$ for a moment and put $T=T^{\kh_S}_w$.
Choose $t\in\ZZ_+$ such that $t\g_T(n,t)\ge\e n$ and
$n^\e\le t\le n^{1-\e}$.   By Lemma~\ref{estim-goof},
we get $f_T(n)\ge\frac12 t(\g_T(n,t)-1)\ge \frac12(\e n-n^{1-\e})$
or $\oo_T(n)\ge t\ge n^\e$.  In the former case, we have
$f_S(n)\ge f_T(n)-s\ge \frac12\e n-\frac32 n^{1-\e}$, by 
Lemma~\ref{estim-foo}.  In the latter case, we have
$\oo_S(n)\ge\oo_T(n)\ge n^\e$ or $f_S(n)>\frac13 n-s\ge\frac13 n-n^{1-\e}$,
again by Lemma~\ref{estim-foo}.

These estimates imply that for almost all $n\in\NN$, we have
$f_S(n)\ge\frac14\e n$ or $\oo_S(n)\ge n^\e$.
Choosing $l\in\ZZ_+$ with $l\e\ge 1$ and $k=4l$
we get 
$$
  kf_S(n)\oo_S(n)^l\ge\max\joukko{4lf_S(n),\oo_S(n)^l}
  \ge\min\joukko{(l\e)n,n^{l\e}}\ge n,
$$
for almost all $n\in\ZZ_+$.
Hence, the nonperiodicity condition of Theorem~\ref{Idef}
follows.  
\end{proof}

\begin{theorem} \label{tcchar}
Let $S\osaj\NN$ be a pseudoloose set. Then
   \[
\FO_{\le}(\kC_S)\ge\TC.
   \]  
\end{theorem}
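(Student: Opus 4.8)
The plan is to combine the two pieces of machinery already assembled: an equicardinality part, which follows from the proposition immediately preceding this theorem, and a multiplicative part, which is supplied by Proposition~\ref{mulext}. Recall from the complexity subsection that $\TC\equiv\FO_{\{\le,\times\}}(\kI)$, so it suffices to show that both the quantifier $\kI$ and the built-in multiplication relation (the graph of multiplication restricted to the ordered domain $\{0,\ldots,n-1\}$) are definable in $\FO_\le(\kC_S)$. The quantifier $\kI$ is immediate: since $S$ is pseudoloose, the preceding proposition gives $\FO_\le(\kI)\le\FO_\le(\kC_S)$, and then addition $+$ is definable by H\"artig's trick~\eqref{Hartig's trick}. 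Thus the only real work is to define multiplication.

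For multiplication I would produce, inside $\FO_\le(\kC_S)$, a partial model of arithmetic $\mN$ on the current domain whose addition relation $A$ is the definable relation $+$ and whose multiplication relation $M$ is a partial multiplication large enough to trigger Proposition~\ref{mulext}; the formula $\pi$ furnished by that proposition is first-order over $\{A,M\}$, so substituting the $\FO_\le(\kC_S)$-definitions of $A$ and $M$ yields an $\FO_\le(\kC_S)$-definition of full multiplication. The definition of $M$ is where pseudolooseness enters. Fix $\e>0$ as in Definition~\ref{loose}; for almost all $n$ there are a short word $w$ (with $s\le n^{1-\e}$) and a threshold $t$ with $n^\e\le t\le n^{1-\e}$ such that $T=T^{\kh_S}_w$ is $\e$-loose relative to $n$, i.e.\ the set of isolated marks $T_{n,t}$ has $t\,\g_T(n,t)\ge\e n$ elements, each followed by a gap of length at least $t$ and lying below $n-t$. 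All of $w$, $t$, the predicate ``$x\in T_{n,t}$'', and the ordering of the marks are $\FO_\le(\kC_S)$-definable: membership $j\in S$ is tested by applying $\kC_S$ to an interval of length $j$, and counting of marks is available through $\kI$, so these parameters can be named by quantifying inside the domain.

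The partial multiplication itself realizes the combinatorial identity $|\bigcup_{i\in I}A_i|=|I|\,|A_{i_0}|$ for disjoint equal-sized families. Enumerate the marks of $T_{n,t}$ as $p_1<p_2<\cdots<p_g$ with $g=\g_T(n,t)$. For $a\le t$ the intervals $[p_j,p_j+a[_{\NN}$, $j=1,\ldots,b$, are pairwise disjoint (each gap is at least $t\ge a$) and contained in the domain (since $p_j+t<n$), so their union has exactly $a\cdot b$ elements. I would therefore put $(a,b,c)\in M$ precisely when $a\le t$, $b\le g$ and $c$ equals the cardinality, counted by $\kI$, of this definable union; then $M$ is a genuine partial multiplication and satisfies $\g(M,t)\ge g$. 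Taking $a^*=t$ gives $n^\e\le a^*\le n^{1-\e}$ and $k\,a^*\,\g(M,a^*)\ge k\,t\,g\ge k\e n\ge n$ once $k$ is chosen with $k\e\ge 1$ and $1/k\le\e$, so the hypotheses of Proposition~\ref{mulext} hold for all sufficiently large $n$. The finitely many remaining universe sizes are handled by a finite disjunction that hard-codes multiplication, the size $|\Dom(\mN)|$ being $\FO_\le(\kI)$-expressible. Assembling the pieces, $\FO_{\{\le,\times\}}(\kI)\le\FO_\le(\kC_S)$, hence $\FO_\le(\kC_S)\ge\TC$.

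The main obstacle is the construction in the third paragraph: one must check that the loose structure can be named inside the logic uniformly in $n$ and that the resulting partial table $\{(a,b,ab):a\le t,\ b\le g\}\subseteq M$ is genuinely rectangular of area $\ge\e n$, so that the control function $\g(M,\cdot)$ from Proposition~\ref{mulext} actually gets off the ground. Once that table is in place, the heavy lifting of turning it into full multiplication has already been carried out in Lemma~\ref{mulextstep} and Proposition~\ref{mulext}, so no further calculation is needed here.
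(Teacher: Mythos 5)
Your proposal is correct and follows essentially the same route as the paper's own proof: the same reduction to defining $\kI$, addition and multiplication, the same interval-union partial multiplication anchored at the marks of $T^{\kh_S}_w$ realizing $|\bigcup_i A_i|=|I|\,|A_{i_0}|$, and the same appeal to Proposition~\ref{mulext} with a finite patch for small universes. The only (harmless) difference is that the paper truncates $T_{n,t}$ to $\{0,\ldots,n-s-t\}$ so that the mark set is genuinely definable near the boundary, paying an additive $s/t+2$ in the count of marks, and it chooses $k$ with $k\e>1$ strictly so that $t\le n^{1-1/k}/k$ holds for large $n$; both points are glossed in your write-up but do not affect the asymptotics.
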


\begin{proof}
With eye on the characterization~$\FO_{\{+,\times\}}(\kI)\equiv\TC$,
we need to show that the H\"artig quantifier, addition and
multiplication are definable in $\FO_\le(\kC_S)$.
The definability of~$\kI$ and addition follows from
the preceding proposition and H\"artig's observation (see \eqref{Hartig's trick}).
It remains to be shown that multiplication is definable.

We proceed rather informally.  Let $\mN$ be an ordered 
$\joukko{\le}$-structure with $n=\card(\mN)$.  For notational
simplicity, we assume
that $\Dom(\mN)=\joukko{0,\ldots,n-1}$ and $\le^\mN$ is
the natural order.  Let $\alpha(x,y,z)$ be a formula in $\FO_\le(\kC_S)$
defining addition (restricted to the appropriate domain)
on ordered structures.  Write $S(x)$ for $\kC_S t(t<x)$;
then $S^\mN=S\cap\joukko{0,\ldots,n-1}$.  Similarly, there is a formula
$\tau(x,\vq)$ of $\FO_\le(\kC_S)$ such that if $w\in\joukko{0,1}^s$,
then $\tau$ defines $T^{\kh_S}_w\cap\joukko{0,\ldots,n-s-1}$
with parameters.  The parameters~$\vq$ simply point to
some occurrence of $w$ if such exists, and addition is used
to find other occurrences.  

Given $t\in\ZZ_+$, consider now $T=T^{\kh_S}_w$ and 
$T_{n,t}=\joukko{m\in T \mid \delta_T(m)\ge t,m+t<n}$.
It is straightforward to write $\theta(x,t,\vp)$ of $\FO_\le(\kC_S)$
such that with appropriate parameters (interpretations of $t$ and $\vp$), 
$\theta$ defines $T'=T_{n,t}\cap\joukko{0,\ldots,n-s-t}$. Finally,
we write a formula $\nu(x,y,z)$ of $\FO_\le(\kC_S)$ defining a partial
multiplication by expressing the following:  Let $a,b,c\in\Dom(\mN)$.
Then $\mN\models\nu[a/k,b/y,c/z]$ iff there are parameters
such that $T'$ as above has an initial segment~$T''$ of size $|T''|=b$,
the union $C=\bigcup_{d\in T''}{[}d,d+a-1{[}_\NN$ is disjoint
and $c=|C|$.  Note that here we use, again, definability of addition
and $\kI$ for defining the appropriate intervals and for comparing
sizes.
 
Next, we estimate the extent of the partial multiplication $\nu^\mN$.
Given $t\in\ZZ_+$, we note that we may compute the product $ab$
where $a,b\in\Dom(\mN)$  with the idea presented above 
provided that $a<t$ and $b\le|T'|$.
Therefore
   \[
\g(\nu^\mN,t)\ge |T'|\ge |T_{n,t}|-\left(\frac{s+t}{t}+1\right)
=\g_T(n,t)-s/t-2.
   \]
As $S$ is pseudoloose,  there is $\e>0$ such that for almost all
$n\in\ZZ_+$ we can choose parameters $s,t\in\ZZ_+$ so that for
$\joukko{\le}$-structure~$\mN$ with $n=\card(\mN)$,
   \[
\g(\nu^\mN,t)\ge\g_T(n,t)-s/t-2
\ge \frac{\e n-n^{1-\e}}{t}-2
\ge \frac{\e n-2n^{1-\e}}{t}
   \]
and $n^\e\le t\le n^{1-\e}$.  Fixing any $k\in\ZZ_+$ with $k\e>1$,
this means that for almost all $n\in\ZZ_+$ there is $t\in\ZZ_+$
such that $n^{1/k}\le t\le n^{1-1/k}/k$ and $kt\g(\nu^\mN,t)\ge n$.
In other words, the assumptions of Proposition~\ref{mulext} are
satisfied for $\nu^\mN$ as the partial multiplication relation. 
Consequently, the formula $\pi^*=\pi(\alpha/A,\nu/M)$
of $\FO_\le(\kC_S)$ defines multiplication in sufficiently large
ordered structures.  Once we fix the finitely many exceptions
of small structures, we are done.
\end{proof}

\subsection*{Analyzing pseudolooseness}

We end this section by working out some concrete examples of our theory.
We first study what are the relevant features of polynomials of
degree at least two so that their ranges are loose sets.
We then show also that looseness is a robust notion in the sense that
it is preserved under quite substantial perturbations of the set.

For clarity of exposition, some elementary analysis is employed.
Recall that a mapping $f\colon I\to\RR$ with $I\osaj\RR$ an interval
is \emph{convex}, if for all $a,b\in I$ and $t\in\av{0,1}$
we have that $f(ta+(1-t)b)\le tf(a)+(1-t)f(b)$.
To facilitate the computation of definite integrals, we write
$\pri{a}{b}F(x)$ for $F(b)-F(a)$.

We say that a statement holds \emph{for sufficiently large~$x$}, 
if there is $M\in\RR$ such that it holds for all $x>M$.
Let $\F$ be the class of strictly increasing and convex mappings
$f\colon \sav{0,\infty}\to\sav{0,\infty}$ with a continuous
derivative on $\av{0,\infty}$ satisfying the following condition:
There are $\alpha,\beta>1$ such that for sufficiently large~$x$
we have 
\begin{equation} \label{alfabeeta}
\alpha\le\frac{xf'(x)}{f(x)}\le\beta.  
\end{equation}

\begin{lemma}
Let $f\in\F$ and $\alpha,\beta>1$ be as in the Inequality~\ref{alfabeeta}.
Then there are positive constants $c_0$, $c_1$, $c_2$  and $c_3$
such that for sufficiently large~$x$ the following hold:

\begin{enumerate}[(a)]
\item
$c_0x^\alpha\le f(x)\le c_1x^\beta$.

\item
There is $\xi\in\av{0,x}$ such that $(x-\xi)f'(\xi)\ge\frac{f(x)}{\beta+1}$.

\item
The number~$\xi$ above can be chosen so that 
$c_2f(x)^{1-1/\alpha}\le f'(\xi)\le c_3f(x)^{1-1/\beta}$.

\end{enumerate}

\end{lemma}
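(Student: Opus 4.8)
The plan is to treat the three claims in turn, reading \eqref{alfabeeta} throughout as a differential inequality for $\ln f$. Fix a threshold $x_0$ beyond which \eqref{alfabeeta} holds and $f>0$ (both are guaranteed for large argument since $f$ is strictly increasing into $[0,\infty)$ and \eqref{alfabeeta} forces growth).

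For part (a), I would rewrite \eqref{alfabeeta} as $\frac{\alpha}{t}\le\frac{d}{dt}\ln f(t)\le\frac{\beta}{t}$ and integrate from $x_0$ to $x$. Exponentiating gives $(x/x_0)^{\alpha}\le f(x)/f(x_0)\le(x/x_0)^{\beta}$, so (a) holds with $c_0=f(x_0)x_0^{-\alpha}$ and $c_1=f(x_0)x_0^{-\beta}$. A byproduct I will reuse is that $f(t)/t^{\beta}$ is nonincreasing, hence $f(\theta x)\ge\theta^{\beta}f(x)$ for $\theta x\ge x_0$.

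For part (b), the factor $\beta+1$ should come from integration by parts. I would record the identity $\int_{x_0}^{x}(x-t)f'(t)\,dt=\int_{x_0}^{x}f(t)\,dt-(x-x_0)f(x_0)$, and separately feed the upper bound $tf'(t)\le\beta f(t)$ into $\int_{x_0}^{x}tf'(t)\,dt=xf(x)-x_0f(x_0)-\int_{x_0}^{x}f(t)\,dt$ to obtain $\int_{x_0}^{x}f(t)\,dt\ge\frac{xf(x)-x_0f(x_0)}{\beta+1}$. Combining the two shows that the average of the nonnegative function $g(t)=(x-t)f'(t)$ over $[x_0,x]$ is at least $\frac{f(x)}{\beta+1}-O(1)$, and by the mean value theorem for integrals some $\xi$ attains at least this average. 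The main obstacle is precisely this extraction: the averaging argument yields $\frac{f(x)}{\beta+1}$ only up to a bounded additive error (from $f(x_0)$ and from the cutoff $x_0$), so to reach the clean constant I would use that $g$ is continuous with $g(x)=0$ while $g(\theta x)\ge\alpha(1-\theta)\theta^{\beta-1}f(x)$ for the choice $\theta=(\beta-1)/\beta$, giving $\max_t g(t)\ge c\,f(x)$ for a fixed $c>0$. Thus the maximum of $g$ exceeds its average by a quantity of order $f(x)$, which dominates the $O(1)$ deficit once $x$ is large, and the maximiser witnesses (b).

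For part (c), I would keep the same $\xi$. Since $f$ is convex, $f'$ is nondecreasing, so $f'(\xi)\le f'(x)\le\frac{\beta f(x)}{x}$; substituting $x\ge(f(x)/c_1)^{1/\beta}$ from (a) gives the upper bound $f'(\xi)\le\beta c_1^{1/\beta}f(x)^{1-1/\beta}$. For the lower bound, (b) yields $f'(\xi)\ge\frac{f(x)}{(\beta+1)(x-\xi)}\ge\frac{f(x)}{(\beta+1)x}$, and substituting $x\le(f(x)/c_0)^{1/\alpha}$ gives $f'(\xi)\ge\frac{c_0^{1/\alpha}}{\beta+1}f(x)^{1-1/\alpha}$. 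Hence $c_2=c_0^{1/\alpha}/(\beta+1)$ and $c_3=\beta c_1^{1/\beta}$ work, and no relocation of $\xi$ is required; the inequality $\alpha\le\beta$ ensures the two exponents are correctly ordered.
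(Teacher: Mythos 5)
Parts (a) and (c) of your proposal are correct and essentially identical to the paper's argument (the paper bounds $f'(\xi)\le\beta f(\xi)/\xi$ directly instead of going through $f'(\xi)\le f'(x)$, which changes nothing). The genuine problem is in part (b), in the step where you try to recover the clean constant $\tfrac{1}{\beta+1}$. You correctly show that the average of $g(t)=(x-t)f'(t)$ over $[x_0,x]$ is at least $\tfrac{f(x)}{\beta+1}$ minus a bounded error, and you correctly establish the pointwise bound $g(\theta x)\ge\alpha(1-\theta)\theta^{\beta-1}f(x)$. But the inference ``$\max_t g(t)\ge c\,f(x)$, therefore the maximum of $g$ exceeds its \emph{average} by a quantity of order $f(x)$'' is a non sequitur: a lower bound on the maximum gives no information about the gap between the maximum and the average. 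Nor does the pointwise bound close the deficit by itself: with your (optimal) choice $\theta=(\beta-1)/\beta$ one gets $c=\frac{\alpha}{\beta}\bigl(\frac{\beta-1}{\beta}\bigr)^{\beta-1}$, and for $\alpha$ near $1$ and $\beta$ large this is roughly $\frac{1}{e\beta}<\frac{1}{\beta+1}$ (e.g.\ $\alpha=1.1$, $\beta=10$ gives $c\approx 0.043<\frac{1}{11}$, a pair of constants the lemma must accommodate, say for $f(x)=x^2$). So as written, neither ingredient nor their combination yields $\max_t g(t)\ge\frac{f(x)}{\beta+1}$.

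The repair is already latent in your own averaging computation, and it is what the paper does. The average you obtain is $\frac{xf(x)}{(\beta+1)(x-x_0)}-O(1)$, and the factor $\frac{x}{x-x_0}>1$ hides a surplus $\frac{x_0f(x)}{(\beta+1)(x-x_0)}$, which tends to infinity because $f(x)/x\ge c_0x^{\alpha-1}\to\infty$ by part (a) (this is where $\alpha>1$ is essential). That surplus absorbs every bounded error term, so the average itself is at least $\frac{f(x)}{\beta+1}$ for sufficiently large $x$, and the mean value theorem for integrals then produces the desired $\xi$. Concretely, the paper observes that $Mf(x)\ge(\beta+1)xf(M)$ for large $x$, whence
\[
\int^x_M (x-t)f'(t)\dt \ge \frac{xf(x)}{\beta+1}-xf(M)
\ge \frac{xf(x)-Mf(x)}{\beta+1}=(x-M)\frac{f(x)}{\beta+1},
\]
with no need for the separate pointwise estimate at $\theta x$.
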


\begin{proof}
\begin{enumerate}[(a)]
\item
Suppose Inequality~\ref{alfabeeta} holds for all $x\ge M>0$.
Then for all $x\ge M$ we have
\begin{align*}
  & \int^x_M \frac{\alpha}{t}\dt
     \le \int^x_M \frac{f'(t)}{f(t)}\dt
     \le  \int^x_M \frac{\beta}{t}\dt\\
\intertext{giving}
  &\pri{M}{x} \alpha\ln t
     \le \pri{M}{x} \ln f(t)
     \le \pri{M}{x} \beta\ln t\\
\intertext{and}
  &\ln \frac{x^\alpha}{M^\alpha}=\alpha \ln x-\alpha \ln M
      \le \ln \frac{f(x)}{f(M)}   
      \le \ln \frac{x^\beta}{M^\beta}.
\end{align*}
Since the natural logarithm is strictly increasing, we get
$c_0x^\alpha\le f(x)\le c_1x^\beta$ with
positive $c_0=f(M)/M^\alpha$ and $c_1=f(M)/M^\beta$.

\item
Consider $I=\int^x_M tf'(t)\dt$ where $x\ge M$.
  On the one hand, integrating by parts we get
   \[
I=\pri{M}{x}tf(t)-\int^x_M f(t)\dt
=xf(x)-Mf(M)-\int^x_M f(t)\dt.
   \] 
On the other hand, the inequality~\ref{alfabeeta} implies
   \[
I\le\int^x_M \beta f(t)\dt=\beta\int^x_M f(t)\dt.
   \]
Combining these we get $(1+1/\beta)I\le xf(x)-Mf(M)\le xf(x)$.
Hence, $I\le \frac{xf(x)}{1+1/\beta}$.  Consequently,
\begin{align*}
   &\int^x_M (x-t)f'(t)\dt=  x\int^x_M f'(t)\dt - I\\ 
   &\ge xf(x)-xf(M)-\frac{xf(x)}{1+1/\beta}
    =\frac{xf(x)}{\beta+1}-xf(M).
\end{align*}
By case~a, say, we have $Mf(x)\ge (\beta+1)xf(M)$
for sufficiently large~$x$.  That implies
   \[
\int^x_M (x-t)f'(t)\dt \ge \frac{xf(x)}{\beta+1}-xf(M)
\ge\frac{xf(x)-Mf(x)}{\beta+1}
=(x-M)\frac{f(x)}{\beta+1}.
   \]
By the intermediate value theorem for integrals
(note that the integrand is continuous) there is $\xi\in\av{M,x}$
such that
   \[
(x-\xi)f'(x)=\frac{1}{x-M}\int^x_M (x-t)f'(t)\dt\ge\frac{f(x)}{\beta+1}.
   \]

\item  
The number~$\xi$ above is known to satisfy $\xi>M>0$ so that
   \[
f'(\xi)\le\frac{\beta f(\xi)}{\xi} 
\le \beta {c_1}^{1/\beta}\frac{f(\xi)}{f(\xi)^{1/\beta}}
=c_3 f(\xi)^{1-1/\beta}
\le c_3 f(x)^{1-1/\beta}
   \]
by Inequality~\ref{alfabeeta} and case~a 
where $c_3=\beta {c_1}^{1/\beta}$.
Furthermore, by cases~a and~b
   \[
f'(\xi)\ge\frac{1}{\beta+1}\,\frac{f(x)}{x-\xi}
\ge\frac{1}{\beta+1}\,\frac{f(x)}{x}
\ge c_2 f(x)^{1-1/\alpha}
   \]
with $c_2={c_0}^{1/\alpha}/(\beta+1)>0$.
\end{enumerate}
\end{proof}

\begin{proposition}
For $f\in\F$, the set $S=\joukko{\lattia{f(x)} \mid x\in\NN}\osaj\NN$
is loose.  
\end{proposition}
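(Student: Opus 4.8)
The plan is to read off looseness directly from the three estimates of the preceding lemma, which were manufactured for exactly this situation. Recall (Definition~\ref{loose}) that it suffices to produce a single $\e>0$ such that for almost all $n$ there is $t\in\ZZ_+$ with $t\,\g_S(n,t)\ge\e n$ and $n^\e\le t\le n^{1-\e}$. Fix $f\in\F$ with $\alpha,\beta>1$ and the constants $c_0,\dots,c_3$ supplied by the lemma. Given a large $n$, set $x=f^{-1}(n)$, so that $f(x)=n$, and apply the lemma at this point~$x$ (legitimate for all large $n$, since $x\to\infty$ as $n\to\infty$) to obtain $\xi\in\av{0,x}$ with $(x-\xi)f'(\xi)\ge\frac{f(x)}{\beta+1}=\frac{n}{\beta+1}$ and $c_2 n^{1-1/\alpha}\le f'(\xi)\le c_3 n^{1-1/\beta}$. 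I then take the candidate witness $t=\lattia{f'(\xi)}-1\in\ZZ_+$; the point of this choice is that $t$ is essentially the local gap size of $S$ near $f(\xi)$, while $x-\xi$ will be essentially the number of elements of $S$ realizing that gap.

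Two verifications remain. First, the range $n^\e\le t\le n^{1-\e}$: since $f(x)=n$, the bounds on $f'(\xi)$ give $t\ge\frac{c_2}{2}n^{1-1/\alpha}$ and $t\le c_3 n^{1-1/\beta}$ for large $n$, and because $\alpha,\beta>1$ we have $0<1-1/\alpha$ and $1-1/\beta<1$, so any $\e$ with $\e<1-1/\alpha$ and $\e<1/\beta$ makes this sandwich hold for all large $n$ (the constants are absorbed). Second, I bound $\g_S(n,t)$ from below. Writing the elements of $S$ as the floors $\lattia{f(j)}$, $j\in\NN$, convexity gives $f(j+1)-f(j)\ge f'(j)\ge f'(\xi)$ for every integer $j\ge\xi$, so these floors are strictly increasing and the gap $\delta_S(\lattia{f(j)})=\lattia{f(j+1)}-\lattia{f(j)}\ge f'(\xi)-1\ge t$. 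Moreover $n-f(\xi)=\int_\xi^x f'(u)\,du\ge(x-\xi)f'(\xi)\ge\frac{n}{\beta+1}$ forces $f(\xi)\le n-t$ for large $n$, and at most $O(t/f'(\xi))=O(1)$ of the integers $j\in\sav{\xi,x}$ fail the side condition $\lattia{f(j)}+t<n$. Hence $\g_S(n,t)\ge(x-\xi)-O(1)$, and multiplying,
   \[
t\,\g_S(n,t)\ge\tfrac12 f'(\xi)\bigl((x-\xi)-O(1)\bigr)
\ge\tfrac12 f'(\xi)(x-\xi)-O(n^{1-1/\beta})
\ge\frac{n}{2(\beta+1)}-o(n),
   \]
so $t\,\g_S(n,t)\ge\e n$ holds for all large $n$ once $\e\le\frac{1}{3(\beta+1)}$.

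Choosing $\e$ to be the minimum of the finitely many thresholds above then witnesses that $S$ is $\e$-loose relative to~$n$ for almost all $n$, i.e.\ that $S$ is loose; note that shrinking $\e$ only relaxes both defining inequalities, so a single small $\e$ serves. I expect the only real work to be the passage from the smooth function $f$ to the discrete set $S=\joukko{\lattia{f(j)} \mid j\in\NN}$: one must check that the rounding in $t=\lattia{f'(\xi)}-1$ and in the floors still yields genuine gaps $\delta_S\ge t$, that consecutive arguments give distinct elements (which needs $f'(\xi)\ge 1$, guaranteed since $f'(\xi)\ge c_2 n^{1-1/\alpha}\to\infty$), and that the boundary condition $m+t<n$ discards only $O(1)$ elements. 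All of these are absorbed by the constant factors, so none disturbs the main inequality $t\,\g_S(n,t)\gtrsim n$.
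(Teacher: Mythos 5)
Your proof is correct and follows essentially the same route as the paper's: pick $x$ with $f(x)=n$, take $\xi$ and $t\approx f'(\xi)$ from the preceding lemma, use convexity to get gaps $\delta_S\ge t$ along the integers in $\sav{\xi,x}$, and multiply to obtain $t\,\g_S(n,t)\ge\frac{n}{\beta+1}-o(n)$ with the window $n^\e\le t\le n^{1-\e}$ coming from the bounds $c_2n^{1-1/\alpha}\le f'(\xi)\le c_3n^{1-1/\beta}$. The only differences are cosmetic ($\lattia{f'(\xi)}-1$ versus the paper's $\katto{f'(\xi)}-1$, and $O(\cdot)$ bookkeeping where the paper tracks explicit constants $c_4,c_5$).
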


\begin{proof}
Choose $\alpha,\beta>1$ so that the Inequality~\ref{alfabeeta}
holds for sufficiently large~$x$.  Fix also $M>\beta(\beta+1)$
so that the results of the preceding lemma hold for all $x\ge M$.
We may also assume that $f'(x)\ge 1$, for $x\ge M$.

Let $x\in\NN$, $x\ge M$.  By convexity of~$f$, we have
$f(x+1)-f(x)\ge f'(x)\ge 1$ so that $\lattia{f(x+1)}>\lattia{f(x)}$
and we get the estimate
   \[
\delta_S(\lattia{f(x)})=\lattia{f(x+1)}-\lattia{f(x)}
>(f(x+1)-1)-f(x)\ge f'(x)-1,
   \]  
or better, $\delta_S(\lattia{f(x)})\ge\katto{f'(x)-1}$.

Let $n\in\ZZ_+$, $n\ge f(M)$.  As $\lim_{x\to\infty}f(x)=\infty$
and $f$ is continuous, we can pick $x\ge M$ such that $f(x)=n$.
By cases b and c of the preceding lemma, choose $\xi\in\av{0,x}$
such that $(x-\xi)f'(\xi)\ge \frac{f(x)}{\beta+1}=\frac{n}{\beta+1}$
and $c_2 n^{1-1/\alpha}\le f'(\xi)\le c_3 n^{1-1/\beta}$.
Consider $y=\katto{\xi}\in\NN$ and $t=\katto{f'(\xi)}-1\in\NN$.
For sufficiently large~$n$, 
we have $\xi>M$ and that $t$ is positive.
As $f$ is convex, the derivative~$f'$ is increasing and for each
$z\in\NN$, $y\le z<\lattia{x}-1$, we have 
$\delta_S(\lattia{f(z)})\ge\katto{f'(z)-1}\ge\katto{f'(y)-1}\ge t$
and 
$\lattia{f(z)}+t\le\lattia{f(z)}+\delta_S(\lattia{f(z)})
\le\lattia{f(\lattia{x}-1)}<f(x)=n$.
Hence, $\g_S(n,t)\ge \lattia{x}-y-1\ge x-y-2$.

We now get
\begin{align*}
t\g_S(n,t) 
   &\ge t(x-y-2) \ge (x-\xi-3)(f'(\xi)-1) \\  
   &\ge (x-\xi)f'(\xi)-x-3f'(x) \\
   &\ge (x-\xi)f'(\xi)-x-\frac{3\beta f(x)}{x} \\
   &\ge \frac{n}{\beta+1}-\left(\frac{n}{c_0}\right)^{1/\alpha}
                         -\frac{3\beta n}{(n/c_1)^{1/\beta}} \\
   &=\frac{n}{\beta+1}-c_4 n^{1/\alpha}-c_5 n^{1-1/\beta}
\end{align*}
with appropriate positive constants $c_4$ and $c_5$.
For almost all $n\in\ZZ_+$ we thus have
   \[
t\g_S(n,t)\ge \frac{n}{\beta+1}-c_4 n^{1/\alpha}-c_5 n^{1-1/\beta}
          \ge \frac{n}{\beta+2}.
   \]

Fix $\e>0$ such that 
$\e<\min\joukko{\frac{1}{\beta+2},1-\frac{1}{\alpha}}$.
Then for almost every $n\in\ZZ_+$ there is $t\in\ZZ_+$ such that
$t\g_S(n,t)\ge\frac{n}{\beta+2}\ge \e n$ and
$n^\e\le c_2 n^{1-1/\alpha}-1 \le f'(\xi)-1\le t\le f'(\xi) 
\le c_3 n^{1-1/\beta}\le n^{1-\e}$.
Hence, for almost all $n\in\ZZ_+$ the set~$S$ is $\e$-loose
relative to~$n$. 
\end{proof}

\begin{corollary}
\begin{enumerate}[(a)]
\item Let  $P\colon \NN\to\NN$ be a polynomial function with  
coefficients in $\NN$ and $\deg(P)=k\ge 2$, and put $S=\rg(P)$. Then
$\FO_{\le}(\kC_S) \equiv \TC$.
\item Let $r>1$ be a real and 
$S_r=\joukko{\lattia{x^r}\mid x\in \NN  }$. Then 
$\FO_{\le}(\kC_{S_r}) \ge \TC$.
\end{enumerate}

\end{corollary}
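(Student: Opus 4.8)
The plan is to deduce both statements from Theorem~\ref{tcchar}, so that the whole task reduces to verifying pseudolooseness of the sets involved. Since a loose set is automatically pseudoloose, and since by the proposition above every set of the form $\joukko{\lattia{f(x)}\mid x\in\NN}$ with $f\in\F$ is loose, it suffices to realise $S$ (resp.\ $S_r$) as the range of such an $f$. For part~(a) one additionally needs the reverse inequality $\FO_{\le}(\kC_S)\le\TC$.

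For part~(a), I would first check that $P\in\F$. Write $P(x)=a_kx^k+\dots+a_0$ with $a_k>0$ and $k\ge2$. As all coefficients are nonnegative, $P''(x)\ge0$ and $P'(x)\ge0$ on $\sav{0,\infty}$, with $P'(x)>0$ for $x>0$; hence $P\colon\sav{0,\infty}\to\sav{0,\infty}$ is strictly increasing and convex with continuous derivative on $\av{0,\infty}$. Moreover $xP'(x)/P(x)\to k$ as $x\to\infty$, so for sufficiently large~$x$ this quotient lies between $\alpha=k-\tfrac12$ and $\beta=k+\tfrac12$, and $\alpha>1$ since $k\ge2$; thus Inequality~\eqref{alfabeeta} holds and $P\in\F$. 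Because the coefficients are integers we have $P(x)\in\NN$ for $x\in\NN$, so $S=\rg(P)=\joukko{\lattia{P(x)}\mid x\in\NN}$ is loose, hence pseudoloose, and Theorem~\ref{tcchar} gives $\FO_{\le}(\kC_S)\ge\TC$.

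For the reverse inequality I would argue exactly as for $\kC_{\Sq}$ in the discussion of~\eqref{sqchar}, using $\TC\equiv\FO_{\{+,\times\}}(\Maj)\equiv\FO_{\{+,\times\}}(\kI)$. Since counting is available, the formula $\kC_Sx\,\psi(x)$ is equivalent to $\exists y\,(\exists^{=y}x\,\psi(x)\land\sigma(y))$, where $\sigma(y)$ expresses $y\in S$, namely $\exists j\le y\,(P(j)=y)$; here the fixed polynomial value $P(j)$ is $\FO_{\{+,\times\}}$-definable and the search for $j$ is bounded by~$y$ because $P$ is increasing. Hence $\kC_S$ is expressible in $\TC$, giving $\FO_{\le}(\kC_S)\le\TC$ and, with the lower bound, $\FO_{\le}(\kC_S)\equiv\TC$.

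Part~(b) requires only the lower bound. For $f(x)=x^r$ with $r>1$ we have a strictly increasing convex map $\sav{0,\infty}\to\sav{0,\infty}$ with continuous derivative $f'(x)=rx^{r-1}$ on $\av{0,\infty}$, and $xf'(x)/f(x)=r$ identically, so Inequality~\eqref{alfabeeta} holds with $\alpha=\beta=r>1$ and $f\in\F$. Thus $S_r=\joukko{\lattia{x^r}\mid x\in\NN}$ is loose, hence pseudoloose, and Theorem~\ref{tcchar} yields $\FO_{\le}(\kC_{S_r})\ge\TC$; I would not expect an equivalence, since for an arbitrary real~$r$ the set $S_r$ need not be $\FO_{\{+,\times\}}$-definable and the counting argument of part~(a) has no analogue. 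All steps are routine given the earlier results; the only point needing care is that $P$ and $x^r$ satisfy every clause of the definition of~$\F$ on the whole half-line and not merely asymptotically, which is exactly where nonnegativity of the coefficients (resp.\ $r>1$) is used, so I anticipate no real obstacle.
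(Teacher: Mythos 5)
Your proposal is correct and follows essentially the same route as the paper: verify that $P$ (resp.\ $x\mapsto x^r$) belongs to $\F$ via the ratio $xf'(x)/f(x)$, conclude looseness hence pseudolooseness of the range, and invoke Theorem~\ref{tcchar}; the paper simply states the upper bound $\FO_\le(\kC_S)\le\TC$ as ``clear'' where you spell out the counting-quantifier argument.
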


\begin{proof}
\begin{enumerate}[(a)]
\item
Clearly, $\FO_\le(\kC_S)\le\TC$.
Let $f\colon \sav{0,\infty}\to \sav{0,\infty}$ be the canonical
extension of~$P$.  Then 
   \[
\lim_{x\to\infty}\frac{xf'(x)}{f(x)}=k
   \]
so that for sufficiently large~$x$, we have
$1<k-\frac12<\frac{xf'(x)}{f(x)}<k+\frac12$.
Other conditions for~$f$ being obvious, we see that $f\in\F$
and therefore $S=\joukko{\lattia{f(x)} \mid x\in\NN}=\rg(P)$
is loose.  Hence by Theorem~\ref{tcchar}, $\FO_{\le}(\kC_S)\ge\TC$.

\item
Let $g\colon \sav{0,\infty}\to \sav{0,\infty}$, $g(x)=x^r$.
Then for all $x>0$, we have $\frac{xg'(x)}{g(x)}=r>1$,
which implies $g\in\F$ and the result. \qedhere
\end{enumerate}
\end{proof}

Observe that for $r'>r>1$, the sets $S_r$ and $S_{r'}$ are distinct.
This implies that there are uncountably many loose sets.
Obviously, most of them are not computable, and as such quite
uninteresting from the point of view of descriptive complexity theory.

Next we show that looseness is preserved under certain modifications.
The following easy result is for the record.

\begin{lemma} \label{loosediff}
Let $S,S'\osaj\NN$ be infinite sets such that $S\triangle S'$
is finite.  Then $S$ is loose iff $S'$ is loose.  
\end{lemma}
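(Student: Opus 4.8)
The plan is to exploit the fact that looseness is governed entirely by the counting function $\g_S(n,t)=|S_{n,t}|$, and that $\g_S$ and $\g_{S'}$ can differ only by a bounded additive constant when $S\triangle S'$ is finite. First I would fix $N_0\in\NN$ so large that $S$ and $S'$ agree on $[N_0,\infty)$; such an $N_0$ exists precisely because $S\triangle S'$ is finite. The key structural observation is that for every $m\ge N_0$ not only does $m\in S\iff m\in S'$, but also $\delta_S(m)=\delta_{S'}(m)$: the least element of $S$ exceeding $m$ is at least $m+1>N_0$, hence lies in the region where $S$ and $S'$ coincide, so the two successors agree. Both sets being infinite, these successors always exist.

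From this I would deduce that $S_{n,t}$ and $S'_{n,t}$ can differ only at indices $m<N_0$, since for $m\ge N_0$ the three defining conditions $m\in S$, $\delta_S(m)\ge t$ and $m+t<n$ are literally the same for $S$ and $S'$. Consequently $S_{n,t}\triangle S'_{n,t}\osaj[0,N_0)$, and therefore $|\g_S(n,t)-\g_{S'}(n,t)|\le N_0$ for all $n,t$, a single bound uniform in both arguments.

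With this estimate in hand the transfer of looseness is routine. Suppose $S'$ is loose, witnessed by some $\e'>0$ (where necessarily $\e'\le\frac12$, so that there is room for an integer $t$ with $n^{\e'}\le t\le n^{1-\e'}$). For almost all $n$ I would take the witness $t$ satisfying $n^{\e'}\le t\le n^{1-\e'}$ and $t\g_{S'}(n,t)\ge\e' n$. Using the same $t$ for $S$ gives $t\g_S(n,t)\ge t\g_{S'}(n,t)-N_0 t\ge\e' n-N_0 n^{1-\e'}$, and since $N_0 n^{1-\e'}=o(n)$, this is at least $\tfrac{\e'}{2}n$ for all sufficiently large $n$. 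Setting $\e=\e'/2$, the inequalities $n^{\e}\le n^{\e'}\le t\le n^{1-\e'}\le n^{1-\e}$ hold automatically, so $S$ is $\e$-loose relative to almost every $n$, i.e. $S$ is loose. Since $S\triangle S'=S'\triangle S$, the converse follows by the identical argument with the roles of $S$ and $S'$ interchanged.

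As for difficulty, there is essentially no serious obstacle here, which is exactly why the statement is flagged as an easy result. The only points demanding a moment's care are the verification that the successor gaps $\delta$ (and not merely membership) coincide above $N_0$, which is what legitimizes the additive bound $|\g_S(n,t)-\g_{S'}(n,t)|\le N_0$, and the small bookkeeping that weakening the looseness parameter from $\e'$ to $\e'/2$ keeps the same witness $t$ inside the required range $n^{\e}\le t\le n^{1-\e}$.
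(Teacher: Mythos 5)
Your proposal is correct and follows essentially the same route as the paper: bound $|\g_S(n,t)-\g_{S'}(n,t)|$ by a uniform constant and then transfer $\e'$-looseness at the cost of shrinking the parameter, keeping the same witness $t$. The only difference is that the paper simply declares the boundedness of $\g_S-\g_{S'}$ "obvious," whereas you supply the (correct) justification via the agreement of the successor gaps $\delta$ above $N_0$.
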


\begin{proof}
The main point here is that $\g_S-\g_{S'}$ is obviously bounded;
pick $M\in\NN$ such that for every $n,t\in\ZZ_+$, we have
that $\left|\g_S(n,t)-\g_{S'}(n,t)\right|\le M$.  By symmetry,
we need to show the implication only in one direction, so suppose
$S$ is loose, i.e., for some $\e>0$, we have that for almost every
$n\in\ZZ_+$, $S$ is $\e$-loose relative to~$n$.  Pick any $\e'>0$
with $\e'<\e$.  Then for almost every $n\in\ZZ_+$ there is $t\in\ZZ_+$
such that $n^{\e'}\le n^\e\le t\le n^{1-\e}\le n^{1-\e'}$ and
$t\g_{S'}(n,t)\ge t(\g_S(n,t)-M)\ge \e n-Mn^{1-\e}\ge \e'n$.
Therefore, $S'$ is loose.
\end{proof}

We employ a commonly used concept in metric space theory:
Let $S,S'\osaj\NN$ be infinite and $\eta\ge 1$.  
A mapping $f\colon S\to S'$
is \emph{$\eta$-bi-Lipschitz} if  for all $m,n\in S$
   \[
\eta^{-1}|m-n|\le \left|f(m)-f(n)\right|\le \eta|m-n|.
   \]
If $f\colon S\to S'$ is a strictly increasing bijection, then
this condition simplifies:  $f$ is a $\eta$-bi-Lipschitz mapping
iff for every $n\in S$, we have
   \[
\eta^{-1}\delta_S(n)\le\delta_{S'}(f(n))\le\eta\delta_S(n). 
   \]
$f$ is \emph{bi-Lipschitz} if there exists $\eta\ge 1$ such that
$f$ is $\eta$-bi-Lipschitz.

The key point is the following:

\begin{lemma}
Let $S,S'\osaj\NN$ be infinite sets and $f\colon S\to S'$
a strictly increasing $\eta$-bi-Lipschitz bijection with $\eta\ge 1$.
Suppose $0\in S\cap S'$.  Then for every $n,t\in\ZZ_+$ 
we have $\g_{S'}(n,t)\ge\g_S(\katto{n/\eta},\katto{\eta t})$. 
\end{lemma}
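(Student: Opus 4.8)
The plan is to exhibit an injection from the finite set $S_{\katto{n/\eta},\katto{\eta t}}$ into $S'_{n,t}$, since this immediately yields $\g_S(\katto{n/\eta},\katto{\eta t})=|S_{\katto{n/\eta},\katto{\eta t}}|\le|S'_{n,t}|=\g_{S'}(n,t)$, which is the claim. The natural candidate is $f$ itself: being a bijection it is injective, so everything reduces to checking that $f$ maps $S_{\katto{n/\eta},\katto{\eta t}}$ into $S'_{n,t}$. A preliminary observation I would record first is that $f(0)=0$: since $0\in S\cap S'$ is the minimum of both sets and $f$ is a strictly increasing bijection, it must send minimum to minimum. Applying the bi-Lipschitz inequality to the pair $m,0$ then gives the positional bound $\eta^{-1}m\le f(m)\le\eta m$ for every $m\in S$, which is what I will need to control where $f(m)$ lands.

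Next I would fix $m\in S_{\katto{n/\eta},\katto{\eta t}}$ and verify the two defining conditions of membership in $S'_{n,t}$. For the gap condition I would invoke the simplified bi-Lipschitz characterization quoted just before the statement, namely $\delta_{S'}(f(m))\ge\eta^{-1}\delta_S(m)$; since $\delta_S(m)\ge\katto{\eta t}\ge\eta t$, this gives $\delta_{S'}(f(m))\ge t$ at once. (Here I would note that $\delta_S(m)$ is genuinely finite because $S$ is infinite, so the inequality makes sense.) For the positional condition $f(m)+t<n$, I would combine $f(m)\le\eta m$ with the hypothesis $m+\katto{\eta t}<\katto{n/\eta}$, i.e.\ $m\le\katto{n/\eta}-\katto{\eta t}-1$, and then push the ceilings through using $\katto{n/\eta}<n/\eta+1$ and $\katto{\eta t}\ge\eta t$. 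This yields $f(m)+t<n-(\eta^2-1)t\le n$, where $\eta\ge1$ is used in the last step. This elementary ceiling/floor estimate is the only real computation in the argument, and it is where I expect the bookkeeping to demand the most care: the slack produced by $\katto{n/\eta}<n/\eta+1$ must be absorbed by the negative term $-\eta^2 t$, which is exactly why the arguments of $\g_S$ are inflated to $\katto{n/\eta}$ and $\katto{\eta t}$ rather than $n/\eta$ and $\eta t$.

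With both conditions verified, $f$ restricts to a map $S_{\katto{n/\eta},\katto{\eta t}}\to S'_{n,t}$, and injectivity is inherited from $f$ being a bijection. Counting elements then gives $\g_S(\katto{n/\eta},\katto{\eta t})\le\g_{S'}(n,t)$. I do not anticipate any subtlety beyond the arithmetic of the positional bound; the structural part—using $f$ directly and anchoring it at $0$—is essentially forced by the hypotheses $0\in S\cap S'$ and strict monotonicity.
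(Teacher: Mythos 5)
Your proposal is correct and follows essentially the same route as the paper's proof: restrict $f$ to $S_{\katto{n/\eta},\katto{\eta t}}$, use $f(0)=0$ together with the bi-Lipschitz bounds to verify both membership conditions for $S'_{n,t}$, and conclude by injectivity of $f$. The only cosmetic difference is in how the positional estimate is arranged — the paper bounds $f(x)+t\le \eta x+t\le \eta(x+s)<n$ directly, which is the same arithmetic as your ceiling computation.
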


\begin{proof}
Write $m=\katto{n/\eta}, s=\katto{\eta t}\in\ZZ_+$.  Let $S_{m,s}$
and $S'_{n,t}$ be the appropriate sets witnessing $\g_S(m,s)=|S_{m,s}|$ 
and $\g_{S'}(n,t)=|S'_{n,t}|$.  For $x\in S_{m,s}$, we have $f(x)\in S'$
and $f(x)+t=|f(x)-f(0)|+t\le\eta|x|+t\le\eta(x+s)<n$,
as $f(0)=0$ and $x+s<m=\katto{n/\eta}$ implies $x+s<n/\eta$.
Furthermore, we have 
$\delta_{S'}(f(x))\ge\eta^{-1}\delta_S(x)\ge\eta^{-1}s\ge t$,
so $f(x)\in S'_{n,t}$.  Hence, $f\raj S_{m,s}\colon S_{m,s}\to S'_{n,t}$
is an injection, completing the proof. 
\end{proof}

\begin{proposition}\label{loose-biLip}
Let $S,S'\osaj\NN$ be infinite sets. Suppose that there are
$S_0\osaj S$ and $S'_0\osaj S'$ such that $S\smallsetminus S_0$
and $S'\smallsetminus S'_0$ are finite and there is
a strictly increasing bi-Lipschitz bijection $f\colon S_0\to S'_0$.
Then $S$ is loose iff $S'$ is loose.
\end{proposition}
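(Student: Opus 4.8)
The plan is to reduce the statement, via the two lemmas immediately preceding it, to a single transfer estimate between the counting functions $\g_{S_0}$ and $\g_{S_0'}$. First I would dispose of the finite discrepancies: since $S\smallsetminus S_0$ and $S'\smallsetminus S_0'$ are finite, we have $S\triangle S_0=S\smallsetminus S_0$ and $S'\triangle S_0'=S'\smallsetminus S_0'$ finite, so Lemma~\ref{loosediff} gives that $S$ is loose iff $S_0$ is loose and $S'$ is loose iff $S_0'$ is loose. Thus it suffices to prove that $S_0$ is loose iff $S_0'$ is loose. Moreover, the inverse map $f^{-1}\colon S_0'\to S_0$ is again a strictly increasing $\eta$-bi-Lipschitz bijection, so the hypotheses are symmetric in $S_0$ and $S_0'$; hence I only need the implication ``$S_0$ loose implies $S_0'$ loose''.

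Next I would bring the preceding (unnamed) transfer lemma into play, which requires $0\in S_0\cap S_0'$. Since $f$ is an increasing bijection it maps $a_0=\min S_0$ to $b_0=\min S_0'$, so translating both sets down to the origin, i.e., passing to $T=\joukko{x-a_0\mid x\in S_0}$ and $T'=\joukko{y-b_0\mid y\in S_0'}$, yields $0\in T\cap T'$ together with a strictly increasing $\eta$-bi-Lipschitz bijection $g\colon T\to T'$, $g(x-a_0)=f(x)-b_0$, fixing $0$. Translation by a constant $c$ only replaces $\g_S(n,t)$ by $\g_S(n-c,t)$, which (exactly as in the proof of Lemma~\ref{loosediff}, by shrinking $\e$ slightly) does not affect looseness; so $T$ is loose iff $S_0$ is and $T'$ is loose iff $S_0'$ is. (Alternatively one can simply adjoin $0$ to both sets and re-derive a bi-Lipschitz constant, but the translation is cleaner.) The preceding lemma now gives, for all $n,t\in\ZZ_+$,
\[
\g_{T'}(n,t)\ge \g_T(\katto{n/\eta},\katto{\eta t}).
\]

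It then remains to convert looseness of $T$ into looseness of $T'$ through this estimate. Assume $T$ is $\e$-loose relative to almost all $n$. Given a large $m$, set $n=\katto{m/\eta}$ and take a witness $t$ for $n$, so that $t\,\g_T(n,t)\ge\e n$ and $n^\e\le t\le n^{1-\e}$. Put $s=\lattia{t/\eta}$; then $\katto{\eta s}\le t$, so monotonicity of $\g_T$ in its second argument together with the displayed inequality gives $\g_{T'}(m,s)\ge\g_T(n,\katto{\eta s})\ge\g_T(n,t)\ge\e n/t$. Tracking the factors of $\eta$ and the roundings (using $s\ge t/\eta-1$ and $n\ge m/\eta$) yields $s\,\g_{T'}(m,s)\ge \frac{\e}{2\eta^2}\,m$ for almost all $m$, while $n^\e\le t\le n^{1-\e}$ translates, up to the constant factor $\eta$, into $m^{\e'}\le s\le m^{1-\e'}$ for any fixed $\e'$ strictly smaller than $\e$. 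Choosing $\e'=\min\joukko{\e/2,\ \e/(2\eta^2)}$ makes $T'$ $\e'$-loose relative to almost all $m$, so $T'$ is loose, which by the reductions above completes the argument.

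The last paragraph's bookkeeping is routine; the genuine point of care is twofold. Conceptually, the normalization step must be handled correctly, since the transfer lemma is only available under the hypothesis $0\in S_0\cap S_0'$, and one must check that forcing $0$ into both sets preserves looseness as well as the bi-Lipschitz property. Technically, the main obstacle is propagating a single $\e$ through the $\eta$-rescalings and the ceiling/floor roundings so that one uniform $\e'>0$ witnesses looseness of $T'$ for almost all $m$; the asymmetry between the lower bound on $s\,\g_{T'}(m,s)$ and the window $m^{\e'}\le s\le m^{1-\e'}$ is what forces taking $\e'$ genuinely smaller than $\e$.
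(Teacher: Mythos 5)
Your argument is correct and follows essentially the same route as the paper's proof: reduce to $S_0$ and $S_0'$ via Lemma~\ref{loosediff} and symmetry, force $0$ into both sets so that the transfer lemma $\g_{S'}(n,t)\ge\g_S(\katto{n/\eta},\katto{\eta t})$ becomes applicable, and then track the factors of $\eta$ through the roundings to land on $\e'=\e/(2\eta^2)$ --- exactly the constant the paper uses. The only (immaterial) difference is the normalization step: you translate both sets down to the origin, which preserves the Lipschitz constant but requires the easy observation that translation shifts $\g$ by a bounded amount, whereas the paper replaces the minimum of each set by $0$ in place, which alters each set only finitely (so Lemma~\ref{loosediff} applies) at the cost of a possibly larger Lipschitz coefficient.
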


\begin{proof}
Put $S_1=\joukko{0}\cup(S_0\smallsetminus\joukko{\min S_0})$  
and $S'_1=\joukko{0}\cup(S'_0\smallsetminus\joukko{\min S'_0})$.
Then it is easy to see that 
$f_1=(f\raj(S_1\smallsetminus\joukko{0}))\cup\joukko{(0,0)}$
is a strictly increasing bi-Lipschitz mapping (with some
inferior Lipschitz coefficient $\eta$).
By symmetry, it suffices to
consider only one direction of the equivalence, so suppose $S$
is loose.  By Lemma~\ref{loosediff}, $S_1$ is also loose.
Fix $\e>0$ such that for almost every $n\in\ZZ_+$, $S_1$ is 
$\e$-loose relative to~$n$.

Put $\e'=\e/(2\eta^2)>0$.    For almost every $n\in\ZZ_+$ , 
$n^{\e-\e'}\ge 2\eta^{1+\e}$ and $S_1$ is $\e$-loose relative
to $\katto{n/\eta}$.  Suppose $t$ witnesses the latter.
Then $u=\lattia{t/\eta}\in\NN$ satisfies
   \[
u\ge \lattia{\katto{n/\eta}^\e/\eta}
   \ge \lattia{\frac{n^\e}{\eta^{1+\e}}}
   \ge \lattia{2n^{\e'}} \ge n^{\e'},
   \] 
$u\le t\le n^{1-\e'}$ and by the preceding lemma,
\begin{align*}
u\g_{S'_1}(n,u) 
   &\ge u\g_{S_1}(\katto{n/\eta},t) \\     
   &\ge \frac{u\e \katto{n/\eta}}{t} 
         = \e\frac{\lattia{t/\eta}\katto{n/\eta}}{t} \\
   &\ge \frac{\e}{2\eta^2}\,n=\e'n,
\end{align*}
as $\g_S$ is decreasing with respect to the second variable
and $t\ge\eta$.  Hence, $S'_1$ is $\e'$-loose relative to~$n$.
It follows that $S'_1$ is loose implying that $S'$ is loose, too.
\end{proof}

\begin{example}
Consider $S=\joukko{k^2+\lattia{\lb k}\cdot (-1)^k \mid k\in\ZZ_+}$.
We compare this with 
$S'=\joukko{k^2 \mid k\in\ZZ_+}=\joukko{(k+1)^2 \mid k\in\NN}$  
which is loose as a range of a quadratic polynomial.
Let $f\colon S'\to S$ 
be the unique strictly increasing bijection.
Obviously
   \[
\lim_{k\to\infty}\frac{\delta_S(f(k^2))}{\delta_{S'}(k^2)}=1,
   \]
so $f$ is a bi-Lipschitz mapping and also $S$ is loose.
It is straightforward to show that $\FO_\le(\kC_S)\le\TC$.
Putting these together, we get $\FO_\le(\kC_S)\equiv \TC$.
\end{example}

As the last enterprise of the section, we show that
$E=\{2^n \mid n\in\NN\}$ is not pseudoloose.  Recall that by
Lemma~\ref{Idefex}~(\ref{Idef-E}),   the equicardinality
quantifier $\kI$ is definable in~$\FO_{\le}(\kC_E)$, so that
addition is definable in~$\FO_{\le}(\kC_E)$. The non-pseudolooseness
of~$E$ leaves it as an open question if multiplication is
also definable in that logic.

We first show that in a more general setting, pseudolooseness
reduces to looseness.

\begin{proposition}
Suppose $S$ is the range of a strictly increasing infinite sequences
with strictly increasing differences of consecutive elements,
i.e., $S=\{ a_n \mid n\in\NN \}$ where $a_{k+2}-a_{k+1}>a_{k+1}-a_k>0$,
for each $k\in\NN$.  Then $S$ is pseudoloose if and only if
$S$ is loose.
\end{proposition}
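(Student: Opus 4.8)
The plan is as follows. The implication from looseness to pseudolooseness holds for every $S$ (take $w=1$, so that $T^{\kh_S}_w=S$), so I would only prove the converse, using the hypothesis on $S$. Write $S=\joukko{a_k \mid k\in\NN}$ and set $g_k=a_{k+1}-a_k$, so that $0<g_0<g_1<\cdots$. Assuming $S$ is pseudoloose, fix the corresponding $\e>0$; then for almost all $n$ I am handed a word $w\in\joukko{0,1}^s$ with $s\le n^{1-\e}$ and some $t$, $n^\e\le t\le n^{1-\e}$, such that $T=T^{\kh_S}_w$ satisfies $t\g_T(n,t)\ge\e n$. The first step is to show that $w$ carries at most one letter~$1$: were there ones in positions $p_1<p_2$, then each occurrence of $w$ would exhibit two consecutive elements of $S$ at distance $p_2-p_1$, and since the $g_k$ strictly increase, any given value is realized as a gap of $S$ at most once, forcing $|T|\le 1$ and hence $t\g_T(n,t)\le n^{1-\e}<\e n$ for large~$n$. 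Thus for almost all $n$ the witness $w$ is either $0^s$ or has a single~$1$.

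Next I would treat the two surviving cases, in each reducing $\g_T$ to $\g_S$. If $w$ has its unique $1$ in position $p$, an occurrence at $m$ means $a:=m+p\in S$ with prescribed lower bounds on the two gaps adjacent to~$a$; since $g_k$ increases, these conditions cut out an upward-closed index set, so $T=\joukko{a_k-p \mid k\ge k^*}$ is a left translate of a tail of $S$, with $\delta_T(a_k-p)=g_k$. Hence $\g_T(n,t)\le|\joukko{k \mid g_k\ge t,\ a_k<n-t+p}|$ while $\g_S(n,t)=|\joukko{k \mid g_k\ge t,\ a_k<n-t}|$; the two index sets differ only over a window of length $p$ in which points with $g_k\ge t$ are $t$-separated, so $\g_T(n,t)-\g_S(n,t)\le p/t+1$. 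If instead $w=0^s$, the occurrences fill the interiors of the gaps of $S$ exceeding~$s$; because $g_k$ increases, all gaps past some index $k_1$ exceed~$s$, and the block inside such a gap $(a_k,a_{k+1})$ ends at $a_{k+1}-s$, whose successor occurrence $a_{k+1}+1$ sits at distance exactly $s+1$, every other occurrence having $\delta_T=1$. Therefore $\g_T(n,t)=0$ for $t>s+1$, and for $t\le s+1$ only the block-ends contribute, giving $\g_T(n,t)=|\joukko{j>k_1 \mid a_j<n+s-t}|$ with every such $a_j$ enjoying $g_j\ge t$; comparison with $\g_S(n,t)$ again yields $\g_T(n,t)-\g_S(n,t)\le s/t+1$.

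Finally I would assemble the estimate $t\g_S(n,t)\ge t\g_T(n,t)-(s+t)\ge\e n-2n^{1-\e}\ge\tfrac12\e n$, valid for almost all $n$ with $n^\e\le t\le n^{1-\e}$, which says precisely that $S$ is $\tfrac{\e}{2}$-loose relative to almost every~$n$, i.e.\ loose. I expect the main obstacle to be exactly this boundary bookkeeping. The tempting shortcut—reading off looseness of $S$ at the shifted argument $n+p$ (or $n+s$)—fails, because as $n$ varies the shifted arguments need not form a cofinite set, so one genuinely must compare $\g_T$ and $\g_S$ at the \emph{same}~$n$. What makes this affordable is that $s$ and $t$ are both $o(n)$ (indeed $\le n^{1-\e}$), so the discrepancies $p/t+1$ and $s/t+1$ cost only $s+t\ll\e n$ against the guaranteed $t\g_T(n,t)\ge\e n$.
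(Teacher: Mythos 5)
Your proposal is correct and follows essentially the same route as the paper's own proof: the strictly increasing gaps force the witnessing word $w$ to contain at most one letter $1$ (else $|T^{\kh_S}_w|\le 1$ and $t\g_T(n,t)\le n^{1-\e}<\e n$), and in the two surviving cases ($w$ with a single $1$, $w=0^s$) the relevant set is compared with a shift of $S$ so that $t\g_S(n,t)\ge t\g_T(n,t)-O(s+t)\ge\tfrac12\e n$, yielding $\e/2$-looseness. The only difference is cosmetic bookkeeping: the paper bounds the discrepancy directly via $t\g_T(n,t)\le t\g_S(n,t)+k$ with $k$ the position of the $1$ (and via $T_{n,t}\osaj S-s$ in the all-zeros case), whereas you count the at most $p/t+1$ extra occurrences in a boundary window, but the two estimates are interchangeable.
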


\begin{proof}
If $S$ is loose, then it is obviously also pseudoloose, so we concentrate
on the reverse direction. So suppose $S$ is pseudoloose.
Fix $\e>0$ and $n_0\in\ZZ_+$ with $\e n_0^\e>2$ such that
for $n\in\ZZ_+$ with $n\ge n_0$ there exists a word $w\in\joukko{0,1}^s$
such that $s\le n^{1-\e}$ and $T^{\kh_S}_w$ is $\e$-loose
relative to~$n$.  We aim to show that for almost all $n\in\ZZ_+$,
$S$ is $\e/2$-loose relative to~$n$.  Given $n$, choose a word $w$
that satisfies the given conditions.  First note that $w$ cannot
contain two or more 1's.  Indeed, for such a word~$w$, we have
that $T=T^{\kh_S}_w$ is empty or a singleton due to the strictly
increasing differences of consecutive elements in~$S$.
This implies that we would have
$$
   t\g_T(n,t)\le t\cdot 1\le n^{1-\e} < \frac12 \cdot \e n_0^\e n^{1-\e}
   \le\frac12 \e n,  
$$
contrary to the choice of~$w$.  Therefore, $w$ contains at most one~1.

Consider now the case when $w$ contains exactly one~1.  
Then $T=T^{\kh_S}_w\osaj S-k$ where $k<s$ is the index of the only~1.
A moment's reflection show that 
$$
  t\g_T(n,t) \le t\g_S(n,t)+k,
$$
for any $t\in\ZZ_+$.  As $T$ is $\e$-loose relative to~$n$, there
exists $t\in\ZZ_+$ such that $t\g_T(n,t)\ge\e n$
and $n^\e\le t\le n^{1-\e}$.  Note now that
$k\le s\le n^{1-\e}\le \frac12\e n$.  Combining the estimates,
we have
$$
  \e n \le t\g_T(n,t) \le t\g_S(n,t)+k \le t\g_S(n,t)+\frac12\e n,
$$
which implies $t\g_S(n,t)\ge \frac12\e n$.  Trivially,
$n^{\e/2}\le t\le n^{1-\e/2}$, so we see that $S$ is $\e/2$-loose
relative to~$n$.

It remains to handle the case when $w$ consists only of 0's.
Then $T$ contains many consecutive 1's, but for $t\ge 2$,
we have that $T_{n,t}\osaj S-s$.  The rest of the proof is
similar to the previous case.
\end{proof}

\begin{example} \label{nonpseudoloose}
Condider now the set $E=\{2^n \mid n\in\NN\}$. We aim to show that
$E$~is not pseudoloose.  In view of the previous proposition,
it is enough to prove that $E$ is not loose.  Note that
$$
  | E \cap \{ 0,\ldots,n-1 \}| = \lattia{\lb(n-1)} +1, 
$$
for every $n\in\ZZ_+$.  Observe also that for every $t\in\ZZ_+$,
we have $E_{n,t}\osaj E$ where $E_{n,t}$ is the set
of $m\in E$ with $\delta_E(m)\ge t$ and $m+t<n$, similarly as
in the Definition~\ref{loose}.  This implies that 
$\g(n,t)=|E_{n,t}|\le \lattia{\lb(n-1)} +1 \le \katto{\lb(n)}$.

Fix now $\e>0$. Given any $t\le n^{1-\e}$, we have that
$$
  t\g(n,t) \le n^{1-\e}\katto{\lb(n)}<\e n,
$$
if $n$ is sufficiently large.  This means that $E$ is not  
$\e$-loose relative to~$n$, for almost all $n\in\ZZ_+$.
Hence, $E$ is neither loose nor pseudoloose.
\end{example}

\section{Built-in relations}\label{Builtin}

In this section, we develop the basic theory of logics with built-in
relations.  Built-in relations have an important role in descriptive
complexity theory: all the known logical characterizations of
complexity classes below $\PTIME$ require the presence of built-in
linear order. In particular this holds for the famous characterization
of $\PTIME$ in terms of least fixed-point logic by Immerman \cite{I86}
and Vardi \cite{V82}.  However, as far as we know, the abstract notion
of a logic with built-in relations has not been studied systematically
earlier.  Here we will adapt the notions of generalized quantifiers,
semiregularity and regularity of logics from abstract model theory to
the setting with built-in relations.

The standard approach to logics with built-in relations (see, e.g.,
\cite{EF,I}) is that the name $S$ of each built-in relation is
included in the vocabulary $\tau$ considered, but the interpretation
of $S$ in $\tau$-structures is restricted to be the intended one
(e.g., for $S={\le}$, a linear order of the universe).  Since we want
to compare the expressive power of logics with different sets of
built-in relations, we need to solve the following problem: How can a
class of models that is defined in a logic $\LL_\B$ with a set $\B$ of
built-in relations be definable in another logic
$\widetilde\LL_{\widetilde\B}$ with a different set $\widetilde\B$ of
built-in relations?  In the usual approach this can be done only by
treating the names of built-in relations in $\B$ as ordinary relation
symbols in formulas of $\widetilde\LL_{\widetilde\B}$, and forcing the
intended interpretations for these by sentences of
$\widetilde\LL_{\widetilde\B}$.  We introduce a different solution to
this problem: we replace the usual notion of model with built-in
relations with a uniform one which is independent of the set $\B$ of
built-in relations available in the logic considered. Thus, in our
approach the built-in relations are not included in the vocabulary of
the structures; they are rather regarded as part of the logical
vocabulary in the same spirit as is done with the identity relation.

All models we consider in this and the next section are assumed to be finite. That is, for any model $\mM$ considered in these sections, $\Dom(\mM)$ is assumed to be a finite nonempty set, and the vocabulary $\tau$ of $\mM$ is assumed to be finite, as well.

\subsection*{Logics with built-in relations}

We call any relation $S$ on the natural numbers $\NN$ a {\em
numerical relation}. If $\mM$ is a finite structure of some vocabulary $\tau$,
then any set $\B$ of numerical relations can be interpreted
as {\em built-in relations} on $\mM$ by fixing a bijection
$f$ between $\Dom(\mM)$ and $n=\{0,\ldots,n-1\}\in\NN$: 
each $S\in \B$ is considered as a relation symbol that is not in $\tau$,
and the interpretation of $S$ with respect to the
bijection $f$ is defined to be the relation $S^f=\{\va\mid f\va\in S\}$.
Conversely, if
$\mM^*=(\mM,(S^{\mM^*})_{S\in\B})$ is an expansion of~$\mM$, where
the interpretations $S^{\mM^*}$ are regarded as built-in relations, 
then there is a bijection $f\colon \Dom(\mM)\to n$
such that $S^{\mM^*}=S^f$ for all $S\in\B$. Thus, the models we will
consider in the context of logics with built-in relations are
pairs of ordinary models $\mM$ and bijections $\Dom(\mM)\to |\Dom(\mM)|$.

\begin{definition}\label{br-models}
Let $\tau$ be a relational vocabulary. If $\mM$ is a
$\tau$-model, and $f\colon \Dom(\mM)\to |\Dom(\mM)|$ is a bijection, then
$\mM^f=(\mM,f)$ is a {\em $\tau$-model with potential built-in relations};
more briefly, we say that $\mM^f$ is a {\em $\BR$-$\tau$-model}.
We write $\Str_\BR(\tau)$ for the class of all $\BR$-$\tau$-models.
\end{definition}

The most important built-in relations for the topic of this paper are
linear order and the arithmetic relations for addition and
multiplication. We denote the numerical relation corresponding to
built-in linear order simply by $\le$, for addition by $\bplus$ and
for multiplication by $\btimes$. Thus,
$\bplus=\{(i,j,k)\in\NN^3\mid i+j=k\}$ and
$\btimes=\{(i,j,k)\in\NN^3\mid i\cdot j=k\}$, and for any $\BR$-model
$\mM^f$ and elements $a,b,c\in\Dom(\mM)$, we have
$a\le^f b\iff f(a)\le f(b)$ and
$(a,b,c)\in\bplus^f\iff f(a)+f(b)=f(c)$ (and similarly for
$\btimes$).

The notion of isomorphism can be extended to $\BR$-models
in a natural way as follows: a $\BR$-{\em isomorphism} between
two $\BR$-$\tau$-models $(\mM,f)$ and  $(\mN,g)$ is a bijection
$h\colon \Dom(\mM)\to \Dom(\mN)$ such that $h$ is an isomorphism
between the $\tau$-models $\mM$ and $\mN$ and $f=g\circ h$.
   $$
\xy
0;<10mm,0mm>:<0mm,15mm>::
(-1,0)*+<1.5mm>{\mM}="M";  (1,0)*+<1.5mm>{\mN}="N";
(0,1)*+<1.5mm>{\NN}="NN"; 
{\ar^f "M";"NN"}; {\ar@{-->}_h "M";"N"}; {\ar_g "N";"NN"};
\endxy
   $$
The $\BR$-models $(\mM,f)$ and $(\mN,g)$ are isomorphic,
$(\mM,f)\cong (\mN,g)$, if there exists a $\BR$-isomorphism 
between them. 

It is easy to see that if $h\colon (\mM,f)\to (\mN,g)$ is a $\BR$-isomorphism, 
then for any set $\B$ of numerical relations, $h$ is an isomorphism between 
the corresponding models $(\mM,(S^f)_{S\in\B})$ and 
$(\mN,(S^g)_{S\in\B})$ with built-in relations. On the other hand, 
if $h\colon (\mM,(S^f)_{S\in\B})\to(\mN,(S^g)_{S\in\B})$ is an isomorphism, 
then there is an automorphism $e$ of $(\mN,(S^g)_{S\in\B})$ such that
$h$ is a $\BR$-isomorphism between $(\mM,f)$ and $(\mN,g\circ e)$.
In particular, if $\B$ contains the linear order $\le$ of natural numbers
(or some other relation such that $(\mM,(S^f)_{S\in\B})$ cannot have
non-trivial automorphisms), then $(\mM,(S^f)_{S\in\B})
\cong(\mN,(S^g)_{S\in\B})$ implies $(\mM,f)\cong (\mN,g)$.

Note that for every $\BR$-model $(\mM,f)$ there is a canonical 
representative of its isomorphism type: Let $f\mM$ be the image 
of $\mM$ under the bijection $f$, and let $g$ be the identity
function of $\Dom(f\mM)$. Then $(f\mM,g)$ is the unique
$\BR$-model with domain $n=|\Dom(\mM)|$ and bijection $g={\rm id}_n$ 
which is $\BR$-isomorphic with $(\mM,f)$.

Let $\B$ be a set of numerical relations, and let $\LL$ be an abstract
logic (we refer to \cite{E} for the definition of abstract logics).
We define now the corresponding logic $\LL_{\B}$ {\em with
built-in relations} $\B$. As mentioned before Definition~\ref{br-models}, we use the numerical relations in $\B$ as relation symbols in formulas of $\LL_{\B}$.

\begin{definition}
For each vocabulary $\tau$, the set $\LL_{\B}[\tau]$ of $\tau$-sentences of
$\LL_\B$ is $\LL[\tau\cup\B]$. Furthermore, the
truth relation is inherited from $\LL$: for all $(\mM,f)\in\Str_\BR(\tau)$
and $\phi\in\LL_\B[\tau]$,
$$
   (\mM,f)\models_{\LL_\B}\phi\iff
   (\mM,(S^f)_{S\in\B})\models_{\LL}\phi.
$$
In the sequel we will call logics with built-in relations $\B$
just $\B${\em -logics}.
\end{definition}

To simplify the presentation, we will use the convention that a
{\em formula} is a sentence in an expanded vocabulary, i.e.,  we regard
variables as constant symbols, and an $\LL_\B[\tau]$-formula $\phi$ with
free variables $x_0\ldots,x_{k-1}$ is just a sentence in
$\LL_\B[\tau\cup\{x_0,\ldots,x_{k-1}\}]$. Let $(\mM,f)$ be a
$\BR$-$\tau$-model, and let $\phi$ be an $\LL_\B[\tau]$-formula with
free variables $x_0,\ldots,x_{k-1}$.
We write $(\mM,f)\models\phi[a_0/x_0,\ldots,a_{k-1}/x_{k-1}]$, or more
briefly $(\mM,f)\models\phi[\va /\vx]$, if $(\mM^+,f)\models\phi$, where
$\mM^+$ is the $\tau\cup\{x_0,\ldots,x_{k-1}\}$-expansion of $\mM$ with
$a_i=x_i^{\mM^+}$ for each $i< k$. Furthermore, we use the notation
$$
   \phi^{\mM,f} =\{\va\in\Dom(\mM)^k\mid (\mM,f)\models\phi[\va/\vx]\}
$$
for the {\em relation defined by the formula} $\phi$ in the model $(\mM,f)$.

Each sentence $\phi$ of a $\B$-logic $\LL_\B$ {\em defines} a class
$K_\phi$ of $\BR$-models:
$$
   K_\phi=\{(\mM,f)\in\Str_\BR(\tau)\mid (\mM,f)\models\phi\},
$$
where $\tau$ is the vocabulary of $\phi$. The expressive power of
$\LL_\B$ is determined by the collection of classes that are definable by
$\LL_\B$-sentences.
The comparison between a $\B$-logic $\LL_\B$ and a
${\widetilde{\B}}$-logic $\widetilde{\LL}_{\widetilde{\B}}$
is defined in terms of their expressive power in the usual way:

\begin{definition}\label{comparison}
$\LL_\B$ is {\em at most as strong}
as $\widetilde{\LL}_{\widetilde{\B}}$, in symbols
$\LL_\B\le\widetilde{\LL}_{\widetilde{\B}}$,
if every class $K\subseteq\Str_{\BR}(\tau)$ which
is definable in $\LL_\B$ is also definable in
$\widetilde{\LL}_{\widetilde{\B}}$. Furthermore, we say that
$\LL_\B$ is {\em strictly weaker} than $\widetilde{\LL}_{\widetilde{\B}}$,
$\LL_\B<\widetilde{\LL}_{\widetilde{\B}}$, if
$\LL_\B\le\widetilde{\LL}_{\widetilde{\B}}$ and
$\widetilde{\LL}_{\widetilde{\B}}\not\le\LL_\B$.
If $\LL_\B\le\widetilde{\LL}_{\widetilde{\B}}$ and
$\widetilde{\LL}_{\widetilde{\B}}\le\LL_\B$, we write
$\LL_\B\equiv\widetilde{\LL}_{\widetilde{\B}}$, and say that
$\LL_\B$ and $\widetilde{\LL}_{\widetilde{\B}}$ are \emph{equivalent}.
\end{definition}

As an example, consider first order logic with two sets $\B$
and $\widetilde{\B}$ of built-in relations. If
$\FO_\B\le\FO_{\widetilde{\B}}$, then for each relation
$S\in\B$ there is a formula $\phi_S\in\LL_{\widetilde{\B}}[\emptyset]$
such that $\phi_S$ defines $S$ in every $\BR$-model
$(\mM,f)$: $\phi_S^{\mM,f}=S^f$. This is because
the class $K_S=\{(\mM,f)\in\Str_{\BR}(\{x_0,\ldots,x_{k-1}\})
\mid (x_0^\mM,\ldots,x_{k-1}^\mM)\in S^f\}$ is trivially definable
in $\FO_\B$, so by assumption, it is also definable
in $\LL_{\widetilde{\B}}$. The converse is also
true: if all relations $S$ in $\B$ are definable by formulas $\phi_S$
of $\FO_{\widetilde{\B}}$ in this way, then $\FO_\B\le\FO_{\widetilde{\B}}$.
This is easy to prove by using the fact that $\FO$ allows substituting
relation symbols by formulas (see \cite{E}).

\begin{example}\label{order-sum}
For any $a,b\in\NN$, $a\le b$ holds if and only if there is $c\in\NN$ such that $a+c=b$. Thus the $\FO_+$-formula $\phi_\le(x,y):=\exists z\,\bplus(x,z,y)$ defines the order $\le$ on all $\BR$-models. Hence, we see that $\FO_\le\le\FO_+$.
\end{example}

In the literature, a class of structures with built-in linear order is said to be order-invariant if membership of a model $\mM$ in the class does not depend on the order $\le^\mM$.
We adopt this terminology to the corresponding notion in our framework: a class $K\subseteq\Str_\BR(\tau)$ is {\em order-invariant}
if the equivalence 
$$
  (\mM,f)\in K\iff (\mM,g)\in K
$$
holds for all $\tau$-models $\mM$ and all bijections 
$f,g\colon \Dom(\mM)\to |\Dom(\mM)|$. 

If $\phi$ is a sentence of 
some $\B$-logic $\LL_\B$ such that none of the relations 
$S\in \B$ occurs in $\phi$, then the class $K_\phi$
defined by $\phi$ is obviously order-invariant. Thus, we see
that Definition~\ref{comparison} extends the usual notion 
of comparing the expressive power of logics in the sense that 
$\LL\le \widetilde\LL$ if and only if $\LL_\B\le\widetilde\LL_\B$
with $\B=\emptyset$.

\subsection*{Generalized quantifiers with built-in relations}

The notion of generalized quantifier needs to be adapted to the
framework of built-in relations. While ordinary generalized quantifiers, like the Härtig quantifier, can be
used without problems in the context of built-in relations, the
characterization of semiregularity for $\B$-logics (see
Proposition \ref{semireg}) requires the notion of {\em quantifier
with potential built-in relations}, or more briefly,
$\BR$-{\em quantifier}.

\begin{definition}\label{b-quantifier}
Let $K_{Q}\subseteq\Str_\BR(\tau)$ be a class of $\BR$-$\tau$-models
which is closed under $\BR$-isomorphisms.
Then the corresponding $\BR$-quantifier $Q$ is a syntactic operator
which can be used for binding tuples of variables $\vx_R$ and
formulas $\psi_R$, $R\in\tau$, of some vocabulary $\sigma$ to obtain 
a new formula $Q\, (\vx_R\psi_R)_{R\in\tau}$ of vocabulary $\sigma$. 
The semantics of $Q$ is given by the clause
\[
   (\mM,f)\models Q\, (\vx_R\psi_R)_{R\in\tau}\iff
   (\Dom(\mM),(\psi_R^{\mM,f})_{R\in\tau},f) \in K_Q.
\]
Here the components of the tuple $\vx_R$ are assumed to be distinct, and the length of $\vx_R$ is assumed to be equal to the arity of $R$ for each $R\in\tau$.

The extension $\LL_\B(\Q)$ 
of a $\B$-logic $\LL_\B$ with a
set $\Q$ of $\BR$-quantifiers is obtained by adding the new formula formation rule
\begin{center}
    if $\psi_R$, $R\in\tau$, are formulas, then $Q\, (\vx_R\psi_R)_{R\in\tau}$ is a formula
\end{center}
 with the semantics above
for each $Q\in\Q$. Here we assume that the syntax and semantics of $\LL$ are given by a collection of formula formation rules and corresponding semantic clauses. This holds for all the concrete logics we consider in the paper.

We say that a $\BR$-quantifier $Q$ {\em is definable in} a $\B$-logic
$\LL_\B$ if its defining class $K_Q$ is definable in $\LL_\B$.
Furthermore, we say that $Q$ is {\em order-invariant} if $K_Q$
is order-invariant.
\end{definition}

It is worth noting that, by replacing $\BR$-$\tau$-models by ordinary  $\tau$-models in 
Definition~\ref{b-quantifier}, we recover the definition of an ordinary generalized quantifier 
of vocabulary~$\tau$   \cite{Lin}. 
The crucial difference to the standard definition of generalized
quantifiers is that even though the built-in relations in $\B$ are present
in the class $K_Q$ via the bijections $f$, they are not explicitly bound by $Q$ 
in the formula $Q\, (\vx_R\psi_R)_{R\in\tau}$.
This is important, since
otherwise definable relations could be substituted in place of them,
which would be against the whole idea of built-in relations being part of logical vocabulary.

In typical examples the vocabulary $\tau$ of a $\BR$-quantifier $Q$ is usually given in a more explicit form as $\tau=\{R_0,\ldots,R_{k-1}\}$. In such a case we use the notation $Q\,\vx_0,\ldots,\vx_{k-1}\,(\psi_0,\ldots,\psi_{k-1})$ instead of $Q\, (\vx_R\psi_R)_{R\in\tau}$.


\begin{example} \label{bqex}
Consider the class $K_Q\subseteq\Str_{\BR}(\joukko{P_a,P_b,P_c})$
that consists of all $\BR$-models
$(\mM,f)$ such that $|P_a^\mM|=|P_b^\mM|$, 
$\Dom(\mM)=P_a^\mM\cup P_b^\mM\cup P_c^\mM$, and $u <^f v <^f w$ whenever $u\in P_a^\mM$,
$v\in P_b^\mM$ and $w\in P_c^\mM$.
In the language of directed graphs, the sentence
    \[
Q\, x,y,z\;(\phi(x),\psi(y),\lnot\phi(z)\land\lnot\psi(z))
    \]
with $\phi(x)\sij \lnot\exists z E(z,x)$
and $\psi(x)\sij \lnot\exists z E(x,z)$
expresses that, in the built-in order, we have first the sources
of the directed graph, then the sinks (without overlap),
then the rest of the vertices, and there are equally many
sources and sinks.  Note that this example is not order-invariant.
\end{example}

The structures in $K_Q$ of the preceding example can be interpreted
as word models
corresponding to the language $L=\joukko{a^mb^mc^n \mid m,n\in\NN}$.
More generally, given any language $L\subseteq\Sigma^*$, we can define
the corresponding {\em language quantifier}: $Q_L$ is the
$\BR$-quantifier of the vocabulary $\tau_\Sigma=\{P_a\mid
a\in\Sigma\}$ such that $K_{Q_L}$ consists of all models
$(\mM,f)\in\Str_{\BR}(\tau_\Sigma)$ which encode strings in $L$.

\begin{figure}
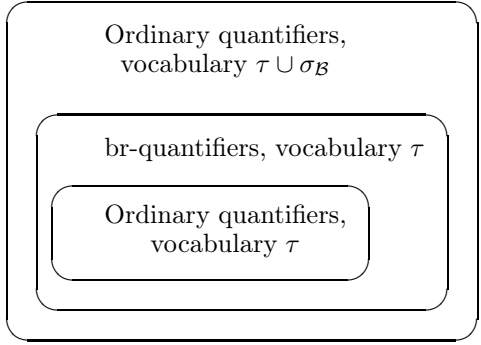
 
    $$\xy
0;<15mm,0mm>:
<-4mm,-2.5mm> !LU+<-2mm,4mm>="1",
0*+<2mm>!LU\txt{Ordinary quantifiers,\\vocabulary $\tau$} !RD+<2mm,-2mm>="1t",
"1"."1t" *\frm<3mm>{-}, !LD+<-2mm,-2mm>="1b" !LD+<-4mm,-4mm>="1c";
(0,0.6)*+<2mm>!LU\txt{$\BR$-quantifiers, vocabulary $\tau$},
!RU+<2mm,2mm>="2t", !RU+<4mm,4mm>="2u",
"1b"+<0mm,-2mm>."2t" *\frm<3mm>{-};
(0,1.6)*+<2mm>!LU\txt{Ordinary quantifiers,\\
                       vocabulary $\tau\cup\sigma_\B$}
!RU+<2mm,2mm>="3",
"1c"+<0mm,-2mm>."2u"."3" *\frm<3mm>{-}
    \endxy$$
\caption{Comparison between ordinary and $\BR$-quantifiers}
\label{qcomp}
\end{figure}

Figure~\ref{qcomp} compares the expressive capabilities of ordinary
quantifiers and quantifiers with built-in relations schematically.
Let us first explore the step from ordinary quantifiers to
$\BR$-quantifiers with the same vocabulary.
Any ordinary quantifier $Q$ can be interpreted as an order-invariant
$\BR$-quantifier $Q_\BR$ simply by expanding the models in $K_Q$ by all
possible ways to $\BR$-models: let
$$
   K_{Q_\BR}=\{(\mM,f)\mid \mM\in K_Q,\;
   f\colon \Dom(\mM)\to |\Dom(\mM)|\hbox{ is a bijection}\}.
$$
For example, if
$Q$ is the divisibility quantifier $\kD_2$, then we obtain in this way
the language quantifier $(\kD_2)_\BR=Q_L$, where $L=\{w\in\{a,b\}^*\mid
|\{i\mid w_i=a\}|\hbox{ is even}\}$.

Note that since the class $K_{Q_\BR}$ is order-invariant, given any 
model  $\mM$ and formulas $\psi_R$ for ${R\in\tau}$, the equivalence 
$$
  \mM\models Q\,(\vx_R\psi_R)_{R\in\tau}\iff 
  (\mM,f)\models Q_\BR\,(\vx_R \psi_R)_{R\in\tau}
$$
holds for all bijections $f\colon \Dom(\mM)\to |\Dom(\mM)|$.
Thus, we say that an \emph{ordinary quantifier $Q$ is 
definable} in a $\B$-logic $\LL_\B$ if $Q_\BR$ is definable in
$\LL_\B$.

On the other hand, for any set $\B$ of built-in relations, each
$\BR$-quantifier~$Q$ with vocabulary~$\tau$
can be lifted to an ordinary quantifier~$Q^\B$ with
vocabulary~$\tau\cup\sigma_\B$ where $\sigma_\B:=\{\tilde S\mid S\in \B\}$ is
a disjoint copy\footnote{Note that the symbols $\tilde S$ are not part of the logical vocabulary, unlike the original symbols $S\in\B$. This is why we use the notation $\sigma_\B$ instead of  $\tilde\B$.} of $\B$.
The defining class $K_{Q^\B}$ of the \emph{lift}~$Q^\B$ is
$$
   J_{Q,\B}=\joukko{(\mM,(S^f)_{\tilde S\in\sigma_\B}) \mid (\mM,f)\in K_Q}
$$
 Here naturally $(\mM,(S^f)_{\tilde S\in\sigma_\B})$ denotes the $\sigma_\B$-expansion $\mM^*$ of $\mM$ such that $\tilde S^{\mM^*}=S^f$ for each $\tilde S\in\sigma_\B$.
However, syntactically the lift behaves differently from the original
$\BR$-quantifier~$Q$, since the built-in relations in $\B$,
or any other definable relations taking over the role
of built-in relations, are explicitly bound whenever $Q^\B$ is applied.  

Naturally, this has an impact
on the semantics, too.  The sentences
$Q\, (\vx_R\psi_R)_{R\in\tau}$ and
$Q^\B\, (\vx_R\psi_R,\vy_S S(\vy_S))_{R\in\tau,S\in\B}$
are equivalent, but the syntax of $Q^\B$ allows also
more general sentences of form
$Q^\B\, (\vx_R\psi_R,\vy_S \theta_S(\vy_S))_{R\in\tau,S\in\B}$.

Note the difference between the steps $Q\mapsto Q_\BR$
and $Q\mapsto Q^\B$.  On one hand, if $Q$ is an ordinary quantifier,
then $\FO_\B(Q)\equiv \FO_\B(Q_\BR)$, but it is clear that there
are plenty of $\BR$-quantifiers that are not equivalent to
ordinary quantifiers in this way
(see Example~\ref{bqex}).  On the other hand, if $Q$ is a
$\BR$-quantifier, then $\FO_\B(Q)\le\FO_\B(Q^\B)$, but in general,
the $\B$-logics are not equivalent.
We shall return to this point in the following subsection,
Example~\ref{bqexcont}.

\subsection*{Semiregular $\B$-logics}

In the next two subsections, we will introduce the key
notions of regularity for logics with built-in relations.
These notions are obtained by adapting the usual framework
of abstract logic (see \cite{E}).

\begin{definition}
We say that a $\B$-logic $\LL_\B$
is {\em semiregular}, if for every vocabulary $\tau$, all atomic
formulas in $\tau$ are expressible in $\LL_{\B}[\tau]$, all relations
in $\B$ are definable in $\LL_{\B}[\tau]$, $\LL_{\B}$ is closed under
Boolean operations and first order quantification, and $\LL_\B$ is
{\em closed under substitution}:
\begin{itemize}
\item[(s)] If $\psi_R(\vx_R)$ are $\LL_\B[\sigma]$-formulas with
$|\vx_R|=\arity(R)$ for each $R\in\tau$, and $\phi$ is an
$\LL_\B[\tau]$-sentence, then there is an $\LL_\B[\sigma]$-sentence
$\theta$ such that
\[
   (\mM,f)\models\theta\iff
   (\Dom(\mM), (\psi_R^{\mM,f})_{R\in\tau},f)\models\phi
\]
holds for all $(\mM,f)\in\Str_\BR(\sigma)$.
\end{itemize}
\end{definition}

Note that substituting relations defined by formulas in place of the
built-in relations in $\B$ is not included in this definition. This is because
of the very idea of built-in relations: they are thought as {\em constants}
that are always present in the models considered.

For logics of the form $\FO_\B(\Q)$, substitution of formulas in place
of relations can be defined syntactically:
let $\phi[(\psi_R/R)_{R\in\tau}]$ be the sentence obtained by
replacing each occurrence $R(\vy)$ of $R$ in $\phi$ by the formula
$\psi_R(\vy)$. It straighforward to show that the equivalence
in (s) always holds for $\theta\sij\phi[(\psi_R/R)_{R\in\tau}]$.
Thus, we have

\begin{lemma}\label{foq-semireg}
$\FO_{\B}(\Q)$ is semiregular for any set $\B$ of built-in relations and
any class $\Q$ of $\BR$-quantifiers.
\end{lemma}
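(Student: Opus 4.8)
The plan is to check directly the four requirements in the definition of semiregularity for $\LL_\B=\FO_\B(\Q)$. Three of them are immediate from the fact that $\FO_\B(\Q)$ contains ordinary first-order logic over the vocabulary $\tau\cup\B$: every atomic $\tau$-formula is already a formula of $\FO_\B(\Q)[\tau]$; each built-in relation $S\in\B$ is defined in every $\BR$-model $(\mM,f)$ by the atomic formula $S(\vx)$, since $S(\vx)^{\mM,f}=S^f$ by the semantics of $\B$-logics; and closure under Boolean connectives and first-order quantification is part of the formula-formation rules of $\FO$. Thus the whole content of the lemma is the substitution property~(s).

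For~(s) I would define $\theta$ explicitly, following the remark preceding the lemma, as the syntactic substitution $\theta\sij\phi[(\psi_R/R)_{R\in\tau}]$ obtained by replacing every atomic subformula $R(\vy)$ with $R\in\tau$ by $\psi_R(\vy)$, after renaming the bound variables of $\phi$ so that they avoid the free variables occurring in the formulas $\psi_R$ (the usual precaution against variable capture). To prove the required equivalence I would strengthen the statement to arbitrary subformulas with parameters and argue by induction on the structure of $\phi$. Writing $\mM_\tau$ for the $\tau$-model $(\Dom(\mM),(\psi_R^{\mM,f})_{R\in\tau})$, so that $(\mM_\tau,f)$ is a $\BR$-$\tau$-model sharing the bijection $f$ with $(\mM,f)$, the claim to be proved by induction is that for every subformula $\phi'(\vz)$ and every assignment $\va$,
\[
(\mM,f)\models\phi'[(\psi_R/R)][\va/\vz]\iff(\mM_\tau,f)\models\phi'[\va/\vz].
\]
The base cases are routine: for an atomic $R(\vy)$ with $R\in\tau$ the two sides agree because $\psi_R^{\mM,f}=R^{\mM_\tau}$; for an atomic $S(\vy)$ with $S\in\B$ both sides reduce to $\va\in S^f$, since the same bijection $f$ governs the built-in relations on both models; equalities are trivial, and the Boolean and first-order quantifier steps go through because $\Dom(\mM)$ is common to $(\mM,f)$ and $(\mM_\tau,f)$.

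The one step that deserves care, and which I expect to be the crux, is the case of a $\BR$-quantifier $Q\in\Q$ of vocabulary $\rho$, say $\phi'=Q\,(\vx_P\chi_P)_{P\in\rho}$. Unwinding the semantics from Definition~\ref{b-quantifier}, and writing $\chi_P^{\mM,f,\va}$ for the relation that $\chi_P$ defines in the variables $\vx_P$ once $\va$ is assigned to $\vz$, the left-hand side asserts that $(\Dom(\mM),((\chi_P[(\psi_R/R)])^{\mM,f,\va})_{P\in\rho},f)\in K_Q$, while the right-hand side asserts that $(\Dom(\mM),(\chi_P^{\mM_\tau,f,\va})_{P\in\rho},f)\in K_Q$. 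Applying the induction hypothesis to each $\chi_P$ (now carrying the additional free variables $\vx_P$) gives $(\chi_P[(\psi_R/R)])^{\mM,f,\va}=\chi_P^{\mM_\tau,f,\va}$ as relations on $\Dom(\mM)$, so the two $\BR$-$\rho$-models fed into $K_Q$ are literally identical and membership in $K_Q$ agrees. The decisive point is that the bijection $f$ is transported unchanged through the substitution and is never touched by $Q$; this is precisely the feature of $\BR$-quantifiers stressed after Definition~\ref{b-quantifier}, namely that the built-in relations are not bound by $Q$. Were they bound, the substitution could interfere with them and the equivalence would fail. Taking $\phi'=\phi$ and using that $\phi$ is a sentence yields~(s), completing the verification of semiregularity.
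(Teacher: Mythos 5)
Your proof is correct and follows essentially the same route as the paper, which simply defines $\theta\sij\phi[(\psi_R/R)_{R\in\tau}]$ and declares the equivalence in (s) ``straightforward to show''; you have filled in the induction that the paper leaves implicit, correctly isolating the $\BR$-quantifier case and the fact that the bijection $f$ passes through the substitution untouched. Nothing further is needed.
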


Any semiregular $\B$-logic $\LL_\B$ is closed under quantification
with respect to any $\LL_\B$-definable quantifier. We prove next a useful
formulation of this principle.

\begin{lemma}\label{q-semireg}
Let $\B$ and $\widetilde{\B}$ be sets of numerical relations,
let $\Q$ be a set of $\BR$-quantifiers, and let $\LL_{\widetilde{\B}}$
be a semiregular $\widetilde{\B}$-logic. If all relations $S\in\B$ and
all quantifiers $Q\in\Q$ are definable in $\LL_{\widetilde{\B}}$, then
$\FO_\B(\Q)\le\LL_{\widetilde{\B}}$.
\end{lemma}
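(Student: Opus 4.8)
The plan is to prove the stronger statement that \emph{every} formula of $\FO_\B(\Q)$ has an equivalent formula of $\LL_{\widetilde{\B}}$, by induction on the structure of $\FO_\B(\Q)$-formulas. Concretely, I would show that for each $\FO_\B(\Q)[\tau]$-formula $\phi$ there is an $\LL_{\widetilde{\B}}[\tau]$-formula $\phi^*$ with the same free variables such that $\phi^{\mM,f}=(\phi^*)^{\mM,f}$ for every $(\mM,f)\in\Str_\BR(\tau)$. Following the paper's convention that free variables are constants in an expanded vocabulary, this single induction simultaneously handles open formulas and thus the nesting of quantifiers; specializing to a sentence $\phi$ then gives $K_\phi=K_{\phi^*}$, and hence $\FO_\B(\Q)\le\LL_{\widetilde{\B}}$.

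The base cases and the propositional and first-order steps use only semiregularity of $\LL_{\widetilde{\B}}$ together with the definability hypothesis. An atomic formula $R(\vx)$ with $R\in\tau$ is expressible in $\LL_{\widetilde{\B}}[\tau]$ because all atomic formulas are, by semiregularity; an atomic formula $S(\vy)$ with $S\in\B$ is replaced by an $\LL_{\widetilde{\B}}$-formula $\phi_S$ defining $S$ (i.e.\ $\phi_S^{\mM,f}=S^f$), which exists since every $S\in\B$ is definable in $\LL_{\widetilde{\B}}$ by hypothesis. Boolean connectives and the quantifier $\exists x$ are handled directly by closure of $\LL_{\widetilde{\B}}$ under Boolean operations and first-order quantification: the induction hypothesis supplies equivalent translations of the immediate subformulas, and equivalence is preserved because the defined relations agree on every $(\mM,f)$.

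The crux is the step for a generalized quantifier $Q\in\Q$, say of vocabulary $\tau_Q$, applied as $Q\,(\vx_R\psi_R)_{R\in\tau_Q}$. Here I would combine the two halves of the hypothesis. First, since $Q$ is definable in $\LL_{\widetilde{\B}}$, its defining class $K_Q$ is definable: fix an $\LL_{\widetilde{\B}}[\tau_Q]$-sentence $\phi_Q$ with $(\mN,g)\models\phi_Q\iff(\mN,g)\in K_Q$ for all $(\mN,g)\in\Str_\BR(\tau_Q)$. Second, by the induction hypothesis each $\psi_R$ has an equivalent translation $\psi_R^*$ with $(\psi_R^*)^{\mM,f}=\psi_R^{\mM,f}$. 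Now I would invoke the substitution property (s) of the semiregular logic $\LL_{\widetilde{\B}}$ with the sentence $\phi_Q$ and the formulas $\psi_R^*$, obtaining a formula $\theta$ over the working vocabulary such that $(\mM,f)\models\theta$ iff $(\Dom(\mM),((\psi_R^*)^{\mM,f})_{R\in\tau_Q},f)\models\phi_Q$. Reading the right-hand side through $\phi_Q$ and then replacing $(\psi_R^*)^{\mM,f}$ by $\psi_R^{\mM,f}$ turns it into the condition $(\Dom(\mM),(\psi_R^{\mM,f})_{R\in\tau_Q},f)\in K_Q$, which is exactly the semantic clause defining $(\mM,f)\models_{\FO_\B(\Q)}Q\,(\vx_R\psi_R)_{R\in\tau_Q}$. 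Thus $\theta$ is the required translation.

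I expect the only genuinely delicate point to be the bookkeeping in this last step: one must ensure that the bound variable tuples $\vx_R$ are precisely the components bound by (s), while the remaining free variables of the $\psi_R$ are carried along as constants of the working vocabulary, so that the substitution property applies verbatim and the relations $(\psi_R^*)^{\mM,f}$ feed into $\phi_Q$ with the correct arities. Everything else is routine, relying solely on the closure properties packaged into semiregularity.
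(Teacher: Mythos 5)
Your proposal is correct and follows essentially the same route as the paper's proof: an induction on formulas showing each $\FO_\B(\Q)$-formula has an equivalent $\LL_{\widetilde{\B}}$-formula, with the base cases handled by semiregularity and the definability of the relations in $\B$, and the quantifier step handled by applying the substitution property (s) to a defining sentence for $K_Q$ and the translated subformulas. The paper's argument is the same, down to the final chain of equivalences identifying $(\Dom(\mM),(\psi_R^{\mM,f})_{R\in\tau_Q},f)\in K_Q$ with the semantics of $Q\,(\vx_R\psi_R)_{R\in\tau_Q}$.
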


\begin{proof}
We prove by induction
that for every formula $\phi\in\FO_{\B}(\Q)[\sigma]$ there is an equivalent
formula $\theta\in\LL_{\widetilde{\B}}[\sigma]$, i.e.,
$\phi^{\mM,f}=\theta^{\mM,f}$, for every $(\mM,f)\in\Str_{\BR}(\sigma)$.
The claim is true for
atomic formulas in the vocabulary $\sigma$ by the definition of
semiregularity; for atomic formulas $S(\vx)$ with $S\in\B$, we
use the assumption that $S$ is definable in $\LL_{\widetilde{\B}}$.
The induction steps corresponding
to connectives and existential quantifier go through since,
by semiregularity, $\LL_{\widetilde{\B}}$ is closed
with respect to Boolean operations and first order quantification.

Consider finally the step corresponding to a quantifier $Q\in\Q$: let
$\phi$ be the formula $Q\, (\vx_R\psi_R)_{R\in\tau_Q}$, and assume that for each
$\psi_R$, $R\in\tau_Q$, there is an equivalent formula
$\eta_R\in\LL_{\widetilde{\B}}$. By our assumption on $\Q$,
there is a sentence $\chi\in\LL_{\widetilde{\B}}[\tau_Q]$ such
that $K_{Q}=\{(\mN,g)\in\Str_{\BR}(\tau_Q)\mid(\mN,g)\models\chi\}$. Since
$\LL_{\widetilde{\B}}$ is closed under substitution,
there is a sentence $\theta\in\LL_{\widetilde{\B}}[\sigma]$
such that for every $\BR$-$\sigma$-model $(\mM,f)$,
\[
   (\mM,f)\models\theta\iff
   (\Dom(\mM), (\eta_R^{\mM,f})_{R\in\tau_Q},f)\models\chi.
\]
On the other hand, as $\eta_R^{\mM,f}=\psi_R^{\mM,f}$ for each 
$R\in\tau$, we have
\[
   (\Dom(\mM), (\eta_R^{\mM,f})_{R\in\tau_Q},f)\models\chi
   \iff(\mM,f)\models Q\, (\vx_R\psi_R)_{R\in\tau_Q}.
\]
Thus we conclude that $\phi$ is equivalent with $\theta$.
\end{proof}

Without built-in relations, semiregularity
of a logic can be characterized in terms of generalized quantifiers:
$\LL$ is semiregular if and only if there is a class $\Q$ of
quantifiers such that $\LL\equiv\FO(\Q)$
(see, e.g., \cite{E}).
This characterization
remains valid in the framework of logics with built-in relations, once
we replace ordinary quantifiers with the appropriate $\BR$-quantifiers:

\begin{proposition}\label{semireg}
A logic $\LL_\B$ with built-in relations is semiregular if and only if there
is a class $\Q$ of $\BR$-quantifiers such that
$\LL_{\B}\equiv\FO_{\B}(\Q)$.
\end{proposition}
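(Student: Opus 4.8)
The plan is to prove the two directions separately, with the right-to-left implication following almost immediately from the lemmas already established, and the converse requiring the construction of one $\BR$-quantifier for each class definable in $\LL_\B$. For the direction from right to left, I would assume $\LL_\B\equiv\FO_\B(\Q)$ for some class $\Q$ of $\BR$-quantifiers and invoke Lemma~\ref{foq-semireg}, which says $\FO_\B(\Q)$ is semiregular. Since semiregularity is a condition on the collection of definable classes (expressibility of atomic formulas, definability of the relations in $\B$, closure under Boolean operations and first-order quantification, and the substitution property~(s)), it is invariant under the equivalence $\equiv$; hence $\LL_\B$ inherits semiregularity. The only thing to check here is that each defining clause transfers across $\equiv$, which is routine.

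For the converse, I would assume $\LL_\B$ is semiregular and turn each definable class into a quantifier. For every vocabulary $\tau$ and every $\LL_\B[\tau]$-sentence $\phi$, consider $K_\phi=\{(\mM,f)\in\Str_\BR(\tau)\mid(\mM,f)\models\phi\}$. Because $\LL$ is an abstract logic with the isomorphism property, and because a $\BR$-isomorphism $h\colon(\mM,f)\to(\mN,g)$ induces an isomorphism between the expansions $(\mM,(S^f)_{S\in\B})$ and $(\mN,(S^g)_{S\in\B})$ (as recorded before Definition~\ref{br-models}), the class $K_\phi$ is closed under $\BR$-isomorphisms. Thus $K_\phi$ is a legitimate defining class of a $\BR$-quantifier $Q_\phi$, and I would set $\Q=\{Q_\phi\mid\phi\text{ an }\LL_\B\text{-sentence}\}$.

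It then remains to verify $\LL_\B\equiv\FO_\B(\Q)$. For $\FO_\B(\Q)\le\LL_\B$, I would apply Lemma~\ref{q-semireg} with $\widetilde\B=\B$ and $\LL_{\widetilde\B}=\LL_\B$: each $S\in\B$ is definable in $\LL_\B$ by semiregularity, and each $Q_\phi$ is definable in $\LL_\B$ since its defining class $K_{Q_\phi}=K_\phi$ is defined by the very sentence $\phi$; the lemma then yields the inequality. For the reverse inclusion $\LL_\B\le\FO_\B(\Q)$, I would apply each quantifier to atomic formulas: given an $\LL_\B[\tau]$-sentence $\phi$, the $\FO_\B(\Q)$-sentence $Q_\phi\,(\vx_R R(\vx_R))_{R\in\tau}$ defines $K_\phi$. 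Indeed, substituting the atomic formula $R(\vx_R)$ for $R$ gives $R(\vx_R)^{\mM,f}=R^\mM$, so the semantic clause of Definition~\ref{b-quantifier} evaluates $(\Dom(\mM),(R^\mM)_{R\in\tau},f)=(\mM,f)$ against $K_{Q_\phi}=K_\phi$; hence $\phi$ and $Q_\phi\,(\vx_R R(\vx_R))_{R\in\tau}$ are equivalent, and every $\LL_\B$-definable class is $\FO_\B(\Q)$-definable.

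The conceptual point that makes the construction work — and the only place where genuine care is needed — is that the bijection $f$ is retained as a component of the models in $K_{Q_\phi}$; this is precisely what allows a single $\BR$-quantifier to capture a class whose membership depends on the built-in relations, and it is why ordinary quantifiers would not suffice (cf.\ Example~\ref{bqex}). I therefore expect the main obstacle to be bookkeeping rather than estimation: verifying that $K_\phi$ is $\BR$-isomorphism closed, and checking that the self-application of $Q_\phi$ to the atomic formulas $R(\vx_R)$ recovers $(\mM,f)$ unchanged. Both reduce to the remarks on $\BR$-isomorphisms and the semantics in Definition~\ref{b-quantifier}, so no difficult step is anticipated.
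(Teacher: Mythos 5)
Your proposal is correct and follows essentially the same route as the paper: the right-to-left direction via Lemma~\ref{foq-semireg}, and the left-to-right direction by taking $\Q$ to be (in effect) the class of all $\BR$-quantifiers definable in $\LL_\B$, getting $\FO_\B(\Q)\le\LL_\B$ from Lemma~\ref{q-semireg} and $\LL_\B\le\FO_\B(\Q)$ by applying each $Q_\phi$ to the atomic formulas $R(\vx_R)$. You merely spell out two details the paper leaves implicit ($\BR$-isomorphism closure of $K_\phi$ and the transfer of semiregularity across $\equiv$), both of which you correctly identify as routine.
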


\begin{proof}
We prove the implication from left to right; the other implication follows
directly from Lemma \ref{foq-semireg}.
Thus, assume that $\LL_\B$ is semiregular. We will show that
$\LL_\B\equiv\FO_\B(\Q)$, where $\Q$ is the class of all
$\BR$-quantifiers which are definable in $\LL_\B$.
Note first that $\LL_{\B}\le\FO_{\B}(\Q)$. Indeed, if $K$ is class of
$\BR$-structures which is definable in $\LL_\B$, then $K=K_Q$ for
a quantifier $Q\in\Q$, whence $K$ is trivially definable in $\FO_{\B}(\Q)$.
On the other hand, since all the relations in $\B$ and all the quantifiers
in $\Q$ are definable in $\LL_\B$, we have $\FO_{\B}(\Q)\le\LL_{\B}$
by Lemma \ref{q-semireg}.
\end{proof}

\begin{example} \label{bqexcont}
Consider the $\BR$-quantifier~$Q$ of Example~\ref{bqex}.
We show that $\FO_\le(Q)$ is strictly weaker than $\FO_\le(Q^\le)$.  First we note that $Q$
is definable in~$\FO_+$ since if $(\mM,f)\in\Str_\BR(\joukko{P_a,P_b,P_c})$
and $P^\mM_a$, $P^\mM_b$  and $P^\mM_c$ follow each other in the
required order (which is clearly $\FO_\le$-expressible), then
$|P^\mM_a|=|P^\mM_b|$ if and only if $v=u+^f u+^f 1$ where
$u=\max(P^\mM_a)$ and $v=\max(P^\mM_b)$.
Since $\FO_+$ is semiregular and $\le$ is definable in $\FO_+$ (see Example~\ref{order-sum}), we have $\FO_\le(Q)\le\FO_+$.

The strength of the lift is that the order may be tailored according to
our needs. Put
\begin{align*}
x\le' y\sij
     & \bigl( (U(x)\ekv U(y)) \land (V(x)\ekv V(y))\land x\le y \bigr) \\
     & \lor \bigl( U(x)\land \lnot V(x) \land
                     \lnot (U(y)\land \lnot V(y))\bigr) \\
     & \lor \bigl( \lnot U(x)\land V(x)
                     \land (U(y)\ekv V(y))\bigr) \\
     & \lor \bigl( U(x)\land V(x) \land \lnot U(y) \land \lnot V(y)\bigr).
\end{align*}
Thus, in every $(\mM,f)\in\Str_\BR(\joukko{U,V})$ the formula~$x\le' y$
defines a linear order such that $U^\mM\smallsetminus V^\mM$ is an 
initial segment
followed by the interval~$V^\mM\smallsetminus U^\mM$.
Consequently,
\[
   Q^\le\, x,y,z, tu\,(U(x)\land\lnot V(x),V(y)\land\lnot U(y),
   U(z)\ekv V(z), t\le' u)
\]
defines the H\"artig quantifier~$\kI$ in the logic~$\FO_\le(Q^\le)$.
On the other hand,
$Q^\le$ is readily seen to be definable in $\FO_\le(\kI)$,
so $\FO_\le(Q^\le)\equiv\FO_\le(\kI)$.

By Ajtai's result~\cite{A}, parity of sets is not
definable even in $\FO_\bit\ge\FO_+$, so
$\FO_\le(\kD_2)\not\le\FO_\le(Q)$, whereas
$\FO_\le(\kD_2)\le\FO_\le(\kI)\equiv\FO_\le(Q^\le)$.
Hence, we conclude that $\FO_\le(Q)<\FO_\le(Q^\le)$.
\end{example}

In the light of the preceding example, it is interesting to observe that
if we restrict attention to strong enough $\B$-logics, the notion of a 
$\BR$-quantifier
becomes superfluous and $\BR$-quantifiers may be replaced by
ordinary quantifiers, namely by their lifts.

\begin{proposition}
Let $\B$ be a finite set of built-in relations with ${\le}\in\B$,
and let $Q$ be a $\BR$-quantifier.
Then $\FO_\B(Q,\kI)\equiv\FO_\B(Q^\B,\kI)$ holds for the lift $Q^\B$
of~$Q$.
\end{proposition}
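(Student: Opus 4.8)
The plan is to prove the two inequalities separately. The forward inequality $\FO_\B(Q,\kI)\le\FO_\B(Q^\B,\kI)$ is the easy one: it already follows from the general observation $\FO_\B(Q)\le\FO_\B(Q^\B)$ noted before this subsection, since $Q\,(\vx_R\psi_R)_{R\in\tau}$ is equivalent to $Q^\B\,(\vx_R\psi_R,\vy_S S(\vy_S))_{R\in\tau,S\in\B}$. Indeed, feeding the genuine built-in relations $S(\vy_S)$ into the lift produces the tuple $(\Dom(\mM),(\psi_R^{\mM,f})_R,(S^f)_S)$, which lies in $J_{Q,\B}$ exactly when $(\Dom(\mM),(\psi_R^{\mM,f})_R,f)\in K_Q$, because ${\le}\in\B$ forces any witnessing bijection to coincide with $f$. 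Hence all the work is in the reverse inequality $\FO_\B(Q^\B,\kI)\le\FO_\B(Q,\kI)$, which I would establish by induction on formulas, the only nontrivial case being an application of the lift $Q^\B$.

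So fix a formula $Q^\B\,(\vx_R\psi_R,\vy_S\theta_S)_{R\in\tau,S\in\B}$ and assume inductively that the subformulas $\psi_R,\theta_S$ are already $\FO_\B(Q,\kI)$-formulas. By the definition of $J_{Q,\B}$, the model $(\mM,f)$ satisfies this formula iff there is a bijection $g\colon\Dom(\mM)\to|\Dom(\mM)|$ with $(\Dom(\mM),(\psi_R^{\mM,f})_R,g)\in K_Q$ and $\theta_S^{\mM,f}=S^g$ for every $S\in\B$. The key point is that, since ${\le}\in\B$, the condition $\theta_\le^{\mM,f}=\le^g$ pins $g$ down completely: $\le^g$ must be a linear order of $\Dom(\mM)$, and a bijection onto $|\Dom(\mM)|$ is determined by the order it induces. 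Thus I first write an $\FO_\le$-sentence asserting that $\le':=\theta_\le^{\mM,f}$ is a linear order (if it is not, the formula is simply false); and when it is, $g$ is the unique bijection whose induced order is $\le'$.

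The heart of the argument is to recover $g$ logically. Since both the built-in order and $\le'$ are linear orders of $\Dom(\mM)$, there is a unique order isomorphism $\rho$ from $(\Dom(\mM),{\le})$ to $(\Dom(\mM),{\le'})$, and its graph is definable in $\FO_\B(\kI)$ by $\rho(x,y):=\kI u,v\,(u<x,\ v<'y)$, i.e.\ $x$ and $y$ have equal rank in the two orders. Using $\rho$ I can, on the one hand, express each pulled-back relation $S^g$: one checks $\va\in S^g$ iff $\rho^{-1}(\va)\in S$, so $S^g$ is defined by $\exists\vy\,(\bigwedge_i\rho(y_i,x_i)\land S(\vy))$, and the required agreement $\theta_S^{\mM,f}=S^g$ becomes the $\FO_\B(\kI)$-sentence $\bigwedge_{S\in\B}\forall\vx\,(\theta_S(\vx)\ekv S^g(\vx))$ (a finite conjunction, as $\B$ is finite). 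On the other hand, to test $(\Dom(\mM),(\psi_R^{\mM,f})_R,g)\in K_Q$ without leaving the ambient bijection $f$, I invoke the closure of $K_Q$ under $\BR$-isomorphisms: the map $h=f^{-1}\circ g=\rho^{-1}$ is a $\BR$-isomorphism from $(\Dom(\mM),(\psi_R^{\mM,f})_R,g)$ onto $(\Dom(\mM),(\rho^{-1}[\psi_R^{\mM,f}])_R,f)$, so the former lies in $K_Q$ iff $(\mM,f)\models Q\,(\vx_R\psi_R'')_R$, where $\psi_R''(\vx):=\exists\vy\,(\bigwedge_i\rho(x_i,y_i)\land\psi_R(\vy))$ defines $\rho^{-1}[\psi_R^{\mM,f}]$. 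Conjoining the linear-order test, the agreement condition, and this final application of $Q$ yields the $\FO_\B(Q,\kI)$-sentence equivalent to the original lift formula, which completes the induction.

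The main obstacle is exactly this last re-application of $Q$: the whole subtlety separating a $\BR$-quantifier from its lift is that $Q$ can only ever \emph{see} the ambient bijection $f$, whereas the lift effectively quantifies over a fresh bijection $g$ tailored by the relativizing formulas $\theta_S$. Bridging this gap is what forces both hypotheses into play — ${\le}\in\B$ to make $g$ unique and recoverable, and $\kI$ to define the order isomorphism $\rho$ that transports the problem back to the bijection $f$ via the $\BR$-isomorphism invariance of $K_Q$. Everything else (the linear-order test, the finite conjunction over $S\in\B$, and the relation substitutions) is routine first-order bookkeeping.
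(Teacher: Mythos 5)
Your proof is correct and follows essentially the same route as the paper's: reduce to showing $Q^\B$ definable in $\FO_\B(Q,\kI)$, test that the supplied order $\le'$ is a linear order, use $\kI$ to define the order isomorphism between the built-in order and $\le'$ (your $\rho^{-1}$ is exactly the paper's permutation $h$ with $g=f\circ h$), verify agreement of the transported built-in relations by a finite conjunction over $\B$, and apply $Q$ to the $h$-images of the argument formulas via $\BR$-isomorphism invariance of $K_Q$. The only cosmetic difference is that the paper phrases the definability of $K_{Q^\B}$ over the vocabulary $\tau\cup\sigma_\B$ and then invokes closure under substitution, whereas you fold the substitution into an induction on formulas.
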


\begin{proof}
As $\FO_\B(Q,\kI)\le\FO_\B(Q^\B,\kI)$ and $\FO_\B(Q,\kI)$
is semiregular, it suffices to show that $Q^\B$ is definable
in $\FO_\B(Q,\kI)$. To do this, we need to find a
sentence $\phi$ of $\FO_\B(Q,\kI)$ such that the equivalence
$(\mN,f)\models\phi\iff \mN\in K_{Q^\B}$
holds for all
$\BR$-models $(\mN,f)$ of vocabulary $\tau\cup\sigma_\B$.

Thus, consider a $\BR$-model
$(\mN,f)\in\Str_\BR(\tau\cup\sigma_\B)$ with $|\Dom(\mN)|=n$.
Now $\mN\in K_{Q^\B}$ if and only if
there is a bijection
$g\colon\Dom(\mN)\to n$ such that
$\mN=(\mM,(S^g)_{\tilde S\in\sigma_\B})$ and $(\mM,g)\in K_Q$,
where $\mM=\mN\restriction\tau$.
We will give $\FO_\B(Q,\kI)$-sentences $\alpha$, $\beta$ and $\gamma$
such that
\begin{itemize}
\item[(1)] if $(\mN,f)\models\alpha$, then there is an
$\FO_\B(Q,\kI)$-definable permutation $h\colon \Dom(\mN)\to \Dom(\mN)$
such that ${\widetilde\le^\mN}={\le^g}$ for the bijection $g=f\circ h$;
\item[(2)] if $(\mN,f)\models\alpha\land\beta$, then
$\mN=(\mM,(S^g)_{\tilde S\in\sigma_\B})$, where $\mM=\mN\restriction\tau$;
\item[(3)] if $(\mN,f)\models\alpha\land\beta$, then
$(\mN,f)\models\gamma\iff(\mM,g)\in K_Q$.
\end{itemize}

Let $\alpha\in\FO(\{\widetilde\le\})$
express that $\widetilde\le$ is a linear order of the domain, and let
    \[
\eta(x,y)\sij \kI\,t,u\;(t\;{\widetilde\le}\; x,u\le y).
    \]
If $(\mN,f)\models\alpha$, then
$h:=\eta^{\mN,f}$ is an isomorphism $(\Dom(\mN),\widetilde\le^\mN)
\cong (\Dom(\mN),\le^f)$. Clearly this means that ${\widetilde\le^\mN}={\le^g}$,
where $g\colon \Dom(\mN)\to n$ is the bijection $f\circ h$.
Thus, condition (1) is satisfied.

For condition (2), we need a sentence $\beta$ expressing that
${\tilde S}^\mN=S^g$ for all $S\in\B$. Since $g= f\circ h$, this holds
if and only if the image of ${\tilde S}^\mN$ under $h$ coincides
with $S^f$. Thus, for each $S\in\B$, we let $\theta_S$ be
the $\FO_\B(\kI)[\sigma_\B]$-formula that defines the $h$-image
of $\tilde S$. Now we can let $\beta$ be the sentence
$\bigwedge_{S\in\B}\forall\vx_S (\theta_S(\vx_S)\ekv S(\vx_S))$.

Finally, observe that $(\mM,g)=(\mN\restriction \tau,g)\in K_Q$
if and only if $(h\mM,f)\in K_Q$, where $h\mM$ is the image of $\mM$
under $h$. 
Thus, we can satisfy condition (3) by letting $\gamma$ to
be the $\FO_\B(Q,\kI)[\tau\cup\sigma_\B]$-sentence
$Q\,(\vx_R\,\theta_R(\vx_R))_{R\in\tau}$, where $\theta_R$, $R\in\tau$,
is an $\FO_\B(\kI)[\tau\cup\sigma_\B]$-formula defining the
$h$-image of $R$.

To complete the proof, let $\phi:=\alpha\land\beta\land\gamma$.
By conditions (1)--(3), if $(\mN,f)\models\alpha\land\beta$, then
$(\mN,f)\models\phi\iff \mN\in K_{Q^\B}$. On the other hand, if
$(\mN,f)\not\models\alpha\land\beta$, then either $\widetilde\le^\mN$
is not a linear order, or ${\tilde S}^\mN$, $S\in\B$, are not proper
built-in relations, whence $\mN\not\in K_{Q^\B}$. Thus, we conclude that
$\phi$ defines the quantifier $Q^\B$.
\end{proof}

\subsection*{Regular $\B$-logics}


In the usual framework without built-in relations, an abstract logic $\LL$
is said to be regular if it is semiregular, and it is closed under
relativization (see \cite{E}). The latter condition is based on the notion
of {\em relativization} $\mM|U$ of a model $\mM\in\Str(\tau)$ to a non-empty subset
$U\subseteq\Dom(\mM)$ which is defined as follows:
\begin{itemize}
\item $\Dom(\mM|U)=U$,
\item $R^{\mM|U}=R^\mM\cap U^{\arity(R)}$ for each $R\in\tau$.
\end{itemize}
In the context of $\B$-logics, we need to extend this definition
to $\BR$-models $(\mM,f)$. There is a canonical way of doing this:
given a bijection $f\colon \Dom(\mM)\to |\Dom(\mM)|$
and a subset $U\subseteq\Dom(\mM)$, we use the unique order preserving
bijection $U\to |U|$ in the relativized model.

\begin{definition}
The {\em relativization} of a $\BR$-model $(\mM,f)\in\Str_\BR(\tau)$ 
to a subset
$U\subseteq\Dom(\mM)$ is the $\BR$-model $(\mM,f)|U:=(\mM|U,f_U)$, where
$\mM|U\in\Str(\tau)$ is the relativization of $\mM$ to the set $U$,
and $f_U\colon U\to |U|$ is the unique bijection such that for all $a,b\in U$,
$f_U(a)\le f_U(b)\iff f(a)\le f(b)$.
\end{definition}

Note that 
built-in linear order behaves in a relativization
$(\mM|U,f_U)$ in the same way as relations in the vocabulary $\tau$
of the model $\mM$: ${\le^{f_U}}={{\le^f}\cap U^2}$.
However, this is not true for other built-in relations in general.
For example, the restriction $\bit^f\cap U^2$ of the $\bit$-relation
to a set $U$ is usually no more a $\bit$-relation on $U$.

\begin{definition}
We say that a $\B$-logic $\LL_\B$ is {\em regular}
if it is semiregular and it is {\em closed under
relativization}:
\begin{itemize}
\item[(r)] If $\psi(x)$ is an
$\LL_\B[\tau]$-formula with one free variable,
and $\phi$ is an $\LL_\B[\tau]$-sentence, then there is an
$\LL_\B[\tau]$-sentence $\theta$ such that
\[
     (\mM,f)\models\theta\iff (\mM,f)|\psi^{\mM,f}\models\phi
\]
holds for all $(\mM,f)\in\Str_\BR(\tau)$.
\end{itemize}
\end{definition}

In the case of $\FO_\le$, relativization can be defined syntactically
in the same way as for $\FO$ without built-in relations. Thus,
$\FO_\le$ is regular. We will prove a more general result:
any quantifier extension $\FO_\le(\Q)$ of $\FO_\le$ is regular,
provided that all the quantifiers $Q\in\Q$ admit relativization.
A simple sufficient condition for this is universe independence.
A $\BR$-quantifier $Q$ of vocabulary $\tau$ is {\em universe independent} if
\[
(\mM,f)\in K_Q\iff (\mM,f)|U\in K_Q
\]
whenever $(\mM,f)\in\Str_\BR(\tau)$ and $U\subseteq \Dom(\mM)$ is such that
$R^{\mM}\subseteq U^{\arity(R)}$ for each $R\in\tau$.

\begin{proposition}\label{foq-reg}
$\FO_\le(\Q)$ is regular for any class $\Q$ of universe independent
$\BR$-quanti\-fiers.
\end{proposition}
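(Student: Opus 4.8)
The plan is to reduce the claim to closure under relativization. By Lemma~\ref{foq-semireg}, $\FO_\le(\Q)$ is semiregular, so the only thing left to verify is condition~(r). Fix a vocabulary $\tau$, an $\FO_\le(\Q)[\tau]$-formula $\psi(x)$ with one free variable, and for a $\BR$-model $(\mM,f)$ abbreviate $U=\psi^{\mM,f}$; we may assume $U\neq\emptyset$, the case $U=\emptyset$ being disposed of by conjoining $\exists x\,\psi(x)$ to the sentence produced below. I would define by induction on formulas a \emph{relativization} $\chi\mapsto\chi^{[\psi]}$ maintaining the invariant that, writing $\vx$ for the free variables of $\chi$, (i)~$(\chi^{[\psi]})^{\mM,f}\osaj U^{|\vx|}$ and (ii)~$(\mM,f)\models\chi^{[\psi]}[\va/\vx]\iff(\mM,f)|U\models\chi[\va/\vx]$ for all $\va\in U^{|\vx|}$. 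Taking $\theta:=\phi^{[\psi]}$ for the given sentence $\phi$ then gives $(\mM,f)\models\theta\iff(\mM,f)|U\models\phi$, which is exactly~(r).

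On atomic, Boolean and first-order-quantifier formulas the translation is the familiar guarded one: $R(\vx)^{[\psi]}:=R(\vx)\land\bigwedge_i\psi(x_i)$ for $R\in\tau$, $(x\le y)^{[\psi]}:=(x\le y)\land\psi(x)\land\psi(y)$, $(\lnot\chi)^{[\psi]}:=\lnot\chi^{[\psi]}\land\bigwedge_i\psi(x_i)$, conjunction distributing over $[\psi]$, and $(\exists y\,\chi)^{[\psi]}:=\exists y\,(\psi(y)\land\chi^{[\psi]})$. The guards $\bigwedge_i\psi(x_i)$ enforce invariant~(i). The only nonroutine atomic clause is the order: invariant~(ii) for $x\le y$ holds precisely because of the fact recorded before the definition of regularity that built-in order restricts correctly, ${\le^{f_U}}={\le^f}\cap U^2$. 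This is the single point where $\le\in\B$ is used rather than an arbitrary built-in relation.

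The heart of the argument is the quantifier clause. For $Q\in\Q$ of vocabulary $\tau_Q$, put
\[
   \bigl(Q\,(\vx_R\chi_R)_{R\in\tau_Q}\bigr)^{[\psi]}
   :=Q\,(\vx_R\,\chi_R^{[\psi]})_{R\in\tau_Q}\land\bigwedge_z\psi(z),
\]
with $z$ ranging over the free variables. Fix $(\mM,f)$ and an assignment $\vb\in U^{|\vz|}$ of the free variables. By the induction hypothesis each $\chi_R^{[\psi]}$ defines a relation $P_R\osaj U^{|\vx_R|}$ that agrees with the relation $\chi_R$ defines in $(\mM,f)|U$. Let $\mM_0$ be the $\tau_Q$-structure on $\Dom(\mM)$ with $R^{\mM_0}=P_R$. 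Since $P_R\osaj U^{|\vx_R|}$ for every $R\in\tau_Q$, universe independence of $Q$ applies to $(\mM_0,f)$ and $U$, giving $(\mM_0,f)\in K_Q\iff(\mM_0,f)|U\in K_Q$. As $P_R\osaj U^{|\vx_R|}$, the right-hand model is exactly $(U,(P_R)_{R\in\tau_Q},f_U)$, and its membership in $K_Q$ is by definition the truth of $Q\,(\vx_R\chi_R)_R$ in $(\mM,f)|U$, while the left-hand side is the truth of $Q\,(\vx_R\chi_R^{[\psi]})_R$ in $(\mM,f)$. Together with the guard $\bigwedge_z\psi(z)$ this yields invariants~(i) and~(ii) for the quantifier step.

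I expect the quantifier clause to be the main obstacle: the guards must be placed so that every substituted relation lands inside $U$, since it is exactly the hypothesis $R^{\mM_0}\osaj U^{\arity(R)}$ that lets universe independence transport the evaluation of $Q$ between the restricted universe $U$ (with its order-preserving bijection $f_U$) and the full universe $\Dom(\mM)$ (with $f$). All the remaining clauses are the standard syntactic relativization, sound for the order only because of the special behaviour of built-in $\le$ noted above.
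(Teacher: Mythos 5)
Your proof is correct and follows essentially the same route as the paper's: a syntactic relativization defined by induction on formulas, with the order atoms handled via ${\le^{f_U}}={\le^f}\cap U^2$ and the quantifier step handled by applying universe independence to the structure whose relations are the substituted relations, which are contained in $U$. The only difference is bookkeeping — you guard atomic formulas and negations so that every defined relation lands inside $U$, whereas the paper places the guards $\psi^{k_R}(\vx_R)$ inside the quantifier application and states the induction invariant only for tuples from $\psi^{\mM,f}$; both variants are sound.
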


\begin{proof}
The {\em relativization}
$\phi|\psi$ of a formula $\phi\in\FO_\le(\Q)[\tau]$ with respect to
a formula $\psi\in\FO_\le(\Q)[\tau]$
with one free variable is defined inductively as follows:
\begin{itemize}
\item $\phi|\psi\sij\phi$ \quad for atomic $\tau$-formulas $\phi$,
\item $(x\le y)|\psi\sij x\le y$,
\item $(\lnot\phi)|\psi\sij\lnot \,\phi|\psi$,
\item $(\phi\land\theta)|\psi\sij \phi|\psi\land\theta|\psi$,
\item $(\exists y \,\phi)|\psi\sij \exists y\,(\psi(y)\land\phi|\psi)$
\item $(Q\,(\vx_R\eta_R)_{R\in\tau_Q})|\psi\sij
Q\,(\vx_R(\psi^{k_R}(\vx_R)
\land \eta_R|\psi))_{R\in\tau_Q}$ \quad for $Q\in\Q$.
\end{itemize}
Here $k_R=\arity(R)$ and $\psi^{k_R}(\vx_R)$ is the conjunction of
$\psi(x_i)$ over all components $x_i$ of $\vx_R$.
We prove by induction on $\phi$ that for all $\BR$-models
$(\mM,f)$ and all tuples $\va$ of elements in $\psi^{\mM,f}$
\[
(\mM,f)\models(\phi|\psi)[\va/\vx]\iff (\mM,f)|\psi^\mM\models
\phi[\va/\vx].
\]
In particular, if $\phi$ is a sentence, the
equivalence in condition (r) holds for $\theta\sij\phi|\psi$.

The claim for atomic $\tau$-formulas and
the induction steps for connectives and existential quantifier
are proved exactly as for $\FO$ without built-in relations. The
claim for atomic formulas of the form $x\le y$ follows from the
fact that $\le^{(\mM,f)|\psi^{\mM,f}}$ is the restriction of
$\le^{\mM,f}$ to the set $\psi^{\mM,f}$.
Consider then the induction step for a quantifier $Q\in\Q$.
Let $\mM$ and $\va$ be fixed, and assume that
$\phi$ is of the form $Q\,(\vx_R\eta_R)_{R\in\tau_Q}$. Then
$\phi|\psi\sij Q\,(\vx_R\chi_R)_{R\in\tau_Q}$, where
$\chi_R$ is the formula $\psi^{k_R}(\vx_R) \land \eta_R|\psi$
for each $R\in\tau_Q$. To simplify notation, we denote $\mM|\psi^{\mM,f}$
by $\mN$, $f_{\psi^{\mM,f}}$ by $g$, and the expansions of $\mM$ and 
$\mN$ by the constants
$\va$ by $\mM^+$ and $\mN^+$, respectively.
By induction hypothesis, we have
\[
(\mM,f)\models(\eta_R|\psi)[\va/\vx,
\vb/\vx_R]\iff
(\mN,g)\models\eta_R[\va/\vx,\vb/\vx_R]
\]
for all $R\in\tau_Q$ and for all tuples $\vb$ in $\psi^\mM$, whence
\[
\chi_R^{\mM^+,f}=(\psi^{\mM,f})^{k_R}\cap (\eta_R|\psi)^{\mM^+,f}=
\eta_R^{\mN^+,g}
\]
for all $R\in\tau_Q$. Now we get the following chain of equivalences:
\begin{align*}
(\mM,f)\models(\phi|\psi)[\va/\vx]
&\iff (\Dom(\mM),(\chi_R^{\mM^+,f})_{R\in\tau_Q},f)\in K_Q\\
&\iff (\psi^\mM,(\chi_R^{\mM^+,f})_{R\in\tau_Q},g)\in K_Q \\
&\iff (\mN,g)\models\phi[\va/\vx].
\end{align*}
Here the second equivalence is true since $Q$ is universe independent,
and clearly $\chi_R^{\mM^+,f}\subseteq (\psi^{\mM,f})^{k_R}$ for each
$R\in\tau_Q$.
\end{proof}


Note that the proof of Proposition \ref{foq-reg}  cannot be
extended to $\FO_\B(\Q)$ if $\B$ contains a numerical relation
$S$ such that $S^{f_U}\not= S^f\cap U^{\arity(S)}$ for some bijection
$f\colon \Dom(\mM)\to n$ and some subset $U$ of $\Dom(\mM)$.
In fact, as we shall see in Section \ref{CraneBeach}, $\FO_\B$ is
usually not regular if $\B$ contains such a relation.
However, assuming that ${\le}\in\B$, this problem can be avoided by adding
the H\"artig quantifier to $\FO_\B$.


\begin{proposition}\label{luosto1}
(\cite{Lu1})  For any set $\B$ of numerical relations with ${\le}\in\B$
there is a set $\Q_\B$ of universe independent quantifiers such that
$\FO_\B(\kI)\equiv\FO_\le(\kI,\Q_\B)$.
\end{proposition}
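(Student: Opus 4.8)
The plan is to realize each built-in relation $S\in\B$ by a single universe independent $\BR$-quantifier $Q_S$ and to set $\Q_\B=\joukko{Q_S\mid S\in\B,\ S\ne{\le}}$. Since $\FO_\B(\kI)$ and $\FO_\le(\kI,\Q_\B)$ are both semiregular by Lemma~\ref{foq-semireg}, and $\le$ is built in on either side, Lemma~\ref{q-semireg} reduces the asserted equivalence to two definability tasks. For $\FO_\le(\kI,\Q_\B)\le\FO_\B(\kI)$ it suffices that every $Q_S$ is definable in $\FO_\B(\kI)$ (the quantifier $\kI$ and the relation $\le$ being available there as $\le\in\B$). For the reverse inclusion $\FO_\B(\kI)\le\FO_\le(\kI,\Q_\B)$ it suffices that every $S\in\B$ is definable in $\FO_\le(\kI,\Q_\B)$, the relation $\le$ being directly available. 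So the whole argument comes down to designing $Q_S$ to satisfy both demands while remaining universe independent.

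The crux, and the main obstacle, is a mismatch of ranges. On a $\BR$-model of domain size $N$ the built-in relation $S$ gives access only to $S\cap N^{\arity(S)}$, i.e.\ to tuples of positions $<N$, while the rank $f(a)=|\joukko{b\mid b<^f a}|$ of an element also ranges over $\joukko{0,\ldots,N-1}$ only. Thus $S^f(a_1,\ldots,a_k)$ is exactly $(f(a_1),\ldots,f(a_k))\in S$ with all arguments $<N$, so recovering $S$ from rank information will be unproblematic. The difficulty is the converse: the naive cardinality quantifier testing ``$(|U_1|,\ldots,|U_k|)\in S$'' is universe independent but is \emph{not} definable in $\FO_\B(\kI)$, because as soon as some $|U_i|$ equals the domain size no element has that rank and the built-in $S$ cannot be queried there. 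I would circumvent this corner with a guard predicate.

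Concretely, put $k=\arity(S)$ and let $Q_S$ be the $\BR$-quantifier of vocabulary $\joukko{V,U_1,\ldots,U_k}$ (all unary) whose defining class consists of those $(\mM,f)$ with $U_i^\mM\subsetneq V^\mM$ for every $i$ and $(|U_1^\mM|,\ldots,|U_k^\mM|)\in S$. Because these conditions mention only the interpretations of $V,U_1,\ldots,U_k$ (neither $f$ nor $\Dom(\mM)$), $Q_S$ is universe independent. To define $Q_S$ in $\FO_\B(\kI)$ I would write a sentence asserting, for each $i$, the \emph{first-order} condition $\forall y\,(U_i(y)\to V(y))\land\exists z\,(V(z)\land\lnot U_i(z))$, which expresses $U_i\subsetneq V$ and hence $|U_i|<|V|\le N$, followed by $\exists c_1\cdots c_k\,(\bigwedge_i \kI x,y\,(x<c_i,U_i(y))\land S(c_1,\ldots,c_k))$. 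Here $\kI x,y\,(x<c_i,U_i(y))$ forces $f(c_i)=|U_i|$, and such a $c_i$ exists precisely because $|U_i|<N$; then $S(c_1,\ldots,c_k)$ tests $(|U_1|,\ldots,|U_k|)\in S$. The guard $V$ is exactly what makes the ranks $c_i$ available.

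For the reverse direction I would define $S(x_1,\ldots,x_k)$ in $\FO_\le(\kI,\Q_\B)$ by applying $Q_S$ with $U_i\sij\joukko{y\mid y<x_i}$ and with $V$ the initial segment $\joukko{v\mid v\le x_{\max}}$, where $x_{\max}$ is the $\FO_\le$-definable maximum of $x_1,\ldots,x_k$. Then $|U_i|=f(x_i)$, the guard $U_i\subsetneq V$ holds since $|V|=\max_i f(x_i)+1>f(x_i)$, and so $Q_S$ returns the truth value of $(f(x_1),\ldots,f(x_k))\in S$, which is $S^f(x_1,\ldots,x_k)$. The remaining work is routine: the $\FO_\le$-definability of $x_{\max}$, the handling of degenerate cases (a minimal coordinate gives $U_i=\emptyset$, and very small domains), and assembling $\Q_\B$. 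Combining the two reductions yields $\FO_\B(\kI)\equiv\FO_\le(\kI,\Q_\B)$ with every member of $\Q_\B$ universe independent, which moreover makes the right-hand logic regular via Proposition~\ref{foq-reg}.
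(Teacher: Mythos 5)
Your proposal is correct and follows essentially the same route as the paper: each built-in relation $S$ is realized by a single universe independent quantifier testing a cardinality condition on initial-segment predicates, and the two inclusions are then assembled via Lemma~\ref{q-semireg}. The only cosmetic difference is that the paper resolves the range mismatch you identify by a $-1$ shift, taking $K_{\kB_S}=\{\mM\mid(|P_0^\mM|-1,\ldots,|P_{k-1}^\mM|-1)\in S\}$ and working with the closed initial segments $\{b\mid b\le^f x_i\}$ (which are nonempty and of cardinality at most the domain size), rather than by your guard predicate $V$.
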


\begin{proof}
Let $S\in \B$, and let $\kB_S$ be the universe independent quantifier with
defining class
\[
    K_{\kB_S}=\{\mM\in\Str_\BR(\{P_0,\ldots,P_{k-1}\})
    \mid (|P_0^\mM|-1,\ldots,|P_{k-1}^\mM|-1)\in S\}.
\]
It is easy to see that $\kB_S$ is definable in $\FO_{\{\le,S\}}(\kI)$.
Conversely, the built-in relation $S$ is clearly definable in 
$\FO_\le(\kI,\kB_S)$. The claim follows now from Lemma~\ref{q-semireg}.
\end{proof}

\begin{example}\label{builder}
(a) Clearly all cardinality quantifiers $\kC_S$, the H\"artig quantifier
$\kI$ and the divisibility quantifier $\kD$ are universe independent.
Hence, the logics $\FO_\le(\kC_S)$, $\FO_\le(\kI)$ and $\FO_\le(\kD)$
are regular.

(b) By Propositions~\ref{luosto1} and \ref{foq-reg}, 
the logic $\FO_\B(\kI)$ is regular for 
any set $\B$ of built-in relations
such that ${\le}\in\B$. 
\end{example}

\section{Regular interior and closure of $\AC$}\label{CraneBeach}

There is a notable feature that separates the circuit complexity class
$\AC$ from $\TC$: the former has a definite weakness against ``padding''
of input strings when computing natural cardinality
properties. Indeed, the property of strings being of even length is
trivially in $\AC$, while, by the famous theorem of Ajtai \cite{A} and
Furst, Saxe and Sipser \cite{FSS}, the property of binary strings of
having an even number of $1$'s is not in $\AC$. This weakness of $\AC$
can be formulated in a precise way in terms of the logic
$\FO_{\{+,\times\}}$ capturing it: $\FO_{\{+,\times\}}$ is not closed
under relativization, and so it is not a regular
$\{+,\times\}$-logic. On the other hand, the logics capturing $\TC$,
like $\FO_{\{+,\times\}}(\kI)$, are regular.

Thus, from a logical perspective, we can say that $\TC$ is a better
behaving class than $\AC$. But regularity is also a very natural
requirement from the computational point of view. Just note that a
logic $\LL_\B$ capturing a complexity class $C$ is closed under
substitution if and only if $C$ is closed under composition of
queries. Similarly, the relativization property for $\LL_\B$ translates to
the requirement that $C$ is closed under restricting $C$-computable
queries to $C$-computable subsets of input structures.

In this section, we will study two ways of addressing the weakness of
$\AC$. The first one is to look for a largest possible fragment of
$\AC$ that is regular. This leads us to the notion of regular
interior of a logic. The second alternative is to look for a minimal
regular extension of $\AC$. For this purpose, we adapt the notion of
regular closure used in the area of Abstract Logic to the case of
$\B$-logics.

\subsection*{Regular interior and regular closure}

Let $\B$ be a set of numerical relations, and let $\LL_\B$
be a semiregular logic. Assume further that the order $\le$ is in $\B$
(or it is definable in $\LL_\B$). Then we can show that there exists a largest 
regular logic that is contained in $\LL_\B$  (see Proposition~\ref{RintL}). 
We call it the
{\em regular interior} of $\LL_\B$, and denote it by
$\RI(\LL_\B)$. Regular interior was introduced for logics without built-in
relations in \cite{Lu2}.

The definition of regular interior is based
on the notion of universe independence, which we introduced
in the previous section.

\begin{definition}
Let $\LL_\B$ be a semiregular $\B$-logic such that $\le$
is definable in $\LL_\B$. We set
$\RI(\LL_\B)\sij\FO_{\le}(\Q_u)$, where $\Q_u$ is the class of all universe
independent $\BR$-quantifiers which are definable in $\LL_\B$.
\end{definition}

Before showing that $\RI(\LL_\B)$ has the desired properties, we
introduce an auxiliary notion, and prove a couple of lemmas.

There is a canonical way of obtaining a universe independent quantifier
from any given quantifier: the {\em regularization}
$Q^{\reg}$ of a $\BR$-quantifier $Q$ is the $\BR$-quantifier of vocabulary
$\sigma =\tau_{Q}\cup\{P\}$ with $P$ a new unary relation symbol,
having the defining class
     \[
K_{Q^{\reg}}=\{(\mM,f)\in\Str_\BR(\sigma)\mid ((\mM,f)|P^{\mM})
\upharpoonright\tau_Q\in K_Q\}.
     \]
Here $((\mM,f)|P^{\mM})\upharpoonright\tau_Q$  is the reduct of
$(\mM,f)|P^{\mM}$ to the vocabulary~$\tau_Q$.

\begin{lemma}\label{Qreg}
$Q^{\reg}$ is universe independent for any $\BR$-quantifier $Q$.
\end{lemma}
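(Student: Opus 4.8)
The plan is to verify the defining condition of universe independence directly for $Q^{\reg}$. Recall that the vocabulary of $Q^{\reg}$ is $\sigma=\tau_Q\cup\{P\}$ with $P$ a fresh unary symbol. So I would fix $(\mM,f)\in\Str_\BR(\sigma)$ together with a subset $U\subseteq\Dom(\mM)$ satisfying $R^\mM\subseteq U^{\arity(R)}$ for every $R\in\sigma$, and show that $(\mM,f)\in K_{Q^{\reg}}$ iff $(\mM,f)|U\in K_{Q^{\reg}}$. Since $P$ is unary, the hypothesis in particular yields $P^\mM\subseteq U$, and this single containment is the only consequence I will need.

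The key technical point I would isolate first is the \emph{transitivity of relativization}: for any $V\subseteq U\subseteq\Dom(\mM)$ one has $((\mM,f)|U)|V=(\mM,f)|V$. The agreement of the underlying $\tau$-models is immediate, since $R^{\mM|U}\cap V^{\arity(R)}=(R^\mM\cap U^{\arity(R)})\cap V^{\arity(R)}=R^\mM\cap V^{\arity(R)}$ using $V\subseteq U$. The only point requiring a word is that the two order-preserving bijections coincide, i.e. $(f_U)_V=f_V$: both are bijections $V\to|V|$, and both preserve the very same linear order on $V$, namely the restriction of the $f$-order (because $f_U$ itself preserves the $f$-order on $U\supseteq V$); by the uniqueness clause in the definition of relativization they are equal. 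This is the main, though entirely routine, obstacle.

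With transitivity in hand, the claim collapses. Writing $(\mN,g):=(\mM,f)|U$, the containment $P^\mM\subseteq U$ gives $P^\mN=P^\mM\cap U=P^\mM$, so relativizing $(\mN,g)$ to its $P$-part and applying transitivity with $V=P^\mM$ yields $(\mN,g)|P^\mN=((\mM,f)|U)|P^\mM=(\mM,f)|P^\mM$. Taking the $\tau_Q$-reduct of both sides then gives
\[
\bigl((\mN,g)|P^\mN\bigr)\upharpoonright\tau_Q=\bigl((\mM,f)|P^\mM\bigr)\upharpoonright\tau_Q.
\]
Hence the two membership conditions defining $(\mM,f)\in K_{Q^{\reg}}$ and $(\mM,f)|U\in K_{Q^{\reg}}$ are literally the same assertion about $K_Q$, and the desired equivalence is immediate. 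In short, because $K_{Q^{\reg}}$ already relativizes to $P$ before consulting $K_Q$, and because $P^\mM\subseteq U$ is forced by the universe-independence hypothesis, shrinking the universe to $U$ beforehand has no effect on membership.
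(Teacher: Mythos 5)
Your proposal is correct and follows essentially the same route as the paper's proof: both reduce the claim to the identity $((\mM,f)|U)|P^{\mM}=(\mM,f)|P^{\mM}$, which holds since $P^{\mM}\subseteq U$, and then observe that the two membership conditions for $K_{Q^{\reg}}$ coincide. You merely spell out in more detail the transitivity of relativization (including the agreement of the order-preserving bijections), which the paper dismisses with ``clearly''.
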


\begin{proof}
Assume that $(\mM,f)$ is a $\BR$-model
of vocabulary $\tau_Q\cup\{P\}$ and $U$ is a subset of $\Dom(\mM)$
such that $R^\mM\subseteq U^{\arity(R)}$ for all $R\in\tau_Q\cup \{P\}$.
In particular $P^\mM\subseteq U$, whence clearly $((\mM,f)|U)|P^\mM=
(\mM,f)|P^\mM$, and so we have the chain of equivalences
\begin{align*}
(\mM,f)\in K_{Q^{\reg}}
&\iff ((\mM,f)|P^\mM)\upharpoonright\tau_Q\in K_Q \\
&\iff (((\mM,f)|U)|P^\mM)\upharpoonright\tau_Q\in K_Q \\
&\iff (\mM,f)|U\in K_{Q^{\reg}}.
\end{align*}
\end{proof}

It is easy to see that $Q$ is always definable in $\FO_\le(Q^\reg)$: indeed,
for any $\BR$-model $(\mM,f)$, we have $(\mM,f)\in K_Q\iff (\mM,f)\models
Q^{\reg}(\vx_R \psi_R)_{R\in\tau_Q\cup\{P\}}$, where
$\psi_R\sij R(\vx_R)$ for $R\in\tau_Q$ and $\psi_P\sij (x=x)$. The converse
direction does not hold in general, but it becomes true in the context of
a regular $\B$-logic.

\begin{lemma}\label{regdef}
Let $\LL_\B$ be a regular $\B$-logic, and let $Q$ be a $\BR$-quantifier.
If $Q$ is definable in $\LL_\B$, then $Q^\reg$ is also definable in
$\LL_\B$.
\end{lemma}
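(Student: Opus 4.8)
The plan is to reduce the statement directly to the relativization property~(r) that $\LL_\B$ enjoys by regularity: the relativization to $P^\mM$ that is hard-wired into the definition of $Q^\reg$ is exactly the operation that a regular logic can internalise. Since $Q$ is definable in $\LL_\B$, I would first fix a sentence $\phi_Q\in\LL_\B[\tau_Q]$ defining $K_Q$, so that $(\mN,g)\in K_Q\iff(\mN,g)\models\phi_Q$ for every $(\mN,g)\in\Str_\BR(\tau_Q)$. As $\tau_Q\subseteq\sigma=\tau_Q\cup\{P\}$, the sentence $\phi_Q$ may equally be regarded as a sentence of $\LL_\B[\sigma]$.

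Next I would apply property~(r) with the formula $\psi(x):=P(x)$, noting that $\psi^{\mM,f}=P^\mM$, to obtain an $\LL_\B[\sigma]$-sentence $\theta$ satisfying $(\mM,f)\models\theta\iff(\mM,f)|P^\mM\models\phi_Q$ for all $(\mM,f)\in\Str_\BR(\sigma)$. It then remains to match this relativized truth value with membership in $K_{Q^\reg}$. Here the key observation is that $\phi_Q$ contains no occurrence of $P$, so its truth in the $\sigma$-model $(\mM,f)|P^\mM$ depends only on the interpretations of the symbols in $\tau_Q$ together with the (unchanged) built-in apparatus; hence it agrees with its truth in the reduct $((\mM,f)|P^\mM)\upharpoonright\tau_Q$. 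By the choice of $\phi_Q$ this is in turn equivalent to $((\mM,f)|P^\mM)\upharpoonright\tau_Q\in K_Q$, which is precisely the defining condition of $K_{Q^\reg}$. Thus $\theta$ defines $K_{Q^\reg}$, i.e.\ $Q^\reg$ is definable in $\LL_\B$.

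The one point that needs care is the degenerate case $P^\mM=\emptyset$, in which the relativization $(\mM,f)|P^\mM$ is undefined and, correspondingly, $(\mM,f)\notin K_{Q^\reg}$. The clean remedy is to take as the defining sentence $\exists x\,P(x)\land\theta$ (reading property~(r) as pertaining to $(\mM,f)$ with $\psi^{\mM,f}\neq\emptyset$), so that the sentence is false exactly when $P^\mM$ is empty, in agreement with $K_{Q^\reg}$. I expect this bookkeeping around the empty relativization subset, rather than any substantive model-theoretic difficulty, to be the only obstacle: the heart of the argument is simply that regularity lets $\LL_\B$ absorb the relativization appearing in the definition of $Q^\reg$.
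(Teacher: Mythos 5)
Your proposal is correct and follows essentially the same route as the paper's proof: fix a defining sentence for $K_Q$, apply closure under relativization with $\psi(x)\sij P(x)$, and observe that since $P$ does not occur in that sentence, truth in $(\mM,f)|P^\mM$ reduces to membership of the $\tau_Q$-reduct in $K_Q$, which is exactly the defining condition of $K_{Q^\reg}$. Your extra bookkeeping for the case $P^\mM=\emptyset$ is a reasonable (and harmless) refinement that the paper leaves implicit.
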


\begin{proof}
Assume that $\phi$ is a sentence in $\LL_\B[\tau_Q]$ that defines the class
$K_Q$. Since $\LL_\B$ is closed under relativization,
there is an $\LL_\B$-sentence $\theta$ of vocabulary $\tau_Q\cup\{P\}$
such that the equivalence
\[
(\mM,f)\models\theta\iff(\mM,f)|P^\mM\models\phi
\]
holds for all $(\mM,f)\in\Str_\BR(\tau_Q\cup\{P\})$.
Note further that
\[
(\mM,f)|P^\mM\models\phi\iff ((\mM,f)|P^\mM)\upharpoonright\tau_Q\in K_Q,
\]
since $P$ does not occur in the sentence $\phi$. This means that
$\theta$ defines the class $K_{Q^\reg}$.
\end{proof}

Now we are ready to prove that the definition of regular interior
works as intended:

\begin{proposition}\label{RintL}
Let $\LL_\B$ be a semiregular $\B$-logic such that $\le$
is definable in $\LL_\B$. Then
$\RI(\LL_\B)$ is the largest sublogic of $\LL_\B$ that is regular.
\end{proposition}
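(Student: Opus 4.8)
The plan is to check the three conditions that together make $\RI(\LL_\B)=\FO_\le(\Q_u)$ the largest regular sublogic of $\LL_\B$: that it is regular, that it sits inside $\LL_\B$, and that it absorbs every regular $\B$-logic below $\LL_\B$. The first two are immediate from the machinery already in place. Regularity is exactly Proposition~\ref{foq-reg}, applied to the class $\Q_u$, since by construction every quantifier in $\Q_u$ is universe independent. For containment, I would invoke Lemma~\ref{q-semireg} with the left-hand logic $\FO_\le(\Q_u)$ and target $\LL_\B$: by hypothesis $\LL_\B$ is semiregular and $\le$ is definable in $\LL_\B$, and by the very definition of $\Q_u$ every $Q\in\Q_u$ is definable in $\LL_\B$, so Lemma~\ref{q-semireg} yields $\FO_\le(\Q_u)\le\LL_\B$.

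The substantive part is maximality. Let $\LL'_\B$ be any regular $\B$-logic with $\LL'_\B\le\LL_\B$; I want $\LL'_\B\le\RI(\LL_\B)$. Fix a class $K\subseteq\Str_\BR(\sigma)$ definable in $\LL'_\B$, say by a sentence $\phi$. Since definable classes are closed under $\BR$-isomorphisms, $K$ is the defining class $K_Q$ of a $\BR$-quantifier $Q$ of vocabulary $\sigma$ (this is exactly the setup of Definition~\ref{b-quantifier}), and $Q$ is definable in $\LL'_\B$ via $\phi$. Now the two regularization lemmas do the real work: because $\LL'_\B$ is regular, Lemma~\ref{regdef} shows that the regularization $Q^{\reg}$ is still definable in $\LL'_\B$, and Lemma~\ref{Qreg} shows $Q^{\reg}$ is universe independent. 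Using $\LL'_\B\le\LL_\B$, the class $K_{Q^{\reg}}$ is then definable in $\LL_\B$, so $Q^{\reg}$ is a universe independent $\BR$-quantifier definable in $\LL_\B$, i.e.\ $Q^{\reg}\in\Q_u$. Finally, by the observation recorded just before Lemma~\ref{regdef}, $Q$ is definable in $\FO_\le(Q^{\reg})$, whence $K=K_Q$ is definable in $\FO_\le(Q^{\reg})\le\FO_\le(\Q_u)=\RI(\LL_\B)$. As $K$ was an arbitrary $\LL'_\B$-definable class, this gives $\LL'_\B\le\RI(\LL_\B)$.

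I expect the only delicate point to be the maximality step, and more precisely the passage from $Q$ to $Q^{\reg}$: the argument hinges on the fact that in a \emph{regular} logic one may trade an arbitrary definable quantifier for its universe independent regularization without losing expressive power, and that this regularization is then inherited by the larger logic $\LL_\B$ because definability in $\LL'_\B$ transfers along $\LL'_\B\le\LL_\B$. This is exactly where closure under relativization of $\LL'_\B$ is indispensable (via Lemma~\ref{regdef}); without it $Q^{\reg}$ need not be definable in $\LL'_\B$ and the quantifier would not land in $\Q_u$. The remaining ingredients—that a definable class is a quantifier's defining class, and that $Q$ is recoverable from $Q^{\reg}$ in $\FO_\le$—are routine bookkeeping that I would state but not belabor.
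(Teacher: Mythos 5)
Your proposal is correct and follows essentially the same route as the paper's proof: regularity of $\FO_\le(\Q_u)$ via Proposition~\ref{foq-reg}, containment in $\LL_\B$ via Lemma~\ref{q-semireg}, and maximality by passing from a definable quantifier $Q$ in the regular sublogic to its regularization $Q^{\reg}$ using Lemma~\ref{regdef} and Lemma~\ref{Qreg}, then recovering $Q$ in $\FO_\le(Q^{\reg})$. The only cosmetic difference is that you phrase the maximality step in terms of an arbitrary definable class $K=K_Q$ rather than starting directly from a quantifier, which amounts to the same thing.
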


\begin{proof} Let $\Q_u$ be the class of all universe independent
$\BR$-quantifiers definable in $\LL_\B$.
Since each quantifier in $\Q_u$ and the linear order $\le$ are definable
in $\LL_\B$, and $\LL_\B$ is semiregular, it follows from
Lemma \ref{q-semireg} that $\RI(\LL_\B):=\FO_\le(\Q_u)
\le \LL_\B$. Moreover, $\RI(\LL_\B)$ is regular by Proposition \ref{foq-reg}.

It remains to prove that $\RI(\LL_\B)$ contains all regular sublogics
of $\LL_\B$. Thus, assume that $\widetilde{\LL}_{\widetilde{\B}}\le\LL_\B$
is regular, and let $Q$ be a $\BR$-quantifier which is
definable in $\widetilde{\LL}_{\widetilde{\B}}$. Since
$\widetilde{\LL}_{\widetilde{\B}}$ is regular, $Q^\reg$ is definable
in $\widetilde{\LL}_{\widetilde{\B}}$ by Lemma \ref{regdef}. Furthermore,
since $\widetilde{\LL}_{\widetilde{\B}}\le\LL_\B$, $Q^\reg$ is definable in
$\LL_\B$ as well. By Lemma \ref{Qreg}, the quantifier $Q^\reg$ is universe
independent, whence it is in the class $\Q_u$. As observed above, $Q$
is definable in $\FO_\le(Q^\reg)$, whence we conclude that $Q$ is
definable in $\RI(\LL_\B)$.
\end{proof}

On the other hand,
every semiregular $\B$-logic $\LL_\B$ can be extended to a regular
$\B$-logic. In fact, assuming again that the order $\le$ is
definable in $\LL_\B$,
there is a least regular extension $\RC(\LL_\B)$
of $\LL_\B$, which we call the {\em regular closure} of $\LL_\B$.
The definition of regular closure uses the notion of regularization
of quantifiers:

\begin{definition}
Let $\LL_\B$ be a $\B$-logic such that $\le$ is definable in $\LL_\B$,
and let
$\Q$ be the class of all $\BR$-quantifiers which are definable in
$\LL_\B$.
We set $\RC(\LL_\B)\sij\FO_{\le}(\Q^{\reg})$, where
$\Q^{\reg}=\{Q^{\reg}\mid Q\in\Q\}$.
\end{definition}

It is straightforward to show that $\RC(\LL_\B)=\FO_{\le}(\Q^{\reg})\equiv\FO_{\B}(\Q^{\reg})$, where $\Q^{\reg}$ is as in the definition above.

Just like in the case of ordinary logics and generalized quantifiers
(see \cite{Lu2}),
we can prove that $\RC(\LL_\B)$
is regular, and there is no regular $\B$-logic strictly in-between
$\LL_\B$ and $\RC(\LL_\B)$.

\begin{proposition}
Let $\LL_\B$ be a $\B$-logic such that $\le$ is definable in $\LL_\B$.
Then $\RC(\LL_\B)$ is the least
extension of $\LL_\B$ that is regular.
\end{proposition}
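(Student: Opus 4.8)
The plan is to verify the two halves of the statement separately: first that $\RC(\LL_\B)$ is a regular logic extending $\LL_\B$, and then that it lies below every regular extension of $\LL_\B$. Throughout I would write $\Q$ for the class of all $\BR$-quantifiers definable in $\LL_\B$, so that $\RC(\LL_\B)=\FO_\le(\Q^\reg)$ by definition.

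For the first half, regularity is immediate from the machinery already assembled: by Lemma~\ref{Qreg} every quantifier in $\Q^\reg$ is universe independent, so Proposition~\ref{foq-reg} yields that $\FO_\le(\Q^\reg)$ is regular. To see that $\RC(\LL_\B)$ genuinely extends $\LL_\B$, I would start from an arbitrary class $K$ definable in $\LL_\B$. Since definable classes are closed under $\BR$-isomorphism, $K=K_Q$ for some $\BR$-quantifier $Q$, and as $K_Q=K$ is definable in $\LL_\B$ we have $Q\in\Q$. Now I invoke the observation recorded just before Lemma~\ref{regdef}, namely that $Q$ is definable in $\FO_\le(Q^\reg)$ via $Q^\reg(\vx_R\,\psi_R)_{R\in\tau_Q\cup\{P\}}$ with $\psi_R\sij R(\vx_R)$ and $\psi_P\sij (x=x)$. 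Since $\FO_\le(Q^\reg)\le\FO_\le(\Q^\reg)=\RC(\LL_\B)$, the class $K=K_Q$ is definable in $\RC(\LL_\B)$, giving $\LL_\B\le\RC(\LL_\B)$.

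For the second half, I would mimic the proof of Proposition~\ref{RintL}. Let $\widetilde{\LL}_{\widetilde{\B}}$ be any regular logic with $\LL_\B\le\widetilde{\LL}_{\widetilde{\B}}$; the goal is $\FO_\le(\Q^\reg)\le\widetilde{\LL}_{\widetilde{\B}}$. Because a regular logic is in particular semiregular, Lemma~\ref{q-semireg}, applied with $\{\le\}$ in the role of the built-in relations, reduces this to checking that $\le$ and each $Q^\reg$ are definable in $\widetilde{\LL}_{\widetilde{\B}}$. Definability of $\le$ follows because $\le$ is definable in $\LL_\B$ and $\LL_\B\le\widetilde{\LL}_{\widetilde{\B}}$. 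For a fixed $Q\in\Q$, the quantifier $Q$ is definable in $\LL_\B$, hence in $\widetilde{\LL}_{\widetilde{\B}}$; and since $\widetilde{\LL}_{\widetilde{\B}}$ is regular, Lemma~\ref{regdef} upgrades this to definability of $Q^\reg$ in $\widetilde{\LL}_{\widetilde{\B}}$. Thus Lemma~\ref{q-semireg} delivers $\RC(\LL_\B)\le\widetilde{\LL}_{\widetilde{\B}}$, completing the argument.

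I expect no serious obstacle, since every ingredient has been isolated in the preceding lemmas; the one point demanding care is the bookkeeping that $\RC(\LL_\B)$, defined as the $\{\le\}$-logic $\FO_\le(\Q^\reg)$, is being compared with $\LL_\B$ over the common class $\Str_\BR(\tau)$ of $\BR$-models, so that the identification $\RC(\LL_\B)\equiv\FO_\B(\Q^\reg)$ noted after the definition is what guarantees the built-in relations of $\B$ can still be recovered inside $\RC(\LL_\B)$ when needed.
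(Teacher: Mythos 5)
Your proof is correct and follows essentially the same route as the paper's: both arguments use Lemma~\ref{Qreg} with Proposition~\ref{foq-reg} for regularity, the definability of $Q$ in $\FO_\le(Q^\reg)$ for the extension claim, and Lemma~\ref{regdef} together with Lemma~\ref{q-semireg} for minimality over an arbitrary regular extension. The only difference is that you spell out the step $\LL_\B\le\RC(\LL_\B)$ in somewhat more detail than the paper does.
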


\begin{proof}
Let $\Q$ be the class of $\BR$-quantifiers which are definable in
$\LL_\B$. Since $Q$ is definable in $\FO_\le(Q^\reg)$ for each $Q\in\Q$,
it follows that $\LL_\B\le \FO_\B(\Q^\reg)\equiv\FO_\le(\Q^\reg)$. Thus,  $\RC(\LL_\B)$
is indeed an extension of $\LL_\B$. Furthermore, since all
the quantifiers in $\Q^\reg$ are universe independent, $\RC(\LL_\B)$
is regular by Proposition~\ref{foq-reg}.

To complete the proof, we assume that $\widetilde{\LL}_{\widetilde{\B}}$
is a regular $\widetilde{\B}$-logic
such that $\LL_\B\le\widetilde{\LL}_{\widetilde{\B}}$.
Then each quantifier $Q\in\Q$ is definable in
$\widetilde{\LL}_{\widetilde{\B}}$. Since
$\widetilde{\LL}_{\widetilde{\B}}$ is regular, it follows from
Lemma~\ref{regdef} that all quantifiers $Q^\reg$ in $\Q^\reg$ are also
definable in $\widetilde{\LL}_{\widetilde{\B}}$.
Moreover, since the order $\le$ is definable in $\LL_\B$, it is
also definable in $\widetilde{\LL}_{\widetilde{\B}}$.
Thus, by Lemma \ref{q-semireg} we conclude that
$\FO_\le(\Q^{\reg})\le\widetilde{\LL}_{\widetilde{\B}}$.
\end{proof}

The definition of the regular closure of a logic $\LL_\B$ is quite abstract in the sense that it refers to the collection of all $\BR$-quantifiers definable in $\LL_\B$. In the case $\LL_\B$ is of the form $\FO_{\{\le,S\}}$, where $S$ is unary, we can give a simple concrete characterization for $\RC(\LL_\B)$.

\begin{proposition}\label{rc-unary}
Let $S$ be a unary numerical relation and let $S+1:=\{k+1\mid k\in S\}$. Then $\RC(\FO_{\{\le,S\}})\equiv\FO_{\le}(\kC_{S+1})$.
\end{proposition}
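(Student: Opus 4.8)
The plan is to prove the equivalence by establishing the two inclusions separately, in each case reducing to facts already available in the excerpt. The governing observation for both directions is that $\FO_\le(\kC_{S+1})$ is a \emph{regular} $\{\le\}$-logic: the cardinality quantifier $\kC_{S+1}$ is universe independent, so regularity is immediate from Example~\ref{builder} (equivalently from Proposition~\ref{foq-reg}). Thus one inclusion will come from minimality of the regular closure, and the other from exhibiting $\kC_{S+1}$ inside $\RC(\FO_{\{\le,S\}})$ using closure under relativization.

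For $\RC(\FO_{\{\le,S\}})\le\FO_\le(\kC_{S+1})$ I would argue that $\FO_\le(\kC_{S+1})$ is a regular extension of $\FO_{\{\le,S\}}$; since $\RC(\FO_{\{\le,S\}})$ is by definition the \emph{least} such logic, the inclusion follows. The only nontrivial point is that the built-in relation $S$ is definable in $\FO_\le(\kC_{S+1})$ (the order $\le$ being available for free). Here the shift by one is exactly what is needed: in any $\BR$-model $(\mM,f)$ the formula $\kC_{S+1}\,t\,(t\le x)$ counts the $f(x)+1$ elements $t$ with $t\le^f x$, so it holds at $a$ iff $f(a)+1\in S+1$, i.e.\ iff $f(a)\in S$, i.e.\ iff $a\in S^f$. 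Hence $\kC_{S+1}\,t\,(t\le x)$ defines $S$, and since $\FO_\le(\kC_{S+1})$ is semiregular, Lemma~\ref{q-semireg} gives $\FO_{\{\le,S\}}\le\FO_\le(\kC_{S+1})$.

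For the reverse inclusion $\FO_\le(\kC_{S+1})\le\RC(\FO_{\{\le,S\}})$ I would show $\kC_{S+1}$ is definable in the regular logic $\RC(\FO_{\{\le,S\}})$ and then invoke Lemma~\ref{q-semireg}. Consider the $\FO_{\{\le,S\}}$-sentence $\phi_S\sij\exists x\,(\forall y\,(y\le x)\land S(x))$, which asserts that the $\le^f$-greatest element lies in $S^f$; as that element occupies position $|\Dom(\mM)|-1$, we have $(\mM,f)\models\phi_S$ iff $|\Dom(\mM)|-1\in S$, i.e.\ iff $|\Dom(\mM)|\in S+1$. Because $\RC(\FO_{\{\le,S\}})$ contains $\FO_{\{\le,S\}}$ and is closed under relativization, the relativization $\phi_S|\psi$ to any one-variable formula $\psi(x)$ is expressible in it, and by condition (r) it holds iff $(\mM,f)|\psi^{\mM,f}\models\phi_S$, i.e.\ iff the sub-universe $\psi^{\mM,f}$ has size in $S+1$. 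Taking $\psi\sij U(x)$ shows that $\phi_S|\psi$ defines the class $K_{\kC_{S+1}}$, so $\kC_{S+1}$ is definable in $\RC(\FO_{\{\le,S\}})$, and Lemma~\ref{q-semireg} yields the inclusion. (One may equivalently phrase this as the identity $\kC_{S+1}=Q_0^{\reg}$, where $Q_0$ is the empty-vocabulary quantifier with $K_{Q_0}=\{(\mM,f)\mid |\Dom(\mM)|\in S+1\}$, definable in $\FO_{\{\le,S\}}$ by $\phi_S$, so that $\kC_{S+1}\in\Q^{\reg}$ and thus lies in $\RC(\FO_{\{\le,S\}})=\FO_\le(\Q^{\reg})$.)

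The one genuinely delicate point, which I would state explicitly, is the interaction between relativization and the built-in unary relation~$S$. Unlike arbitrary built-in relations (e.g.\ $\bit$), whose restriction to a subset need not be of the same type, the \emph{order-preserving} bijection $f_U$ used in the definition of $(\mM,f)|U$ guarantees that the top element of the sub-universe sits at position $|U|-1$ and lies in the relativized copy of $S$ precisely when $|U|\in S+1$. This is exactly why the shift $S\mapsto S+1$ appears and is the hinge on which both inclusions turn; everything else is routine bookkeeping with semiregularity (Lemma~\ref{q-semireg}) and the defining properties of the regular closure.
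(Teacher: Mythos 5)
Your proof is correct and follows essentially the same route as the paper's: one inclusion by observing that $\FO_{\le}(\kC_{S+1})$ is a regular extension of $\FO_{\{\le,S\}}$ (since $S$ is definable from $\kC_{S+1}$ and $\le$) together with minimality of the regular closure, and the other by realizing $\kC_{S+1}$ as the regularization of an $\FO_{\{\le,S\}}$-definable quantifier and invoking Lemma~\ref{q-semireg}. The only cosmetic difference is that the paper regularizes the $\{U\}$-vocabulary quantifier expressing that $U$ is an initial segment with $|U|\in S+1$, rather than your empty-vocabulary sentence $\exists x\,(\forall y\,(y\le x)\land S(x))$; both hinge on exactly the observation you flag, namely that the order-preserving bijection $f_U$ places the top of the sub-universe at position $|U|-1$.
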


\begin{proof}
It is easy to see that $S$ is definable in $\FO_{\le}(\kC_{S+1})$. Hence $\FO_{\{\le,S\}}\le\FO_{\le}(\kC_{S+1})$ by Lemma~\ref{q-semireg}. Furthermore, as noted in Example~\ref{builder}(a), $\FO_{\le}(\kC_{S+1})$ is regular. Thus we see that $\RC(\FO_{\{\le,S\}})\le\FO_{\le}(\kC_{S+1})$.

On the other hand, let $Q$ be the $\BR$-quantifier defined by the $\FO_{\{\le,S\}}[\{U\}]$-sentence $\exists x\,(S(x)\land \forall y\,(U(y)\leftrightarrow y\le x))$.
Then for any $\BR$-model $(\mM,f)$ we have $(\mM,f)\models Q^\reg x,y\,(\phi(x),\psi(y))$ if and only if $|\phi^{\mM,f}|\in S+1$ and $\phi^{\mM,f}$ is an initial segment of $\psi^{\mM,f}$, i.e., $\phi^{\mM,f}$ is of the form $\{a\in \psi^{\mM,f}\mid a\le^f b\}$ for some $b\in \psi^{\mM,f}$. In particular, if $\mM$ is a $\{U\}$-model, we have $(\mM,f)\models\kC_{S+1} x\, U(x)\iff |U^\mM|\in S+1\iff (\mM,f)\models Q^\reg x,y\,(U(x),U(y))$. Thus, the cardinality quantifier $\kC_{S+1}$ is definable in $\RC(\FO_{\{\le,S\}})$, and hence $\FO_{\le}(\kC_{S+1})\le\RC(\FO_{\{\le,S\}})$.
\end{proof}

\subsection*{Regular interior and Crane Beach Conjecture}

The weakness of $\AC$ that we discussed above has earlier inspired
researchers to formulate the so-called {\em Crane Beach
Conjecture}\footnote{Named after the location of an attempt to prove
it.} (CBC) (see \cite{BILST}). The formulation of CBC
is based on the notion of neutral letter. A symbol $e\in\Sigma$ is a
{\em neutral letter} for a language $L\subseteq\Sigma^*$ if for all
$u,v\in\Sigma^*$ it holds that $uv\in L\iff uev\in L$. In other words,
$e$ is a neutral letter for $L$ if inserting or deleting any number of
$e$'s in a word does not affect its membership in $L$.

CBC is the statement that if a language with a neutral letter is
definable in first-order logic with arbitrary built-in relations, then
it is already definable in first-order logic with linear order as the
only built-in relation.
The general form of the conjecture was shown to be false in
\cite{BILST}. However, the paper
\cite{BILST} also provides some interesting restricted
cases in which the conjecture is true.

To formulate these positive results, we will say that a set $\B$ of
built-in relations has the
{\em Neutral Letter Collapse Property\footnote{This notion is similar
to the Crane Beach Property formulated in \cite{LTT}, but not equivalent.}}
(NLCP) with respect to a class $\cal C$ of languages, if the following
holds for every language $L\in{\cal C}$ with a neutral letter:
\begin{quote}
If $Q_L$ is definable in $\FO_\B$, then $Q_L$
is  already definable in $\FO_\le$.
\end{quote}
With this terminology, the relevant  positive results from
\cite{BILST} can be stated as follows:

\begin{theorem}[\cite{BILST}] \label{CBP}
Let $\cal U$ be the set of all unary numerical relations
together with the order $\le$, and let $\cal A$ be the set
of all numerical relations.

\begin{enumerate}[(a)]
\item The set $\cal U$ has NLCP with respect to the class of all languages.
\item The set $\{+\}$ has NLCP with respect to the class of all languages.
\item The set $\cal A$ has NLCP with respect to the class of all languages
in a binary alphabet.
\end{enumerate}
\end{theorem}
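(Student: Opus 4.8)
The plan is to treat the three parts separately, since each demands a different toolkit, while exploiting one common reformulation. For a language $L$ with neutral letter $e$ the equivalence $uv\in L\iff uev\in L$ means that the word model of any string is indistinguishable, after padding, from the models of arbitrarily longer strings obtained by inserting copies of $e$. Hence in each case it suffices to take an $\FO_\B$-sentence $\phi$ of quantifier rank $r$ defining $Q_L$ and to exhibit, for every pair of $e$-padded words that agree on their non-neutral content, a winning strategy for Duplicator in the $r$-round Ehrenfeucht--Fra\"{\i}ss\'e game for $\FO_\le$. Producing such a strategy for all long enough pairs forces $Q_L$ to be $\FO_\le$-definable, which is the conclusion of NLCP.

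For part (a), where $\cal U$ consists of $\le$ together with unary numerical predicates, I would argue by Ramsey theory. The finitely many unary predicates occurring in $\phi$ colour each position $i$ by the vector recording which of them hold at $i$. By van der Waerden's and Ramsey's theorems, padding a word with sufficiently many copies of $e$ produces arbitrarily long runs of positions that carry a single colour and sit in arithmetic progression. On such homogeneous runs the unary predicates are indistinguishable from the order alone, so Duplicator can answer any Spoiler move inside a run by an order-respecting copy, and the usual back-and-forth bookkeeping yields the $\FO_\le$-equivalence. The only technical care is to secure homogeneity simultaneously for all colours and on both sides of the game; this is exactly what the Ramsey bound delivers.

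For part (b), with $\B=\{+\}$, the difficulty is that addition is genuinely ternary and cannot be absorbed by a single colouring. Here I would invoke the structure theory of Presburger-definable sets: relations definable in $\langle\NN,+\rangle$ are semilinear, hence ultimately periodic in a precise sense. The plan is to show that, restricted to the positions of an $e$-padded word, addition contributes only ultimately periodic information that $\FO_\le$ can already simulate once the pad length is chosen to be a common multiple of the relevant periods. Concretely, one fixes $\phi$, extracts the finitely many moduli appearing in a quantifier-elimination normal form for Presburger arithmetic, pads so that these moduli divide the number of inserted $e$'s, and then runs the back-and-forth argument of part (a) with the arithmetic constraints rendered periodic and therefore transferable by Duplicator. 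This is the most delicate of the three parts, and the bulk of the work is the quantitative control of how the semilinear periods interact with the quantifier rank $r$.

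For part (c), with $\cal A$ the set of \emph{all} numerical relations but the alphabet binary, I would reduce directly to a circuit lower bound rather than play a game. Since $\FO_\cal A$ coincides with non-uniform $\AC$ on word models, and since a language over $\{a,e\}$ with neutral letter $e$ is completely determined by the number of occurrences of $a$, such a language corresponds to a count-set $S\osaj\NN$; the $\FO_\le$-definable ones are precisely the eventually constant $S$. The claim is therefore equivalent to the assertion that no count-set with a genuine period is recognised by a non-uniform $\AC$ family, because a periodic count-set would let one compute a modular (in particular parity) function. This is exactly the content of the Ajtai \cite{A} and Furst--Saxe--Sipser \cite{FSS} lower bound, in its switching-lemma-sharpened form. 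The main obstacle for this part is thus not combinatorial bookkeeping but the lower-bound machinery itself; granting it, the collapse to $\FO_\le$ is immediate, and this is where the contrast with the positive cases (a) and (b) ultimately originates.
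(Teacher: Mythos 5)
The paper does not prove Theorem~\ref{CBP} at all: it is imported verbatim from \cite{BILST} (and the padding machinery the paper does use later, Lemma~\ref{transfer}, is imported from \cite{S3}), so your sketch has to be measured against those external proofs. Your general reduction is stated incorrectly: winning an $r$-round $\FO_\le$-game between two $e$-padded words ``that agree on their non-neutral content'' proves nothing, since two such words already have the same $L$-membership by neutrality alone. What is actually needed is a \emph{transfer} statement in the opposite direction: if the underlying words are $\FO_\le$-equivalent up to a suitable rank $k(r)$, then appropriate paddings are $\FO_\B$-equivalent up to rank $r$; combined with neutrality this makes $L$ a union of $\equiv^{k(r)}_\le$-classes, hence $\FO_\le$-definable. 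This is exactly the shape of Lemma~\ref{transfer}. Granting that correction, your part~(a) is essentially the known argument (Ramsey applied to the colouring of tuples by their types over $\le$ and the unary predicates; van der Waerden is not needed). Part~(b) names the right objects (Presburger quantifier elimination, congruence moduli) but does not engage with the actual difficulty: after elimination one is left not only with congruences but with linear atoms such as $x+y\le z$ relating several letter positions, and ``pad by a common multiple of the periods'' does nothing to neutralize those. Controlling the linear atoms under padding is where essentially all of the work in \cite{BILST}/\cite{S3} lies, so as written (b) is a gap rather than a proof.

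Part~(c) contains the clearest error. A binary-alphabet neutral-letter language is indeed a count-set language $L_S$, $\FO_{\cal A}$ does coincide with non-uniform $\AC$ on word models, and the $\FO_\le$-definable $L_S$ are those with $S$ finite or cofinite. But your key step --- that a count-set $S$ which is not eventually constant ``would let one compute a modular function'' and hence reduces to the parity lower bound --- is false: such an $S$ need not be periodic, nor even contain any arithmetic structure ($S=\Sq$ is the standard example, and is exactly the kind of set this paper cares about). There is no direct $\AC$-reduction from parity to an arbitrary non-eventually-constant $L_S$. The correct input is the characterization of symmetric functions computable in non-uniform $\AC$ (Fagin--Klawe--Pippenger--Stockmeyer and its sharpenings via the switching lemma): any such function must be constant on the middle range of counts $[\,\mathrm{polylog}(n),\,n-\mathrm{polylog}(n)\,]$. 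The neutral letter forces the value at count $k$ to be independent of the input length $n$, so letting $n\to\infty$ these middle ranges sweep out all sufficiently large $k$ and force $\chi_S$ to be eventually constant. Your outline for (c) is salvageable, but only after replacing the parity reduction by this stronger symmetric-function lower bound.
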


We will next show that NLCP can be reformulated in terms of the notion of
regular interior. This is not surprising once we notice that the
property of having a neutral letter is a language theoretic analogue
for the property of a quantifier being universe independent. For
the statement of the result, we need the following concept: given
a language $L\subseteq\Sigma^*$ and a symbol $e\not\in\Sigma$,
define the {\em neutral letter extension} $N(L)$ of $L$ to
be the unique language in the alphabet $\Sigma\cup\{e\}$ such that
$N(L)\cap \Sigma^*=L$ and $e$ is a neutral letter for $N(L)$.

\begin{lemma}
A set $\B$ of built-in relations has NLCP with respect to a class
$\cal C$ of languages if and only if the implication
\[
Q_L \hbox{ is definable in }\RI(\FO_\B) \;\Longrightarrow\;
Q_L \hbox{ is definable in }\FO_\le
\]
holds for every language $L$ such that $N(L)\in{\cal C}$.
\end{lemma}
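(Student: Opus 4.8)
The plan is to translate the statement about $\RI(\FO_\B)$ into an equivalent statement about the language quantifiers $Q_{N(L)}$, and then to recognize that statement as exactly NLCP. The starting point is that the languages in $\cal C$ possessing a neutral letter are, up to renaming the neutral letter, precisely the extensions $N(L)$ with $N(L)\in\cal C$: any $L'$ with neutral letter $e$ satisfies $L'=N(L'\cap(\Sigma\smallsetminus\joukko{e})^*)$, and conversely $N(L)$ always has $e$ as a neutral letter. Hence NLCP for $\B$ with respect to $\cal C$ is the assertion that, for every $L$ with $N(L)\in\cal C$, definability of $Q_{N(L)}$ in $\FO_\B$ implies its definability in $\FO_\le$. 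It therefore suffices to prove the two equivalences: (i) $Q_{N(L)}$ is definable in $\FO_\B$ iff $Q_L$ is definable in $\RI(\FO_\B)$, and (ii) $Q_{N(L)}$ is definable in $\FO_\le$ iff $Q_L$ is definable in $\FO_\le$. Substituting these into the reformulated NLCP yields exactly the displayed implication.

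The crux is a precise dictionary between the neutral letter and universe independence. Encoding a word of $N(L)$ so that the neutral letter occupies exactly the positions carrying no symbol of $\Sigma$, one sees that $Q_{N(L)}$ coincides, up to $\FO_\le$-interdefinability, with the regularization $(Q_L)^{\reg}$: the relativization predicate $P$ plays the role of the set of non-neutral positions, since deleting neutral letters is the same as relativizing to $P$, and $e$ being neutral guarantees that this does not affect membership in $N(L)$. Thus $Q_{N(L)}$ and $(Q_L)^{\reg}$ are each definable from the other in $\FO_\le$, and by Lemma~\ref{Qreg} the quantifier $(Q_L)^{\reg}$, and hence $Q_{N(L)}$, is universe independent. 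Conversely, $Q_L$ is recovered by restricting attention to the full word models (those with no neutral letter), a condition expressible in $\FO_\le$; combined with the general fact, noted before Lemma~\ref{regdef}, that $Q_L$ is definable in $\FO_\le((Q_L)^{\reg})$, this shows that $Q_L$ is definable in $\FO_\le(Q_{N(L)})$.

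Granting this dictionary, both equivalences are short. For (i), if $Q_{N(L)}$ is definable in $\FO_\B$ then, being universe independent, it lies in the class $\Q_u$ defining $\RI(\FO_\B)=\FO_\le(\Q_u)$, and since $Q_L$ is definable in $\FO_\le(Q_{N(L)})$ we obtain $Q_L$ definable in $\RI(\FO_\B)$; conversely, if $Q_L$ is definable in the regular logic $\RI(\FO_\B)$ (regular by Proposition~\ref{RintL}), then $(Q_L)^{\reg}$ is definable there by Lemma~\ref{regdef}, whence $Q_{N(L)}$ is definable in $\RI(\FO_\B)\le\FO_\B$. For (ii), the direction from $Q_L$ to $Q_{N(L)}$ applies Lemma~\ref{regdef} with $\LL_\B=\FO_\le$ (which is regular) to get $(Q_L)^{\reg}$, and hence $Q_{N(L)}$, definable in $\FO_\le$; the reverse direction is immediate, since $Q_L$ is definable in $\FO_\le(Q_{N(L)})$, which reduces to $\FO_\le$ once $Q_{N(L)}$ is. I expect the main obstacle to be making the dictionary of the second paragraph fully rigorous: fixing the encoding of $N(L)$, checking that the disjointness and covering conditions of a word model transfer correctly under restriction to $P$, and verifying that $(Q_L)^{\reg}$ and $Q_{N(L)}$ are genuinely mutually $\FO_\le$-definable. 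Once this is settled, the rest is a routine application of Lemmas~\ref{Qreg} and~\ref{regdef}, Proposition~\ref{foq-reg}, and the description of $\RI(\FO_\B)$ from Proposition~\ref{RintL}.
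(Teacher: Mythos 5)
Your proposal is correct and follows essentially the same route as the paper: both arguments hinge on the mutual $\FO_\le$-definability of $Q_{N(L)}$ and the regularization $Q_L^{\reg}$, then use Lemma~\ref{Qreg} (universe independence of $Q_L^{\reg}$) to pass between $\FO_\B$ and $\RI(\FO_\B)$, and Lemma~\ref{regdef} together with the definability of $Q_L$ in $\FO_\le(Q_L^{\reg})$ to pass between $Q_L$ and $Q_L^{\reg}$ in the regular logics $\RI(\FO_\B)$ and $\FO_\le$. Your version is somewhat more explicit than the paper's (which leaves the $Q_{N(L)}\leftrightarrow Q_L^{\reg}$ dictionary and the identification of neutral-letter languages with extensions $N(L)$ as "straightforward"), but the decomposition and the key lemmas invoked are identical.
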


\begin{proof} 
Assume that $L$ is a language such that $N(L)\in{\cal C}$,
and consider the corresponding language quantifiers $Q_L$ and $Q_{N(L)}$.
It is straightforward to show that $Q_{N(L)}$ is
definable by the regularization $Q_L^\reg$ of $Q_L$, and vice versa.
Moreover, by Lemma~ \ref{regdef}, for any regular $\B$-logic $\LL_\B$,
$Q_L$ is definable in $\LL_\B$
if and only if $Q^\reg_L$ is definable in $\LL_\B$. In particular,
this holds for the logics $\RI(\FO_\B)$ and $\FO_\le$.
Finally, observe that since $Q^\reg$ is universe independent,
it is definable in $\FO_\B$ if and only if it is definable
in $\RI(\FO_\B)$.
The claim follows from these equivalences.
\end{proof}

Note that if $\cal C$ is the class of all languages, then $L\in {\cal C}$
if and only if $N(L)\in {\cal C}$.
Thus, by Theorem \ref{CBP} (a), if $\cal U$ is the set of all unary
arithmetical relations together with the order, $\RI(\FO_{\cal U})$
collapses to $\FO_\le$ if we consider only definability on
word models. Similarly, by Theorem \ref{CBP} (b),
$\RI(\FO_{+})$ collapses to $\FO_\le$ on word models. 
 We
will prove here a stronger result: for both $\FO_{\cal U}$ and $\FO_{+}$,
the regular interior collapses to $\FO_\le$ on all finite models,
not just on word models.

The proof of the collapsing theorem for $\RI(\FO_{\cal U})$
($\RI(\FO_{+})$) is
based on a transfer result stating that
if we are given $\BR$-models $(\mM,f)$ and $(\mN,g)$ which
are $\FO_\le$-equivalent up to large enough quantifier rank,
then we can find paddings $(\mM^*,f^*)$ and $(\mN^*,g^*)$ of
$(\mM,f)$ and $(\mN,g)$ such that $(\mM^*,f^*)$ and $(\mN^*,g^*)$
are $\FO_{\cal U}$-equivalent ($\FO_{+}$-equivalent, respectively)
up to a given quantifier rank $r$.
Here we say that $(\mM^*,f^*)$ is a {\em padding} of $(\mM,f)$ if there is
a set $U\subseteq\Dom(\mM^*)$ such that $(\mM,f)=(\mM^*,f^*)|U$
and $R^{\mM^*}\subseteq U^{\arity(R)}$ for all relation symbols $R$
in the vocabulary of the models. We write $(\mM,f)\pad(\mM^*,f^*)$
if $(\mM^*,f^*)$ is a padding of $(\mM,f)$.

Note that the definition of universe independence
can be restated as invariance with respect to padding: a $\BR$-quantifier
$Q$ is universe independent if and only if the equivalence
$$
   (\mM,f)\in K_Q\iff (\mM^*,f^*)\in K_Q
$$
holds whenever $(\mM,f),(\mM^*,f^*)\in\Str_\BR(\tau_Q)$ and
$(\mM,f)\pad(\mM^*,f^*)$.

We write $(\mM,f)\equiv^r_\B (\mN,g)$ if $(\mM,f)$ and $(\mN,g)$
are $\FO_\B$-equivalent up to quantifier rank $r$, i.e., if
$$
   (\mM,f)\models\phi\iff (\mN,g)\models\phi
$$
for all $\FO_\B$-sentences $\phi$ with quantifier rank at most $r$.
The transfer results can now be formulated as follows:

\begin{lemma}\label{transfer} Let $\cal U$ be as in Theorem \ref{CBP}.
\begin{enumerate}[(a)]
\item (\cite{S3}, Theorem 5.1)
There is a function $i:\NN\to\NN$ such that for all
$\BR$-models $(\mM,f)$ and $(\mN,g)$ with
$(\mM,f)\equiv^{i(r)}_\le (\mN,g)$ there are $\BR$-models $(\mM^*,f^*)$
and $(\mN^*,g^*)$ satisfying $(\mM,f)\pad(\mM^*,f^*)$,
$(\mN,g)\pad(\mN^*,g^*)$ and $(\mM^*,f^*)\equiv^r_{\cal U} (\mN^*,g^*)$.

\item (\cite{S3}, Theorem 6.10) There is 
a function $j:\NN\to\NN$ such that
for all $\BR$-models $(\mM,f)$ and $(\mN,g)$ with
$(\mM,f)\equiv^{j(r)}_\le (\mN,g)$ there are $\BR$-models $(\mM^*,f^*)$
and $(\mN^*,g^*)$ satisfying $(\mM,f)\pad(\mM^*,f^*)$,
$(\mN,g)\pad(\mN^*,g^*)$ and $(\mM^*,f^*)\equiv^r_+ (\mN^*,g^*)$. \qed
\end{enumerate}
\end{lemma}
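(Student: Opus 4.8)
The plan is to treat both parts as game-theoretic transfer results in the spirit of \cite{BILST}, obtaining them from the corresponding theorems of \cite{S3}; accordingly I describe the method by which such a padding statement is established. First I would pass from the logical equivalences $\equiv^r_\le$, $\equiv^r_{\cal U}$ and $\equiv^r_+$ to the associated Ehrenfeucht--Fra\"iss\'e games. The hypothesis $(\mM,f)\equiv^{i(r)}_\le(\mN,g)$ says that the two ordered reducts share the same rank-$i(r)$ order type; by the standard analysis of the order game this yields a correspondence between the non-neutral elements of the two models together with matching estimates (exact below a threshold, ``both large'' above it) for the lengths of the neutral gaps separating them. This correspondence is the skeleton on which both paddings are built: in each case we insert neutral elements only, so that $(\mM,f)\pad(\mM^*,f^*)$ and $(\mN,g)\pad(\mN^*,g^*)$ hold automatically, and we are free to choose how many neutral elements go into each gap.

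For part (a) the key observation is that a sentence of $\FO_{\cal U}$ of quantifier rank at most $r$ can mention only finitely many unary relations, so it suffices to win, for each finite set $\mathcal{V}$ of unary relations, the rank-$r$ game for the expansion of the order by the colouring that $\mathcal{V}$ induces on the positions. The plan is to choose the neutral paddings so large and so regular that, relative to the skeleton correspondence, every such colouring is matched up to rank $r$. The combinatorial engine here is a Ramsey/indiscernibility argument: inside the long neutral regions one extracts homogeneous blocks on which any finite colouring repeats, and Duplicator answers a move in one region by the corresponding point of a matching block in the other. The growth of the Ramsey bound needed to guarantee this for all relevant colourings is exactly what fixes the function $i$.

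For part (b) the same skeleton is used, but the quantifier now involves the ternary relation $\bplus$, which couples triples of indices; consequently inserting neutral elements shifts indices and perturbs the sum relation globally rather than locally. The plan is therefore to choose the padding lengths not merely large but with prescribed residues, i.e.\ as large multiples of a suitable modulus, so that the additive relationships among the images of the original elements are preserved while the padded neutral regions become additively homogeneous. One then plays the Ehrenfeucht--Fra\"iss\'e game for $\FO_+$, whose winning positions are the Presburger-style partial isomorphisms of \cite{S3}: Duplicator maintains a correspondence respecting both order and the chosen residues, answering each Spoiler move inside a homogeneous region by a point with the same residue and the same relative position.

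I expect part (b) to be the main obstacle. Controlling $\bplus$ under padding is genuinely global: one must simultaneously preserve the finitely many additive constraints that tie together the non-neutral elements and make the inserted regions featureless for the addition game, and these two requirements pull against each other unless the padding lengths are chosen with care. This is why the function $j$ is more delicate than $i$, and why the analysis of the addition game (Theorem~6.10 of \cite{S3}) rather than a bare Ramsey argument is needed; the order game supplies only the coarse skeleton, and all the real work lies in upgrading it to an additive correspondence of the prescribed rank.
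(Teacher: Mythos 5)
The first thing to note is that the paper contains no proof of this lemma: both parts are imported from \cite{S3} (Theorems 5.1 and 6.10), and the statement is given with a citation in place of an argument. So your proposal cannot be matched against an internal proof; it is an outline of how the cited transfer theorems themselves would be established. As such an outline, your part (b) --- a skeleton correspondence from the order game, padding lengths chosen with prescribed residues, Duplicator maintaining Presburger-style partial isomorphisms --- is a fair gloss of the kind of analysis behind Theorem 6.10 of \cite{S3}. Part (a), however, contains a genuine gap.

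Your reduction for (a) --- ``it suffices to win, for each finite set $\mathcal{V}$ of unary relations, the rank-$r$ game,'' with a single padding chosen ``so large and so regular that every such colouring is matched'' --- cannot work, because in the statement the paddings are fixed \emph{before} the sentences, and $\equiv^r_{\mathcal{U}}$ then ranges over \emph{all} unary numerical relations, including singletons. For $S=\{p\}$ the rank-$1$ sentence $\exists x\,(S(x)\land P(x))$ determines whether the element at position $p$ belongs to $P$; hence already $\equiv^1_{\mathcal{U}}$ forces $f^*(P^{\mM^*})=g^*(P^{\mN^*})$ for every relation symbol $P$ of the vocabulary, and in particular $|P^{\mM}|=|P^{\mN}|$, which the hypothesis $(\mM,f)\equiv^{i(r)}_{\le}(\mN,g)$ does not give (take the word models of $a^K$ and $a^{K+1}$ with $K\ge 2^{i(r)}$: these are $\equiv^{i(r)}_\le$, but padding never changes $|P_a|$). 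So no uniform padding pair can match every colouring; the transfer has to be read per sentence, with the padding allowed to depend on the finitely many unary predicates actually occurring --- which is the form established in \cite{S3} and the only form needed where the lemma is applied, in the proof of Theorem \ref{RIcollapse}. Relatedly, your Ramsey step fails as described: an arbitrary unary numerical predicate, such as the set of even numbers, has no long homogeneous intervals at all, so one cannot ``extract homogeneous blocks'' inside the neutral regions. For a fixed finite predicate set the argument must instead locate repetitions of neighbourhood colour patterns of those particular predicates and combine them with composition lemmas for coloured linear orders; it is precisely this dependence on the fixed predicate set that makes the per-sentence version the correct one and blocks the uniform version you propose.
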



\begin{theorem}\label{RIcollapse}
Let $\cal U$ be as in Theorem \ref{CBP}.
\begin{enumerate}[(a)]
\item $\RI(\FO_{\cal U})\equiv \FO_\le$.
\item $\RI(\FO_{+})\equiv \FO_\le$.
\end{enumerate}
\end{theorem}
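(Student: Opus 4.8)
The plan is to establish the two inclusions separately, handling (a) and (b) by a single argument in which $\B$ stands for either $\cal U$ or $\{+\}$ and the relevant part of Lemma~\ref{transfer} is invoked. The inclusion $\FO_\le\le\RI(\FO_\B)$ is the easy half: $\FO_\le$ is regular by Proposition~\ref{foq-reg} (with the empty class of quantifiers), and $\FO_\le\le\FO_\B$ because $\le$ is definable in $\FO_\B$ (for $\B=\{+\}$ by Example~\ref{order-sum}, and for $\B=\cal U$ trivially since $\le\in\cal U$). Thus $\FO_\le$ is a regular sublogic of $\FO_\B$, and by Proposition~\ref{RintL} it is contained in the largest such sublogic, namely $\RI(\FO_\B)$.

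For the converse $\RI(\FO_\B)\le\FO_\le$, recall that $\RI(\FO_\B)=\FO_\le(\Q_u)$, where $\Q_u$ is the class of all universe independent $\BR$-quantifiers definable in $\FO_\B$. By Lemma~\ref{q-semireg}, taking the semiregular logic $\FO_\le$ as the target, it is enough to prove that every $Q\in\Q_u$ is already definable in $\FO_\le$. So I fix such a $Q$, of some finite vocabulary $\tau_Q$, and let $\phi$ be an $\FO_\B[\tau_Q]$-sentence defining $K_Q$, of quantifier rank $s$. The goal is to show that $K_Q$ is closed under $\equiv^{i(s)}_\le$ for part (a) (respectively $\equiv^{j(s)}_\le$ for part (b)). Once this is done, the standard normal-form argument finishes the job: there are only finitely many $\equiv^r_\le$-classes of $\tau_Q$-models for each $r$, so a class closed under $\equiv^r_\le$ is a finite union of such classes and hence $\FO_\le$-definable.

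The heart of the matter is a short chaining argument that makes the paddings invisible to $Q$. Suppose $(\mM,f)\equiv^{i(s)}_\le(\mN,g)$. Applying Lemma~\ref{transfer}(a) with target rank $s$ yields paddings $(\mM^*,f^*)$ and $(\mN^*,g^*)$ with $(\mM,f)\pad(\mM^*,f^*)$, $(\mN,g)\pad(\mN^*,g^*)$ and $(\mM^*,f^*)\equiv^s_\B(\mN^*,g^*)$; note that padding preserves the vocabulary, so these are again $\tau_Q$-models. Since $Q$ is universe independent, the padding-invariance restatement of universe independence gives $(\mM,f)\in K_Q\iff(\mM^*,f^*)\in K_Q$ and likewise $(\mN,g)\in K_Q\iff(\mN^*,g^*)\in K_Q$. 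As $\phi$ has quantifier rank $s$ and $(\mM^*,f^*)\equiv^s_\B(\mN^*,g^*)$, we also have $(\mM^*,f^*)\in K_Q\iff(\mM^*,f^*)\models\phi\iff(\mN^*,g^*)\models\phi\iff(\mN^*,g^*)\in K_Q$. Chaining these three equivalences gives $(\mM,f)\in K_Q\iff(\mN,g)\in K_Q$, which is precisely closure of $K_Q$ under $\equiv^{i(s)}_\le$. Part (b) is verbatim the same with $j$ in place of $i$ and Lemma~\ref{transfer}(b) in place of (a).

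I do not expect a serious obstacle here, since the genuinely hard work is imported from the transfer results of Lemma~\ref{transfer}; the content of the present argument is the observation that \emph{universe independence is exactly the property that renders the paddings produced by the transfer lemma invisible to $Q$}. The only points requiring care are bookkeeping ones: checking that the paddings remain $\tau_Q$-models so that universe independence of $Q$ applies to them, and confirming the normal-form fact that a class of $\BR$-$\tau_Q$-models closed under $\equiv^r_\le$ is $\FO_\le$-definable (the usual Hintikka/Fra\"iss\'e argument, transported along the correspondence between $\BR$-$\tau_Q$-models and ordered $(\tau_Q\cup\{\le\})$-structures).
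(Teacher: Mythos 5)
Your proposal is correct and follows essentially the same route as the paper's proof: both halves reduce to showing that every universe independent quantifier definable in $\FO_\B$ is $\FO_\le$-definable, via the transfer results of Lemma~\ref{transfer} combined with padding-invariance of $K_Q$. The only cosmetic difference is that you argue directly that $K_Q$ is closed under $\equiv^{i(s)}_\le$ and invoke the standard Hintikka-sentence fact, whereas the paper phrases the same step as a proof by contradiction.
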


\begin{proof}
(a) Clearly $\FO_\le\le\RI(\FO_{\cal U})$. For the converse
inclusion, it suffices to show
that every universe independent quantifier definable in
$\FO_{\cal U}$ is already definable in $\FO_\le$. Thus, let
$Q$ be a universe independent $\BR$-quantifier, and assume that
its interpreting class
$K_Q\subseteq\Str_\BR(\tau_Q)$ is defined by a sentence
$\phi\in\FO_{\cal U}$ with quantifier rank $r$. We claim that
$K_Q$ is then defined by some sentence $\psi\in\FO_\le$
of quantifier rank $i(r)$, where $i:\NN\to\NN$ is the function
given in Lemma~\ref{transfer}(a).

Assume towards contradiction
that this is not the case. Then there are models $(\mM,f),(\mN,g)
\in\Str_\BR(\tau_Q)$ such that $(\mM,f)\in K_Q$, $(\mN,g)\not\in K_Q$
and $(\mM,f)\equiv^{i(r)}_\le (\mN,g)$. By Lemma~\ref{transfer}(a),
there are paddings $(\mM^*,f^*)$ and $(\mN^*,g^*)$ of
$(\mM,f)$ and $(\mN,g)$, respectively, such that
$(\mM^*,f^*)\equiv^r_{\cal U }(\mN^*,g^*)$. Furthermore, since
the quantifier $Q$ is universe independent, and hence invariant
with respect to padding, we have
$(\mM^*,f^*)\in K_Q$, $(\mN^*,g^*)\not\in K_Q$. This contradicts
the assumption that the defining sentence $\phi\in\FO_{\cal U}$
of $K_Q$ is of quantifier rank $r$.

Item (b) is proved in the same way using Lemma~\ref{transfer}(b).
\end{proof}

It is worth noting that while these results seem to indicate that the
regular interior of $\AC$ is quite weak, a counterexample for the
Crane Beach Conjecture given in \cite{BILST} shows that
it does not entirely collapse to $\FO_\le$. Indeed, the
counterexample shows that there is a language $L$ 
such that $Q_L$ is definable in $\RI(\FO_{\{+,\times\}})$,
but not in $\FO_\le$.
In particular, it is not possible to generalize the third positive
result in Theorem~\ref{CBP} in the same way as we did for the
other two cases in Theorem~\ref{RIcollapse}.

Note however, that the counterexample quantifier $Q_L$ is not order-invariant. 
This is because the language $L$ is not closed under permutations: 
$L=N(L')$, where $L'\subseteq \{0,1,a\}^*$ consists of all words of the form 
$u_0au_1a\ldots au_{2^k-1}$, where $u_0,\ldots,u_{2^k-1}$ lists the
words in $\{0,1\}^k$ in lexicographic order (see Theorem 5.3 in \cite{BILST}).
On the other hand, all the quantifiers corresponding
to languages in a binary alphabet with a neutral letter are
order-invariant; in fact, they are easily seen to be equivalent
with cardinality quantifiers $\kC_S$. 
This raises the question,
whether a correct generalization of Theorem~\ref{CBP}~(c)
would be that all order-invariant language quantifiers definable in
$\RI(\FO_{\cal A})$ are already definable in $\FO_\le$.
In any case, we will show in the next subsection that,
for quantifiers of higher arity, order-invariance
is not a sufficient condition for obtaining a collapse result.

\subsection*{Regular interior of $\AC$}

In this subsection we show that there is an order-invariant
quantifier $Q$
which is definable in $\RI(\FO_{\{+,\times\}})$ but not in $\FO_{\le}$.
For the definition of $Q$, we fix for each
$n>0$ a set $A_n$ such that $|A_n|=n$ and
$A_n\cap \mathcal{P}(A_n)=\emptyset$, and let
$\mA_n\in\Str(\{E\})$ be the structure such that $\Dom(\mA_n)=
A_n\cup (\mathcal{P}(A_n)\setminus\{\emptyset\})$ and
$E^{\mA_n}$ is the membership relation between the sets $A_n$
and $\mathcal{P}(A_n)\setminus\{\emptyset\}$.

\begin{definition} We define $Q$ to be the quantifier with defining class
\begin{equation}\label{powerq}
K_Q=\{ (\mM,f)\in \Str_{\BR}(\{E\}) \mid \exists n( n\in \Sq\setminus\{0\}
\textrm{ and }\mM |U  \cong {\mA_n}) \},
\end{equation}
where $U=\dom(E^{\mM})\cup \rg(E^{\mM})$, and $\Sq=\{n^2\ |\ n\in \NN\}$.
\end{definition}

We will start by showing that
$Q$ can be defined in  $\RI(\FO_{\{+,\times\}})$. Note first that,
for $(\mM,f)\in \Str_{\BR}(\{E\})$, the formulas $\delta(x)\sij
\exists y E(x,y)$
and $\rho(x)\sij \exists y E(y,x)$ define the sets $\dom(E^{\mM})$ and
$\rg(E^{\mM})$ in $(\mM,f)$, respectively. Let $\chi$ be the conjunction
of the following $\FO(\{E\})$-sentences:
\begin{itemize}
\item[] $\alpha:=\exists x\,\delta(x)\land\forall x (\delta(x)\to\lnot\rho(x))$
\item[] $\beta:=\forall x\forall y((\rho(x)\wedge \rho(y)\wedge
\forall z (E(z,x) \leftrightarrow E(z,y)))\to x=y)$
\item[] $\gamma:=\forall x\forall y(\delta(x)\to
\exists z\forall w
(E(w,z)\leftrightarrow (w=x\lor E(w,y)))$.
\end{itemize}
Note that $\alpha$ says that $\dom(E)\not=\emptyset$ and $\dom(E)\cap\rg(E)=\emptyset$,
and $\beta$ says that $E$ is extensional. Furthermore, $\gamma$ says that given any
element $x\in\dom(E)$ and subset $E_y=\{w\mid (w,y)\in E\}$ of $\dom(E)$,
the subset $E_y\cup\{x\}$ is also of the form $E_z=\{w\mid (w,z)\in E\}$ for some $z\in \rg(E)$.
In particular, if there is an element $y\not\in\rg(E)$, then $\gamma$ implies that
every singleton subset $\{x\}$, and by induction, every non-empty subset  $X$ of $\dom(E)$ is of the 
form $E_z$ for some $z\in\rg(E)$.

Using the observations above, it is straightforward to verify that the equivalence
\[
    (\mM,f)\models \chi\iff \mM|U \cong {\mA_n}
    \hbox{ for some }n>0
\]
holds for all  $(\mM,f)\in \Str_{\BR}(\{E\})$. We still need
to  express the condition
$n\in \Sq$ in the definition of $K_Q$. To do this, we apply the
so-called  polylogarithmic counting ability of
$\FO_{\{+,\times\}}$:

\begin{theorem}[\cite{AB,FKPS,DGS,WWY}] \label{counting}
The logic $\FO_{\{+,\times\}}$ can
count up to $\lb^k$ for any $k\in\NN$. In other words, for every $k$,
there is a formula $\sigma_k(x)\in \FO_{\{+,\times\}}[\{P\}]$, where
$P$ is unary, such that for all $(\mM,f)\in \Str_{\BR}(\{P\})$
\[
    (\mM,f)\models \sigma_k[a/x]\iff |\{b\in\Dom(\mM)\mid b<^f a\}|=
    |P|\le \lb^k n,
\]
where $n=|\Dom(\mM)|$.
\end{theorem}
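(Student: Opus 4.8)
The plan is to read the statement as a counting task and establish it by induction on $k$, using the characterizations $\FO_{\{+,\times\}}\equiv\FO_\bit\equiv\AC$ recalled in the introduction. Since $|\{b\in\Dom(\mM)\mid b<^f a\}|=f(a)$, the formula $\sigma_k$ is exactly a device that outputs the cardinality $|P|$ as a position in the built-in order and is false everywhere once $|P|>\lb^k n$. The cutoff at $\lb^k n$ is harmless: $\FO_{\{+,\times\}}$ defines the bit-length $\lb n$, and hence, multiplying $k$ (constantly many) times, the position $\lb^k n$ itself, so the clause $|P|\le\lb^k n$ is $\FO_{\{+,\times\}}$-expressible once $|P|$ can be compared against a named position. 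Thus it suffices to build, for each $k$, an $\FO_{\{+,\times\}}[\{P\}]$-formula $\mathrm{ct}_k(x)$ that holds precisely at the element of position $|P|$ whenever $|P|\le\lb^k n$; this formula \emph{is} the desired $\sigma_k$.

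I would prove the existence of $\mathrm{ct}_k$ by induction on $k$. The base case $k=0$ is immediate, as counting up to $\lb^0 n=1$ is first-order. For the step I want to pass from range $\lb^k n$ to $\lb^{k+1}n$, i.e. to multiply the achievable counting range by a factor $\lb n$. The naive idea---split $\{0,\dots,n-1\}$ into $\lb n$ consecutive blocks (definable with $+$ and $\times$), count $|P|$ inside each block by the induction hypothesis, and add the block-counts---fails as stated, since one block may absorb all of $P$, so per-block counts are not bounded by $\lb^k n$. The remedy, which is the technical core of the cited works, is to replace consecutive blocks by the buckets of a \emph{spreading} map: one quantifies over a polynomial-size family of arithmetically defined hash functions (for instance $x\mapsto (ax+b\bmod p)\bmod q$, all expressible with $+$, $\times$ and $\bit$, the parameters being constantly many elements) and argues by a counting/probabilistic estimate that whenever $|P|\le\lb^{k+1}n$ at least one map distributes $P$ so that every bucket meets $P$ in at most $\lb^k n$ points. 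Crucially, this must avoid any appeal to modular counting, since $\AC$ cannot compute parity (Ajtai~\cite{A}, Furst--Saxe--Sipser~\cite{FSS}); the argument stays inside $\AC$ precisely because every bucket has \emph{small} occupancy, so each is handled by the induction hypothesis.

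The remaining task is to aggregate the bucket-counts into the global count. Here only $O(\lb^{k+1}n)$ buckets are nonempty, each carrying a value below $\lb^k n$, so the quantity to be formed is itself polylogarithmic and no unbounded iterated sum is needed; the sum can be assembled by a further application of the counting machinery, testing for each candidate total $s\le\lb^{k+1}n$ how the bucket values accumulate, again using only $\FO_{\{+,\times\}}$-arithmetic. Uniformity is then automatic: the construction yields an honest $\FO_{\{+,\times\}}$-formula, which corresponds to a $\DLOGTIME$-uniform family via $\FO_{\{+,\times\}}\equiv\AC$. I expect the genuine obstacle to be the inductive step, and within it the two linked points---exhibiting a \emph{definable} hash family that provably balances the load for \emph{every} small $P$, and carrying out the balanced aggregation without modular counting, the latter being delicate because re-expressing $P$ by a balanced bucketing does not by itself shrink the counting range. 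Since the statement is attributed to \cite{AB,FKPS,DGS,WWY}, an equally legitimate route is simply to invoke the polynomial-size constant-depth symmetric-function circuits of \cite{FKPS}, verify their $\DLOGTIME$-uniformity, and translate them into $\sigma_k$ through $\FO_{\{+,\times\}}\equiv\AC$; the balancing and aggregation difficulty is then inherited from that construction rather than circumvented.
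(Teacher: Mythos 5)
The paper does not prove this theorem at all: it is imported verbatim from the literature, with the citations \cite{AB,FKPS,DGS,WWY} standing in for the proof. So there is no in-paper argument to compare yours against, and the only fully sound version of your proposal is its last sentence --- invoke the constructions of the cited works and transfer them through $\FO_{\{+,\times\}}\equiv\AC$ (checking $\DLOGTIME$-uniformity), which is exactly what the authors do implicitly.

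Judged as a standalone proof, your sketch has a genuine gap, and it sits precisely where you yourself point: the existence of an $\FO_{\{+,\times\}}$-definable family of maps that provably balances every small $P$, and the aggregation of the bucket counts. Neither is routine, and the inductive step as described is in danger of circularity. To pass from range $\lb^k n$ to $\lb^{k+1}n$ you need on the order of $\lb^{k+1}n$ (or more, by birthday-paradox considerations) buckets each meeting $P$ in at most $\lb^k n$ points; but then recombining the bucket counts into $|P|$ is itself a counting problem over $\approx\lb^{k+1}n$ objects with target value up to $\lb^{k+1}n$ --- the same order as what you are trying to establish --- so ``a further application of the counting machinery'' cannot simply be the induction hypothesis. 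The literature proofs (e.g.\ Denenberg--Gurevich--Shelah, or Fagin--Klawe--Pippenger--Stockmeyer) break this circle by a different device: small sets and small partial functions are encoded as single elements of the universe via $\bit$, so that comparisons and additions of polylogarithmically many polylogarithmically small quantities become arithmetic on $O(\log n)$-bit codes, which $+$, $\times$ and $\bit$ handle directly. Your base case is also off by one level of interest: $\lb^0 n=1$ is trivial, but the step from $k=0$ to $k=1$ already requires the full encoding trick rather than the hashing-and-recursion scheme. None of this makes your outline wrong in spirit --- it is a fair description of how such arguments are organized --- but the content of the theorem lives exactly in the steps you defer, so the proposal is a sketch with the citation doing the work, just as in the paper.
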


Let $\theta(x)$ be the $\FO_{\{+,\times\}}[\{E\}]$-formula obtained
from $\sigma_1(x)$ by substituting the formula $\delta$ in place of the
relation symbol $P$. Furthermore, let $\psi$ be the
$\FO_{\{+,\times\}}[\{E\}]$-sentence
$\exists x (\theta(x)\land \exists y(x=y\times y))$.
By Theorem \ref{counting}, $(\mM,f)\models\psi$ if and only if
$|\dom(E^\mM)|\in\Sq$ and $|\dom(E^\mM)|\le \lb (|\Dom(\mM)|)$.
Note that the latter condition is automatically true if $(\mM,f)\models\chi$.
Thus, we conclude that the quantifier $Q$ is  defined
by the $\FO_{\{+,\times\}}[\{E\}]$-sentence $\chi \land \psi$.
Finally, observe that
the quantifier $Q$ is universe independent, whence it is contained
in $\RI(\FO_{\{+,\times\}})$.

Next we show that $Q$ is not definable in $\FO_{\le}$. Towards a
contradiction, assume that  $Q$ can be defined in  $\FO_{\le}$
by a sentence $\eta$. We will show that then the language
$L=\{ w\in\{a\}^*\mid |w|\in \Sq \}$ is definable in $\MSO_{\le}$
contradicting  the fact that all $\MSO$-definable  languages
are regular and $L$ is not.

For each $n>0$, let $A_n$ and $\mA_n$ be as above, and let
$(\mM_n,f_n)$ be the $\Str_\BR(\{E\})$-structure such that
$\Dom(\mM_n)=\Dom(\mA_n)$, $E^{\mM_n}=E^{\mA_n}$ and
the ordering $\le^{f_n}$ satisfies the condition:
\begin{itemize}
\item ${\le^{f_n}}\cap (\mathcal{P}(A_n)\setminus\{\emptyset\})^2$
is the lexicographic
ordering of subsets of $A_n$ induced by ${\le^{f_n}}\cap A_n^2$,
\item   $a\le^{f_n} b$ for all $a\in A_n$ and
$b\in(\mathcal{P}(A_n)\setminus\{\emptyset\})$.
\end{itemize}
Furthermore, for each $n>0$, we let $(\mN_n,g_n)\in\Str_\BR(\{P_a\})$ 
be the
word model of length $n$ with $\Dom(\mN_n)=P_a^{\mN_n}=A_n$ and
${\le^{g_n}}={\le^{f_n}}\cap A_n^2$.

We will show that any $\FO_{\le}[\{E\}]$-sentence
$\phi$ can be translated into a sentence $\phi^*\in
\MSO_{\le}[\{ P_a\}]$ such that for all $n$
\[
(\mM_n,f_n)\models \phi \iff (\mN_n,g_n)\models \phi^*.
\]
By the assumption, we have $(\mM_n,f_n)\models \eta$ if and only if
$|A_n|\in \Sq$. Therefore, $\eta^*$ will then define the
language $L=\{ w\in\{a\}^*\mid |w|\in \Sq \}$. This will be the desired
contradiction.

The idea of the translation $\phi\mapsto \phi^*$ is simple: each
element $a\in\Dom(\mM_n)\setminus\Dom(\mN_n)$ is also a
subset of $\Dom(\mN_n)$. Thus, first-order variable and quantification
over elements in $\Dom(\mM_n)\setminus\Dom(\mN_n)$ can be replaced
by monadic second-order variables and quantification.

We will now describe the technical details of the translation.
First, we assign to each first-order variable a corresponding capitalized
monadic second-order variable. 
For translating formulas
with free variables, we need to keep track of those
variables which should be translated into the corresponding
second-order variables. Thus, we define
a translation $T_S: \FO_{\le}[\{E\}]\to \MSO_{\le}[\{P_a\}]$
for each set $S$ of first-order variables by simultaneous induction:

\begin{align*}
T_S(x=y) &\sij
\begin{cases} x=y & \text{if $x,y\not\in S$}
\\
\forall z(X(z)\leftrightarrow Y(z)) & \text{if $x,y\in S$}
\\
x\not=x & \text{otherwise}
\end{cases}
\\
T_S(x\le y) &\sij
\begin{cases} x\le y & \text{if $x,y\not\in S$}
\\
x=x & \text{if $x\not\in S$, $y\in S$}
\\
x\not=x & \text{if $x\in S$, $y\not\in S$}
\\
X\le Y & \text{if $x,y\in S$}
\end{cases}
\\
T_S(E(x,y)) &\sij
\begin{cases} Y(x) & \text{if $x\not\in S$, $y\in S$}
\\
x\not=x & \text{otherwise}
\end{cases}
\\
T_S(\lnot \phi)  &\sij  \lnot T_S(\phi)
\\
T_S(\phi \land \psi)   &\sij   T_S(\phi) \land T_S(\psi)
\\
T_S(\exists x\,\phi) &\sij  \exists x\, T_S(\psi)\lor
\exists X\, T_{S\cup\{x\}}(\psi) \\
\end{align*}
Above, $ X\le Y$ denotes the formula which defines the lexicographic
ordering of subsets induced by $\le$.

By induction on the construction of $\phi\in \FO_{\le}[\{E\}]$ one can
now prove that, for all $n>0$, and $\va=(a_1,\ldots, a_{k})
\in A^k_n$ and  $\vb=(b_1,\ldots, b_{l})\in
(\mathcal{P}(A_n)\setminus\{\emptyset\})^l$ it holds that
\begin{equation*}\label{translation}
(\mM_n,f_n) \models \phi[\va/ \vx, \vb/\vy ] \iff (\mN_n,g_n)\models
T_S(\phi)[\va/ \vx, \vb/\vY ]
\end{equation*}
whenever $x_i\not\in S$ for each component $x_i$ of $\vx$ and
$y_j\in S$ for each component $y_j$ of $\vy$.
In particular, if $\phi$ is a sentence, we have
$(\mM_n,f_n) \models \phi\iff (\mN_n,g_n)\models T_{\emptyset}(\phi)$. Thus,
defining $\phi^*\sij T_\emptyset(\phi)$ we get the desired translation.

\subsection*{Regular closure of $\AC$}

The gap between $\RI(\FO_\B)$ and $\RC(\FO_\B)$ can be seen as
a measure for the irregularity of $\FO_\B$: the larger the gap is,
the more irregular $\FO_\B$ is. We will next show that in the case
of $\B$ consisting of a suitable unary relation and the order $\le$,
this gap is extremely large. We have already seen in Theorem \ref{RIcollapse}  (a) that for such a
$\B$, $\RI(\FO_\B)\equiv\FO_\le$. 
For the other direction, we have

\begin{theorem}\label{RCTC}
Let $\B$ be 
a set of built-in relations such that $\B$ contains the order
$\le$ and
a pseudoloose set~$S$. Then $\RC(\FO_\B)\equiv\FO_{\B}(\kI)$. Moreover,
$\TC\le\RC(\FO_{\B})$.
\end{theorem}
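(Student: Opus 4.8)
The plan is to pin $\RC(\FO_\B)$ between $\FO_\B(\kI)$ from below and above, obtaining the $\TC$ bound as a byproduct. For the upper inclusion $\RC(\FO_\B)\le\FO_\B(\kI)$ I would invoke Example~\ref{builder}(b): since ${\le}\in\B$, the logic $\FO_\B(\kI)$ is regular, and it obviously extends $\FO_\B$. As $\RC(\FO_\B)$ is by definition the least regular extension of $\FO_\B$, this inclusion is immediate.

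For the reverse inclusion and the $\TC$ bound, the core step is to show $\FO_\le(\kC_{S+1})\le\RC(\FO_\B)$. Since $\{{\le},S\}\subseteq\B$, we have $\FO_{\{\le,S\}}\le\FO_\B\le\RC(\FO_\B)$, and because $\RC(\FO_\B)$ is regular while $\RC(\FO_{\{\le,S\}})$ is the least regular extension of $\FO_{\{\le,S\}}$, it follows that $\RC(\FO_{\{\le,S\}})\le\RC(\FO_\B)$. Proposition~\ref{rc-unary} identifies $\RC(\FO_{\{\le,S\}})\equiv\FO_\le(\kC_{S+1})$, which gives the claim. Note that this produces the shifted quantifier $\kC_{S+1}$ rather than $\kC_S$, so I first need the auxiliary fact that $S+1$ is again pseudoloose. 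This holds because $\kh_{S+1}$ is merely $\kh_S$ shifted by one, so $T^{\kh_{S+1}}_w\supseteq T^{\kh_S}_w+1$ for every word $w$; from $\delta_{T^{\kh_{S+1}}_w}(m+1)=\delta_{T^{\kh_S}_w}(m)$ and the one-step narrowing of the counting window one gets $\g_{T^{\kh_{S+1}}_w}(n,t)\ge\g_{T^{\kh_S}_w}(n-1,t)\ge\g_{T^{\kh_S}_w}(n,t)-1$, whence $t\,\g_{T^{\kh_{S+1}}_w}(n,t)\ge\e n-n^{1-\e}\ge\tfrac{\e}{2}n$ for large $n$. Passing to a slightly smaller $\e'$ preserves the bounds $n^{\e'}\le t\le n^{1-\e'}$ and $s\le n^{1-\e'}$, so $S+1$ inherits pseudolooseness from $S$.

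With $S+1$ pseudoloose, both remaining goals fall out. Theorem~\ref{tcchar} applied to $S+1$ gives $\TC\le\FO_\le(\kC_{S+1})\le\RC(\FO_\B)$, which is the ``moreover'' clause. Furthermore, the proposition stating that pseudolooseness of a set $S'$ implies $\FO_\le(\kI)\le\FO_\le(\kC_{S'})$, applied to $S'=S+1$, shows that $\kI$ is definable in $\FO_\le(\kC_{S+1})$, and hence in $\RC(\FO_\B)$. Since $\RC(\FO_\B)$ is regular (in particular semiregular) and every relation of $\B$ is definable in $\FO_\B\le\RC(\FO_\B)$, Lemma~\ref{q-semireg} with $\Q=\{\kI\}$ yields $\FO_\B(\kI)\le\RC(\FO_\B)$. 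Combined with the first paragraph this gives $\RC(\FO_\B)\equiv\FO_\B(\kI)$.

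I expect the only genuinely new work to be the shift lemma for pseudolooseness together with the bookkeeping that combines Proposition~\ref{rc-unary} with monotonicity of the regular closure; everything else is assembling results already in hand. The main point to watch is that Proposition~\ref{rc-unary} returns $\kC_{S+1}$, so one must not silently conflate it with $\kC_S$: it is essential that pseudolooseness survives the shift $S\mapsto S+1$, and that is where the only real estimate of the argument lives.
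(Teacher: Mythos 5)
Your proof is correct and follows essentially the same route as the paper: upper bound via regularity of $\FO_\B(\kI)$ (Example~\ref{builder}), lower bound via $\RC(\FO_{\{\le,S\}})\equiv\FO_\le(\kC_{S+1})$ from Proposition~\ref{rc-unary}, pseudolooseness of $S+1$, and then Theorem~\ref{tcchar} and the proposition giving $\FO_\le(\kI)\le\FO_\le(\kC_{S+1})$. Your explicit verification that pseudolooseness survives the shift $S\mapsto S+1$ is in fact more detailed than the paper, which only gestures at it with a ``cf.~Proposition~\ref{loose-biLip}''.
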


\begin{proof}
By Proposition~\ref{rc-unary},
$\RC(\FO_{\{\le,S\}})\equiv\FO_{\le}(\kC_{S+1})$, where $S+1=\{k+1\in\NN\mid k\in S\}$. Since $\le$ and  $S$ are in $\B$, we have $\FO_{\le}(\kC_{S+1})\le\RC(\FO_\B)$. Furthermore, since $S$ is pseudoloose, $S+1$ is pseudoloose as well (cf.~Proposition~\ref{loose-biLip}), and hence by Proposition 3.6,
the H\"artig quantifier
$\kI$ is definable in $\FO_{\le}(\kC_{S+1})$. This shows that
$\FO_\B(\kI)\le\RC(\FO_\B)$.

On the other hand, as we showed in
Example~\ref{builder},
$\FO_\B(\kI)$ is always regular, whence it necessarily contains
the least regular extension $\RC(\FO_\B)$ of $\FO_{\B}(\kI)$. Thus, we see that $\RC(\FO_\B)\equiv\FO_{\B}(\kI)$.
Finally,
since $\kC_{S+1}$ is definable in $\RC(\FO_\B)$, the second claim
follows directly from Theorem~\ref{tcchar}.
%
%
\end{proof}

The proof above makes use of quantifiers with {\em empty vocabulary}. 
Syntactically, such a quantifier is considered as a sentence $Q$
which is in $\LL_\B(Q)[\tau]$ for every vocabulary~$\tau$.  The semantics
of $Q$ is always defined by some set $S\subseteq\NN$:  
$(\mM,f)\models Q$ if and only if $|\Dom(\mM)|\in S$. However,
the reference to quantifiers with empty vocabulary can be easily avoided:
instead of the quantifier defined in the proof, we could use the
quantifier $Q^*$ with defining class 
$$
  K_{Q^*}=\{(\mM,f)\in\Str_\BR(U)\mid U=\Dom(\mM)\hbox{ and } |U|-1\in S\}.
$$
This is because 
$Q^\reg$ 
and $Q^*$ are obviously definable from each
other.
As an immediate corollary for Theorem~\ref{RCTC}, we get that the 
regular closure of
$\FO_{\{+,\times\}}$ is equal to $\TC$. This is because for example
the pseudoloose set $\Sq=\{n^2\mid n\in\NN\}$ is definable in
$\FO_{\{+,\times\}}$. We formulate the result in terms of the circuit
complexity classes:

\begin{corollary}
$\RC(\AC)\equiv\TC$. \qed
\end{corollary}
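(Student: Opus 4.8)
The plan is to obtain the statement as a direct consequence of Theorem~\ref{RCTC}, the one delicate point being that that theorem asks for a pseudoloose set to sit \emph{literally} inside the built-in vocabulary, while for $\AC$ such a set is only \emph{definable}. First I would invoke the characterization $\AC\equiv\FO_{\{+,\times\}}$ from the Complexity classes subsection, so that it suffices to analyze $\RC(\FO_{\{+,\times\}})$. The key structural observation is that the regular closure of a logic is defined solely through the class of $\BR$-quantifiers it can define (equivalently, it is the least regular extension), and both notions depend only on expressive power; hence logically equivalent $\B$-logics have equivalent regular closures. In particular $\RC(\AC)\equiv\RC(\FO_{\{+,\times\}})$.

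Next I would enlarge the built-in vocabulary by definable relations without changing expressive power. The order $\le$ is definable in $\FO_+$ by Example~\ref{order-sum}, and $\Sq=\{n^2\mid n\in\NN\}$ is definable in $\FO_{\{+,\times\}}$ since $x\in\Sq$ iff $\exists y\,\btimes(y,y,x)$. Because $\FO$ with built-in relations admits substitution of formulas for relation symbols, adjoining definable relations leaves the logic equivalent, so $\FO_{\{+,\times\}}\equiv\FO_{\B'}$ with $\B'=\{+,\times,\le,\Sq\}$. Now $\B'$ contains both $\le$ and a pseudoloose set: by the results of the subsection \emph{Analyzing pseudolooseness}, the range of a polynomial over $\NN$ of degree at least two is loose, and every loose set is pseudoloose (as noted after Definition~\ref{loose}); thus $\Sq$, being the range of $x^2$, is pseudoloose.

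With the hypotheses of Theorem~\ref{RCTC} satisfied for $\B'$, I would conclude $\RC(\FO_{\B'})\equiv\FO_{\B'}(\kI)$. Since $\le$ and $\Sq$ are definable in $\FO_{\{+,\times\}}$, Lemma~\ref{q-semireg} yields $\FO_{\B'}(\kI)\equiv\FO_{\{+,\times\}}(\kI)$, and the latter equals $\TC$ by the characterization in the Complexity classes subsection. Assembling the chain
\[
\RC(\AC)\equiv\RC(\FO_{\{+,\times\}})\equiv\RC(\FO_{\B'})\equiv\FO_{\B'}(\kI)\equiv\FO_{\{+,\times\}}(\kI)\equiv\TC
\]
settles both $\RC(\AC)\le\TC$ and $\TC\le\RC(\AC)$ at once, so no separate bound is required. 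The main obstacle to phrase cleanly is the middle transition $\FO_{\{+,\times\}}\equiv\FO_{\B'}$, i.e.\ replacing ``$\Sq$ definable'' by ``$\Sq$ built in'', which is precisely where invariance of $\RC$ under logical equivalence carries the argument.
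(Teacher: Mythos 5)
Your proof is correct and takes essentially the same route as the paper, which obtains the corollary directly from Theorem~\ref{RCTC} by observing that the pseudoloose set $\Sq$ is definable in $\FO_{\{+,\times\}}$. You merely make explicit a step the paper leaves implicit, namely that $\RC$ depends only on expressive power, so replacing ``$\Sq$ definable'' by ``$\Sq$ built in'' changes nothing.
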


We will close this section by some examples illustrating the
gap between regular interior and regular closure.
All of our non-regular examples have the same regular interior $\FO_{\le}$,
but the regular closure varies (see also the related figure~\ref{rgap}).

\begin{figure}
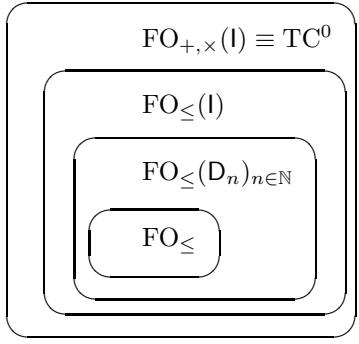
 
    $$\xy
0;<15mm,0mm>:
<-4mm,-2.5mm> !LU+<-2mm,4mm>="1",
0*+<2mm>!LU{\FO_{\le}} !RD+<2mm,-2mm>="1t",
"1"."1t" *\frm<3mm>{-},
!LD+<-2mm,-1mm>="1b" !LD+<-4mm,-2mm>="1c" !LD+<-5mm,-3mm>="1d";
(0,0.6)*+<2mm>!LU{\FO_{\le}(\kD_n)_{n\in\NN}},
!RU+<2mm,2mm>="2t", !RU+<4mm,4mm>="2u",
"1b"+<0mm,-2mm>."2t" *\frm<3mm>{-};
(0,1.2)*+<2mm>!LU{\FO_{\le}(\kI)}
!RU+<2mm,2mm>="3",
"1c"+<0mm,-2mm>."2u"."3" *\frm<3mm>{-};
(0,1.8)*+<2mm>!LU{\FO_{{+},{\times}}(\kI)\equiv\TC}
!RU+<2mm,2mm>="3",
"1d"+<0mm,-2mm>."2u"."3" *\frm<3mm>{-}
    \endxy$$
\caption{Some regular logics with built-in relations}
\label{rgap}
\end{figure}
\begin{example}\label{reggap}
\begin{enumerate}[(a)]
\item
Let $S=\rg(P)$ for some polynomial with integer coefficients
and degree at least~$2$. Then Theorem~\ref{RIcollapse}(a) implies that
$\RI(\FO_{\{\le,S\}})\equiv\FO_\le$, and Theorem~\ref{RCTC} implies that
$\RC(\FO_{\{\le,S\}})\equiv\FO_{\{+,\times\}}(\kI)\equiv\TC$.

\item 
By Theorem~\ref{RIcollapse}(b), $\RI(\FO_+)\equiv\FO_\le$.
On the other hand, 
we will show that
$\RC(\FO_+)\equiv\FO_\le(\kI)$.  
Let us consider the language $L_0=\joukko{a^nb^n \mid n\in\NN}$.
Then the corresponding language quantifier $Q_{L_0}$ is definable in~$\FO_+$ and $Q^{\reg}_{L_0}$ in $\RC(\FO_+)$.
Observing that the extra predicate $P$ in the vocabulary of $Q^{\reg}_{L_0}$
is futile, we get that $\FO_\le(Q^{\reg}_{L_0})\equiv\FO_\le(Q_{L_1})$ where
$L_1=N(L_0)\subseteq\joukko{a,b,e}^*$ is the neutral letter extension of $L_0$.

Let us show that $\kI$ is definable in $\FO_\le(Q_{L_1})$.
Let $(\mM,f)\in\Str_\BR(\joukko{U,V})$. 
Note first that the
parity of $U^\mM$ is expressible in $\FO_\le(Q_{L_1})$: $|U^\mM|$ is even if and only if $(\mM,f)\models\exists z\,\phi(z)$, where $\phi(z):=U(z)\land Q_{L_1}\,x,y\,(U(x)\land x\le z,U(y)\land z<y)$.  
Furthermore, the median $u$ of the set $U^\mM$ (i.e., the unique element $u$ of $U^\mM$ such that $|\{a\in U^\mM\mid a\le^f u\}|=\lfloor\frac{1}{2}|U^\mM|\rfloor$) is definable in $\FO_\le(Q_{L_1})$: $u$ is the unique element of $\Dom(\mM)$ such that $(\mM,f)\models \phi'[u/z])$, where $\phi'=\phi$ if $|U^\mM|$ is even and $\phi'(z):=U(z)\land Q_{L_1}\,x,y\,(U(x)\land x< z,U(y)\land z<y)$ if $|U^\mM|$ is odd. In the same way we can define the parity of $|V^\mM|$ by a $\FO_\le(Q_{L_1})$-sentence $\exists z\,\psi(z)$ and the median $v$ of $V^\mM$ by a $\FO_\le(Q_{L_1})$-formula $\psi'(z)$.

We can now express
$|U^\mM|=|V^\mM|$ as a disjunction of cases according to the
parities of $U^\mM$ and $V^\mM$ and the relative locations of
the medians $u$ and $v$. Note first that if $|U^\mM|$ is even and $|V^\mM|$ is odd, or vice versa, then $|U^\mM|=|V^\mM|$ is trivially false, whence we only need to consider four cases based on the shared parity of $|U^\mM|$ and $|V^\mM|$ and the truth value of $u\le^f v$.

Suppose that $|U^\mM|$ and $|V^\mM|$
are both odd and $u\le^f v$.  The point $u$ splits $U^\mM$ in
intervals in the following way: $U^\mM=U_-\cup\joukko{u}\cup U_+$
where $a<^f u <^f b$, for each $a\in U_-$ and $b\in U_+$,
and $|U_-|=|U_+|$.
Similarly, $v$ splits $V^\mM$ as $V^\mM=V_-\cup\joukko{v}\cup V_+$.
Then $|U^\mM|=|V^\mM|$ is equivalent to $|U_-|=|V_+|$ and the latter
is directly expressible in $\FO_\le(Q_{L_1})$, since
$a <^f  u \le^f v <^f b$, for all $a\in U_-$ and $b\in V_+$:
$$
    |U_-|=|V_+|\iff (\mM,f)\models Q_{L_1}\,x,y\,(U(x)\land x< z,V(y)\land z<y)[u/z,v/z'].
$$
Thus, we see that the disjunct 
$$
    \exists z\exists z'\,(\phi'(z)\land\psi'(z')\land Q_{L_1}\,x,y\,(U(x)\land x< z,V(y)\land z<y))
$$
covers the case where $|U^\mM|$ and $|V^\mM|$ are odd and $u\le^f v$. The other cases are similar.
Hence, we get $\FO_+\le\FO_\le(\kI)\le\FO_\le(Q_{L_1})\le\RC(\FO_+)$, and we conclude that $\RC(\FO_+)\equiv\FO_\le(\kI)$, because $\FO_\le(\kI)$
is regular.

\item
Let $S_n=n\NN=\{nk\mid k\in\NN\}$ for each $n\in\NN$.
Again, by Theorem~\ref{RIcollapse}(a), $\RI(\FO_{\{\le,S_n\}})
\equiv\FO_\le$. For the other direction we have
$\RC(\FO_{\{\le,S_n\}})\equiv\FO_\le(\kC_{S_n+1})$ by Proposition~\ref{rc-unary}. Note that the formula $\kC_{S_n+1}\,x\, U(x)$ just says that $|U^\mM|\equiv 1 \pmod{n}$. Using this it is easy to verify that $\kC_{S_n+1}$ and the divisibility quantifier $\kD_n$ are definable from each other. Thus we see that actually $\RC(\FO_{\{\le,S_n\}})\equiv\FO_\le(\kD_n)$.

\item
Let $\B=\{\le\}\cup\{S_n\mid n\in\NN\}$. Applying Theorem~\ref{RIcollapse}(a) again we see that $\RI(\FO_\B)\equiv\FO_\le$. By generalizing Proposition~\ref{rc-unary} to a set of unary built-in relations and using the observation in the previous example, we see that $\RC(\FO_\B)\equiv\FO_\le(\kD_n)_{n\in\NN}$.
\end{enumerate}
\end{example}

Note that while the regular interior of each logic considered in the example above is $\FO_\le$, their regular closures form a strictly increasing sequence: 
$$
    \FO_\le(\kD_n) < \FO_\le(\kD_n)_{n\in\NN} < \FO_\le(\kI) < \FO_{\{+,\times\}}(\kI).
$$
Here the strictness of the first containment $\FO_\le(\kD_n) \le\FO_\le(\kD_n)_{n\in\NN}$ follows from a result in \cite{N} stating that $\FO_\le(\kD_n)\le\FO_\le(\kD_m)$ if and only if all prime factors of $n$ are also prime factors of $m$.
The strictness of the second containment $\FO_\le(\kD_n)_{n\in\NN} \le \FO_\le(\kI)$ also follows from this, since otherwise $\kI$ would be definable in terms of a finite set $\mathcal{D}=\{\kD_{n_1},\ldots,\kD_{n_k}\}$ of divisibility quantifiers, contradicting the fact that $\kD_p$ is not definable in $\FO_\le(\mathcal{D})$ for a prime $p\ge\max\{n_1,\ldots,n_k\}$. 
Finally, it was observed in \cite{Lu1} that $\FO_\le(\kI)\equiv\FO_\le(\Maj)$, where $\Maj$ is the unary majority quantifier (see Section~\ref{Background}), and $\FO_\le(\Maj)$ was proved in \cite{Li} to be strictly weaker than $\FO_\le(\Maj^2)$ that captures $\TC\equiv\FO_{\{+,\times\}}(\kI)$.

\section*{Conclusion}

In this paper, the circuit complexity class $\TC$ has been
studied from the perspective of descriptive complexity theory.
It is natural to ask what kind of directions and
open questions our approach offers for future research. 
Finite model theory is concerned about a multitude of different logics.
In this context, the complexity classes serve as important milestones,
as all of the most important complexity classes have been characterized
by various logics.  However, it is notoriously well known that
we usually do not even know if the milestones are different, even if
we believe that they are far apart, and descriptive complexity
has not (yet) helped much here, except for the (non-separation)
result of Immerman and Szelepcs{\'e}nyi \cite{I2,Sz}.

As it stands, a more granulated picture could help:  instead of just
studying complexity classes one can study other logics which bear
some similarities with complexity classes. Since any semiregular
logic can be represented in terms of quantifiers, it is quite natural
to search for characterizations of complexity classes by
quantifier logics, and compare them with such logics.  To take
an example, consider the separation $\TC<\PSPACE$.  We know that
$\TC\equiv\FO_\le(\kC_\Sq)$, but by the results of Dawar and Hella
\cite{DH}, $\PSPACE\not\equiv\FO_\le(Q)$ for any single
quantifier~$Q$.  Hence, the separation is manifested also in distinctive
properties of these complexity classes, one is generated by a single
quantifier on ordered structures, the other is not.

This raises questions for complexity classes in between:
Does $\PTIME$ have a representation of the form $\PTIME\equiv\FO_\le(Q)$?
If it does, $\PTIME<\PSPACE$, if it does not, then $\TC<\PTIME$.
It is known that 
$\PTIME\equiv\LFP_\le\equiv\FO_\le(\kQ^{<\omega}_{\mathrm{ATC}})$,
but this is a representation in terms of a sequence of quantifiers
(vectorizations of $\kQ_{\mathrm{ATC}}$), rather than
in terms of a single quantifier.  Obviously, it may also be asked
if any of $\NP$, $\LOGSPACE$ or $\NLOGSPACE$ is generated by
a single quantifier.  For $\NP$, the solution of this problem
would also solve open problems in complexity theory.

Quantifier representations enable a finer analysis of logics, in our case
the circuit complexity class~$\TC$.  In Section~\ref{card} we arrived
at the notion of pseudolooseness via natural combinatorial considerations
when we tried to find sufficient conditions for $S\osaj\NN$ such
that $\FO_\le(\kC_S)\ge\TC$.  We dare not anticipate that this condition
would also be necessary, since it is likely that there is some
technical finesse that has been overlooked.  The real question
is rather if there exist sets~$S$ that are very far from being
pseudoloose, but satisfy $\FO_\le(\kC_S)\ge\TC$.  To be concrete,
we ask if $E=\{2^n\ |\ n\in \NN \}$ is such a set. 
Note that $E$ is not pseudoloose by Example~\ref{nonpseudoloose}
 but by~(\ref{Idef-E})
of Lemma~\ref{Idefex} it holds that
   \[
\FO_{\le}(\kC_{E})\equiv \FO_+(\kI,\kC_{E}) 
\le \FO_{\{+,\times\}}(\Maj).
   \] 
It is an open question whether the latter two logics are equivalent. 
Another interesting concrete case is the same question for the
set of primes $\PP$.
More generally, we ask if there is $S\subseteq \NN$ such
that
   \[ 
\FO_{\le}(\kI)<  \FO_{\le}(\kC_S)<\FO_{\le}(\kD)
\equiv \FO_{\{+,\times\}}(\Maj).  
   \]

Section~\ref{card} reveals an interesting methodological point:
When concrete examples of pseudoloose sets were sought,
some elementary analytic methods were employed.  One should
compare this with number theory, where analytic methods are
well established, especially in the context of studying the set~$\PP$.
It is conceivable that the problems in complexity theory are comparable
to those in number theory. Could analytic methods be helpful in
broader generality in finite model theory?  In other words,
could we see the rise of \emph{analytic finite model theory?}

In Sections~\ref{Builtin} and~\ref{CraneBeach}  the point of view is slightly changed.  Speaking in general terms, we are concerned
about what closure properties logics characterizing complexity classes
have.  For the nondeterministic case, the major question is if
they are closed under negation, whereas for deterministic complexity
classes, the corresponding logics have usually quite good
closure properties.  Our remark is that $\TC$ is regular, whereas
$\AC$ is not.  Analyzing this a bit further, we studied the notions
of  regular closure and interior.  We summarize some of the results
in the table below.

\begin{center}
\begin{tabular*}{8cm}{lll}
Logic & $\RI$  & $\RC$ \\
\hline
$\FO_\le$                  & $\FO_\le$  & $\FO_\le$  \\
$\FO_{\joukko{\le,n\NN}}$  & $\FO_\le$  & $\FO_\le(\kD_n)$  \\
$\FO_+$                    & $\FO_\le$  & $\FO_\le(\kI)$  \\
$\FO_{\joukko{\le,\Sq}}$   & $\FO_\le$  & $\FO_\le(\kC_\Sq)\equiv\TC$  \\
$\FO_\bit\equiv\AC$        & $>\FO_\le$ & $\TC$ \\
\end{tabular*}
\end{center}

There is one incomplete entry in this table: we do not know
any nice explicit description for the regular interior of $\AC$. 
In particular,
it is an open problem
whether there exists a single quantifier $Q$
and a set $\B$ of numerical relations such that 
$\RI(\AC)\equiv\FO_\B(Q)$.

The last question concerns the counterexample  
\cite{BILST} for the Crane Beach Conjecture showing
that the regular interior of $\AC$ does not entirely collapse to $\FO_\le$. This
counterexample is a language which is not order-invariant. In fact, as
far as we know, it is an open question whether the order-invariant
version of NLCP holds for $\AC$ with respect to the class of all languages, i.e., is every order-invariant
$\FO_{\{+,\times\}}$-definable language with a neutral letter already
definable in $\FO_{\le}$. This question can be equivalently formulated
as follows: is every $\FO_{\{+,\times\}}$-definable unary quantifier
$Q$ already definable in $\FO_\le$.

\bibliographystyle{alpha}
\bibliography{rrutc}

\end{document}